\documentclass[11pt]{article}
\usepackage{amssymb,amsmath,latexsym,a4wide}

%%%%%%%%%%%%%%%%%%%%%%%%%%%%%%%%%%%%%%%%%%%%%%%%%%%%%%%%%%%%%%%%%%%%%%%%%%%%%%%
%%% PRIVATE MACROS
%%%%%%%%%%%%%%%%%%%%%%%%%%%%%%%%%%%%%%%%%%%%%%%%%%%%%%%%%%%%%%%%%%%%%%%%%%%%%%%

% trees go in calligraphic
\newcommand{\tree}[1]{\ensuremath{\mathcal{#1}}}
\newcommand{\arc}{\ensuremath{\leadsto}}
\newcommand{\movearrow}{\ensuremath{\stackrel{\move{}}{\longrightarrow}}}

\newcommand{\cclass}[1]{\ensuremath{\mathbf{#1}}}
% big O
\newcommand{\bigo}[1]{\ensuremath{\mathcal{O}(#1)}}
% modulo 0 subtraction

%modal formulas

% transitive filtration

% filtration of a relation

% equivalence classes

% names of modal formulas go in italic

%extended closure
\newcommand{\ecl}[1]{\ensuremath{\mathsf{ecl}(#1)}}

%*********************************************************************
%SHORTHANDS FOR GREEK LETTERS
%*********************************************************************

\newcommand{\D}{\ensuremath{\Delta}}
\newcommand{\G}{\ensuremath{\Gamma}}

\newcommand{\vp}{\ensuremath{\varphi}}

%**********************************************************************
%TEXT FORMATTING
%**********************************************************************

%emphasize the first mention of the term
\newcommand{\fm}[1]{\emph{#1}}
%emphasize the term that is being defined
\newcommand{\de}[1]{\emph{#1}}

%***********************************************************************
% SET-THEORETIC CONCEPTS
%***********************************************************************
%metaimplication

%relations

%domain of a function

%range of a function

%power set
\newcommand{\power}[1]{\ensuremath{\mathcal{P}(#1)}}
%union
\newcommand{\union}{\, \cup \,}
% set difference

%big union
\newcommand{\bigunion}{\bigcup \,}

% set of all finite sequences on a set

% finite sequence

% infinite sequence

%intersection
\newcommand{\inter}{\, \cap \,}
%complement of a set

% restriction

% reflexive-transitive closue

% converse relation

% powerset
%\newcommand{\powerset}[1]{\ensuremath{2^{#1}}}
%comprehension scheme
\newcommand{\crh}[2]{\ensuremath{\{\, #1 \mid \, #2\, \}}}
%comprehension scheme for sequences

% sets defined through enumeration of the members
\newcommand{\set}[1]{\ensuremath{ \{ #1 \} }}
%cardinality
\newcommand{\card}[1]{\ensuremath{|#1|}}
%length of sequence

%restriction

%range of functions

% set of natural numbers
\newcommand{\nat}{\ensuremath{\mathbb{N}}}

% name of ZF

%**********************************************************************
%NAMES OF LOGICS AND LANGUAGES
%**********************************************************************

%generic language (with a parameter)

%generic language (without a parameter)

% generic logic

% axiomatic systems

% language of propositional classical logic

%Propositional logic

%Propositional logic

%CTL

%CTL
\newcommand{\CTL}{\textbf{CTL}}
%Basic modal logic

%Language of basic modal logic

%Multimodal logic

%Language of multimodal logic

%Language with star-modalities

% Logic Seg

%First-order logic

%First-order logic with equality

% Language of first-order logic

%Least fixed point logic

%basic hash language

% K-hash

% DK-hash

% hash

% Segerberg formula

% PLTL

\newcommand{\LTL}{\ensuremath{\textbf{LTL}}}

% extensions of formulas
\newcommand{\ext}[2]{\ensuremath{\|#1\|_{#2}}}
% extension function

%***********************************************************************
%GENARAL LOGIC NOTIONS
%***********************************************************************
% conjunction
\newcommand{\con}{\wedge}

% disjunction
\newcommand{\dis}{\vee}

% implication
\newcommand{\imp}{\rightarrow}

%metaimplication
%\newcommand{\metaimp}{\Rightarrow}
% equivalence
\newcommand{\equivalence}{\leftrightarrow}
% bottom

% intuitionistic negation

% intuitionistic implication

%universal quantifier

%existential quantifier

% generic modality

% generic dual modality

%set of propositional variables

%set of atomic propositions
\newcommand{\ap}{\textbf{\texttt{AP}}}
%consequence

%derivability

%not-derivability

%provability

% not-provability

%consistency

%consistency in a certain logic

%inconsistency

%maximally consistent sets

%isomorphism

%the set of propositional symbols

%truth
\newcommand{\truth}{\ensuremath{\top}}
%falsehood

%translation function

% uniform substitution

% fusion of logics

% logical closure of a set of formulas

%truth set of a formula in a model
\newcommand{\truthset}[2]{\ensuremath{\| #2 \|_{\mmodel{#1}}}}
% canonical model stuff

%*******************************************
%First-order logic
%*******************************************

%free variables of a formula

%tuple of variables

%individual variables

%the set of individual variables

%models

%interpretation function

%interpretation of symbols

%interpretation of terms

%satisfiability in models

%non-satisfiability in models

%class of models

%Satisfiability in mosaics

%Not-satisfiability in mosaics

%FO satisfiablity

%sequence of variables

%************************************************************************
%MODAL LOGICS
%************************************************************************
%next modality
%\newcommand{\next}{\ensuremath{[X]}}
\newcommand{\next}{\!\raisebox{-.2ex}{ %possibly add a little space before
            \mbox{\unitlength=0.9ex
            \begin{picture}(2,2)
            \linethickness{0.06ex}
            \put(1,1){\circle{2}} % Draws circle with
            \end{picture}}}       % diameter 2 at centre 1,1
            \,}
          \newcommand{\until}{\ensuremath{\hspace{2pt}\mathcal{U}}}

%Kripke frames
\newcommand{\kframe}[1]{\ensuremath{\mathfrak{#1}}}
%truth in a frame

% truth in a class of frames

%non-satisfaction in a frame with a valuation

%modal models
\newcommand{\mmodel}[1]{\ensuremath{\mathcal{#1}}}
% hintikka structures
\newcommand{\hintikka}[1]{\ensuremath{\mathcal{#1}}}
% \frames

%labeling for models

%indexed set of models

%generic model

%generic primed model

%satisfiability at a point
\newcommand{\sat}[3]{\ensuremath{\mmodel{#1}, #2 \Vdash #3}}
%satisfiability at a run

%non-satisfiability at a point
\newcommand{\notsat}[3]{\ensuremath{\mmodel{#1}, #2 \nVdash #3}}
%non-satisfiability at a point

%truth in a model

%satisfiability at a point in a frame with a valuation

%validity

%non-validity

%frames go in calligraphic

%classes of frames go in sans serif

%the theory-equivalence between points

%truth in a class of models

%disjoint union

%bisimulation between models

%bisimulation between states

%non-standard accessibility relation

% general frames

% converse modality

%**************************************************************************
% Temporal logics
%**************************************************************************

%\newcommand{\f}{\ensuremath{\Diamond}}

%logic K_t N

%**************************************************************************
% PDL
%**************************************************************************
%name of mu-calculus

% name of pdl

% set of labels

%set of basic labels

% first-order terms

%set of formulas

% set atoms over...

%closure
\newcommand{\cl}[1]{\ensuremath{\mathsf{cl}(#1)}}
% deterministic closure

%name of pdl_path

% diamond modality

% box modality

% diamond choice modality

% box  choice modality

% diamond composition  modality

% box composition  modality

% diamond transitive closure modality

% box transitive closure modality

% diamond harsh modality

% box harsh modality

% PROGRAM CONSTRUCTIONS

%************************************************************************
%GUARDED LOGICS
%************************************************************************
%modal fragment

%garded fragment

%guarde models

%satisfiability in guarded models

\newcommand{\Rule}[1]{\textbf{(#1)}}

%modal rank of formulas

%modal depth of formulas

% set of subformulas of a formula

%% ATL

%the name of atl
\newcommand{\ATL}{\textbf{ATL}}

%satisfiability at a point using perfect-recall strategies

%non-satisfiability at a point using perfect-recall strategies

\newcommand{\coal}[1]{\ensuremath{\langle \hspace{-2.2pt} \langle #1
    \rangle \hspace{-2.5pt} \rangle}}
\newcommand{\coalnext}[1]{\ensuremath{\langle \hspace{-2.5pt} \langle
    #1 \rangle \hspace{-2.5pt} \rangle \next}}
\newcommand{\coalbox}[1]{\ensuremath{\langle \hspace{-2.5pt} \langle
    #1 \rangle \hspace{-2.5pt} \rangle \Box}}
\newcommand{\coaldiam}[1]{\ensuremath{\langle \hspace{-2.5pt} \langle
    #1 \rangle \hspace{-2.5pt} \rangle \Diamond}}
%an A-move; A is an argument
\newcommand{\move}[1]{\ensuremath{\sigma_{\hspace{-2.0pt}#1}}}
%an illegal move; used in the concept of an A-move
%\newcommand{\illegalmove}{\ensuremath{\#}}
% the set of all A-moves at state s;
% the number of moves available to agent a at state s;
% a is the first argument, s is the second argument
\newcommand{\nmoves}[2]{\ensuremath{d_{\hspace{-0.5pt}#1}(#2)}}
%A is the first argument, s is the second argument
\newcommand{\moves}[2]{\ensuremath{D_{\hspace{-1.0pt}#1}(#2)}}
% co-moves
\newcommand{\comoves}[2]{\ensuremath{D^c_{\hspace{-2.5pt}#1}(#2)}}
%merging of two coalition moves
\newcommand{\merge}[2]{\ensuremath{#1 \sqcup #2}}
% the first move extends the second
\newcommand{\supmove}{\ensuremath{\sqsupseteq}}
% agent a occurs in a set of formulas \Delta

\newcommand{\comove}[1]{\ensuremath{\sigma^c_{\hspace{-2.0pt}#1}}}
\newcommand{\agents}{\ensuremath{\Sigma}}

% tableau for ATL
\newcommand{\sucpr}[1]{\ensuremath{\mathbf{prestates}(#1)}}

\newcommand{\suctab}[2]{\ensuremath{\mathbf{succ}_{#2}(#1)}}

% memory function

% strategies
\newcommand{\str}[1]{\ensuremath{F_{\hspace{-2.5pt}#1}}}
% co-strategies
\newcommand{\costr}[1]{\ensuremath{{F^{c}}_{\hspace{-3.2pt}#1}}}
% concatenatio

%closure-plus

% normal form negation

% atl-degree

% satisfiablity in the normal form semantics for ATL

% tableaux
\newcommand{\tableau}[1]{\ensuremath{\mathcal{#1}}}
% tableaux relations

% effectivity function

% filtered effectivity functions

%inductive rule

% rank

% length of a path
%\newcommand{\length}[1]{\ensuremath{|#1|}}

% maximum

%%%%%%%%%%%%%%%%%%%%%%%%%%%%%%%%%%%%%%%%%%%%%%%%%%%%%%%%%%%%%%%%%%%%%%%%%%%%%%%%%%

\newtheorem{theorem}{Theorem}[section]
\newtheorem{lemma}[theorem]{Lemma}

\newtheorem{corollary}[theorem]{Corollary}
\newenvironment{proof}{\noindent
  \textbf{Proof.}}{\hfill$\Box$\\}
\newtheorem{notation}[theorem]{Notational convention}
\newtheorem{definition}[theorem]{Definition}
\newtheorem{example}{Example}

\newtheorem{remark}[theorem]{Remark}

\newcommand{\cut}[1]{}

\newcommand{\illegalmove}{\sharp}
\newcommand{\st}[1]{\textbf{states}(#1)}
\newcommand{\brancharrow}{\ensuremath{\Longrightarrow}}
\renewcommand{\movearrow}{\ensuremath{\stackrel{\move{}}{\longrightarrow}}}

%%%%%%%%%%%%%%%%%%%%%%%%%%%%%%%%%%%%%%%%%%%%%%%%%%%%%%%%%%%%%%%%%%%%%%%%%%

\begin{document}

\title{Tableau-based decision procedures \\ for
  logics of strategic ability in multiagent systems}

\author{Valentin Goranko\footnote{School of Mathematics, University of
    the Witwatersrand, South Africa,
    \texttt{goranko@maths.wits.ac.za}} and Dmitry
  Shkatov\footnote{School of Computer Science, University of the
    Witwatersrand, South Africa, \texttt{dmitry@cs.wits.ac.za}}}
\maketitle

\begin{abstract}
  We develop an incremental tableau-based decision procedures for the
  Alternating-time temporal logic \ATL\ and some of its variants.
  While running within the theoretically established complexity upper
  bound, we claim that our tableau is practically more efficient in
  the average case than other decision procedures for \ATL\ known so
  far.  Besides, the ease of its adaptation to variants of \ATL\
  demonstrates the flexibility of the proposed procedure.
\end{abstract}

Keywords: logics for multiagent systems, alternating-time temporal
logic, decision procedure, tableaux.

%%%%%%%%%%%%%%%%%%%%%%%%%%%%%%%%%%
%%%%%%%%%%%%%%%%%%%%%%%%%%%%%%%%%%
\section{Introduction}
\label{sec:intro}
%%%%%%%%%%%%%%%%%%%%%%%%%%%%%%%%%%
%%%%%%%%%%%%%%%%%%%%%%%%%%%%%%%%%%

Multiagent systems (\cite{Fagin95knowledge}, \cite{Weiss99},
\cite{Wooldridge02}, \cite{Shoham08}) are an increasingly important
and active area of interdisciplinary research on the border of
computer science, artificial intelligence, and game theory, as they
model a wide variety of phenomena in these fields, including open and
interactive systems, distributed computations, security protocols,
knowledge and information exchange, coalitional abilities in games,
etc. Not surprisingly, a number of logical formalisms have been
proposed for specification, verification, and reasoning about
multiagent systems. These formalisms, broadly speaking, fall into two
categories: those for reasoning about \emph{knowledge of agents} and
those for reasoning about \emph{abilities of agents}. In the present
paper, we deal with the latter variety of logics, the most influential
among them being the so-called Alternating-time temporal logic (\ATL),
introduced in \cite{AHK97} and further developed in \cite{AHK98} and
\cite{AHK02}.

\ATL\ and its modifications can be applied to multiagent systems in a
similar way as temporal logics, such as \LTL\ and \CTL, are applied to
reactive systems.  First, since \ATL-models can be viewed as
abstractions of multiagent systems, \ATL\ can be used to verify and
specify properties of such systems. Given a model \mmodel{M} and an
\ATL-formula \vp, the task of verifying \mmodel{M} with respect to the
property expressed by \vp\ is, in logical terms, the model checking
problem for \ATL, extensively discussed in \cite{AHK02}; a
model-checker for \ATL\ has also been developed, see \cite{AHMQR98}.
Second, \ATL\ can be used to design multiagent systems conforming to a
given specification; then, \ATL-formulae are viewed as specifications
to be realized rather than verified.  In logical terms, this is the
\emph{constructive satisfiability problem} for \ATL: given a formula
\vp, check if it is satisfiable and, if so, construct a model of
\vp.

In the temporal logic tradition, in which \ATL\ is rooted, two
approaches to constructive satisfiability are predominant:
\emph{tableau-based} and \emph{automata-based}.  The relationship
between the two is not, in our view, sufficiently well understood
despite being widely acknowledged.  The automata-based approach to
\ATL-satisfiability was developed in \cite{vanDrimmelen03} and
\cite{GorDrim06}.

The aim of the present paper is to develop practically useful
``incremental'' (also called ``goal-driven'') tableau-based
decision procedures (in the style of \cite{Wolper85}) for the
constructive satisfiability problem for the ``standard'' \ATL\ and
some of its modifications. Incremental tableaux form one of the two
most popular types of tableau-based decision procedures for modal and
temporal logics with fixpoint-defined operators (the most widely known
examples being \LTL\ and \CTL). It should be noted that, while
tableaux for logics with such operators employ all common features of the
``traditional'' tableaux for modal logics, comprehensively covered
in~\cite{Fitting83}, \cite{Gore98}, and \cite{Fitting07}, they differ
substantially from the latter, because they involve a loop-detecting
(or equivalent) procedure that checks for the satisfaction of
formulas containing fixpoint operators.

As already mentioned, the alternative to the incremental tableaux for
logics with fixpoint-definable operators are the ``top-down''
tableaux, developed, for the case of \CTL\ and some closely related
logics, in \cite{EmHal85} (see also \cite{Emerson90}) and essentially
applied to \ATL\ in \cite{WLWW06}. A major practical drawback of the top-down
tableaux is that, while they run within the same worst-case
complexity bound as the corresponding incremental tableaux, their
performance matches the worst-case upper bound for \emph{every}
formula to be tested for satisfiability.  The reason for this
``practical inefficiency'' of the top-down tableaux is that they
invariably involve the construction of all maximally consistent
subsets of the so-called ``extended closure'' of the formula to be
tested, which in itself requires the number of steps of the order of
the theoretical upper bound\footnote{It should be stressed that the
  top-down tableaux for \ATL\ presented in \cite{WLWW06} were not
  meant to serve as a practically efficient method of checking
  \ATL-satisfiability, but rather were used as a tool for establishing
  the \cclass{ExpTime} upper bound for \ATL, in particular, for the
  case when the number of agents is not fixed, as assumed
  in~\cite{vanDrimmelen03} and \cite{GorDrim06}, but taken as a
  parameter.}. Some authors consider it to be so great a disadvantage
of the top-down tableaux that they propose non-optimal complexity
tableaux for such logics, which they claim to perform better in
practice (see \cite{AGF07}).

We believe that the incremental tableaux developed in the present
paper are intuitively more appealing, practically more efficient, and
therefore more suitable both for manual and for computerized execution
than the top-down tableaux, not least because checking satisfiability
of a formula using incremental tableaux takes, on average, much less
time than predicted by the worst-case complexity
upper-bound. Furthermore, incremental tableaux are quite flexible
and amenable to modifications and extensions covering not only
variants of \ATL\ considered in this paper, but also a number of other
logics for multiagent systems, such as multiagent epistemic logics
(see~\cite{Fagin95knowledge}), for which analogous tableau-based
decision procedures have recently been developed in \cite{GorShkat2}
and \cite{GorShkat3}. Lastly, it should be noted, that our tableau
method naturally reduces (in the one-agent case) to incremental
tableaux for \CTL, which is practically more efficient (again, on
average) than Emerson and Halpern's top-down tableaux
from~\cite{EmHal85}.

We should also mention that yet another type of tableau-based decision
procedure for \ATL, the so-called ``tableau games'', has been
considered in \cite{Hansen04}. Even though neither soundness nor
completeness of the tableau games for the full \ATL\ has been
established in~\cite{Hansen04}, sound and complete tableau games for
the ``Next-time fragment of \ATL'', namely, the Coalition Logic
\textbf{CL}, introduced in \cite{Pauly01} (see also \cite{Pauly01a}
and \cite{Pauly02}), have been presented in~\cite{Hansen04}.

The structure of the present paper is as follows: after introducing
the syntactic and semantic basics of \ATL\ in section \ref{sec:atl},
we introduce, in section \ref{sec:atl-hintikka}, concurrent game
Hintikka structures and show that they provide semantics for \ATL\
that is, satisfiability-wise, equivalent to the one based on
concurrent game models described in section~\ref{sec:atl}. In
section~\ref{sec:atl-tableau-procedure}, we develop the tableau
procedure for \ATL\ and analyze its complexity, while in section
\ref{sec:atl-tableaux-sound-and-complete} we prove its soundness and
completeness using concurrent game Hintikka structures introduced in
section~\ref{sec:atl-hintikka}. In section~\ref{sec:flexibility}, we
briefly discuss adaptations of our tableau method for some
modifications of \ATL.

%%%%%%%%%%%%%%%%%%%%%%%%%%%%%%%%%%
%%%%%%%%%%%%%%%%%%%%%%%%%%%%%%%%%%
\section{Preliminaries: the multiagent logic \ATL}
\label{sec:atl}
%%%%%%%%%%%%%%%%%%%%%%%%%%%%%%%%%%
%%%%%%%%%%%%%%%%%%%%%%%%%%%%%%%%%%

\ATL\ was introduced in \cite{AHK97}, and further developed in
\cite{AHK98} and \cite{AHK02}, as a logical formalism to reason about
open systems (\cite{Hewitt90}), but it naturally applies to the more
general case of multiagent systems. Technically, \ATL\ is an
extension of the multiagent coalition logics \textbf{CL} and
\textbf{ECL} studied in \cite{Pauly01}, \cite{Pauly01a}, and
\cite{Pauly02} (for a comparison of the logics, see \cite{Goranko01}
and \cite{GorJam04}).

%%%%%%%%%%%%%%%%%%%%%%%%%%%%%%%%%%
\subsection{ATL syntax}
\label{sec:atl-syntx}
%%%%%%%%%%%%%%%%%%%%%%%%%%%%%%%%%%

\ATL\ is a multimodal logic with \CTL-style modalities indexed by
subsets, commonly called \fm{coalitions}, of the finite, non-empty set
of (names of) \fm{agents}, or players, that can be referred to in the
language.  Thus, formulae of \ATL\ are defined with respect to a
finite, non-empty set $\agents$ of agents, usually denoted by the
natural numbers $1$ through $\card{\agents}$ (the cardinality of
$\agents$), and a finite or countably infinite set \ap\ of atomic
propositions.

\begin{definition}
  \label{def:atl-syntax}
  \ATL-formulae are defined by the following grammar: $$\vp :=
  p \mid \neg \vp \mid (\vp_1 \imp \vp_2) \mid \coalnext{A} \vp \mid
  \coalbox{A} \vp \mid \coal{A} \vp_1 \until \vp_2,$$ where $p$ ranges
  over $\ap$ and $A$ ranges over $\power{\agents}$, the power-set of
  $\agents$.
\end{definition}

Notice that we allow (countably) infinitely many propositional
parameters, but in line with traditional presentations of \ATL\ (see,
for example, \cite{AHK02}), only finitely many names of agents.  We
will show, however, after introducing \ATL-semantics, that this latter
restriction is not essential (see Remark~\ref{rm:num_of_agents} below)
and thus does not result in a loss of generality.

The other boolean connectives and the propositional constant $\top$
(``truth'') can be defined in the usual way.  Also, $\coaldiam{A} \vp$
can be defined as $\coal{A} \top \until \vp$.  As will become
intuitively clear from the semantics of \ATL, $\coaldiam{A} \vp$ and
$\coalbox{A} \vp$ are not interdefinable\footnote{A formal proof of
  this claim would require a suitable semantic argument, e.g., one
  involving bisimulations between models for \ATL. As such an argument
  would take up quite a lot of space and is not immediately relevant
  to the contents of the present paper, we do not pursue it in this
  paper.}.

The expression $\coal{A}$, where $A \subseteq \agents$, is a
\fm{coalition quantifier} (also referred to as ``path quantifier'' in
the literature), while $\next$ (``next''), $\Box$ (``always''), and
$\until$ (``until'') are \fm{temporal operators}.  Like in \CTL, where
every temporal operator has to be preceded by a path quantifier, in
\ATL\ every temporal operator has to be preceded by a coalition
quantifier.  Thus, \de{modal operators} of \ATL\ are pairs made up of
a coalition quantifier and a temporal operator.

We adopt the usual convention that unary connectives have a stronger
binding power than binary ones; when this convention helps
disambiguate a formula, we usually omit the parentheses associated
with binary connectives.

Formulae of the form $\coal{A} \vp \until \psi$ and $\neg \coalbox{A}
\vp$ are called \fm{eventualities}, for the reason explained later on.

%%%%%%%%%%%%%%%%%%%%%%%%%%%%%%%%%%
\subsection{\ATL\ semantics}
\label{sec:semantics}
%%%%%%%%%%%%%%%%%%%%%%%%%%%%%%%%%%

While the syntax of \ATL\ remained unchanged from \cite{AHK97} to
\cite{AHK02}, the semantics, originally based on ``alternating
transition systems'', was revised in~\cite{AHK02}, where the notion of
``concurrent game structures'' was introduced. The latter are
essentially equivalent to ``multi-player game models''
(\cite{Pauly01}, \cite{Pauly02}) and are more general than, yet
yielding the same set of validities as, alternating transition
systems---see \cite{Goranko01},\cite{GorJam04}.

In the present paper, we use the term ``concurrent game models'' to
refer to the ``concurrent game structures'' from~\cite{AHK02} and, in
keeping with the long-established tradition in modal logic, the term
``concurrent game frames'' to refer to the structures resulting from
those by abstracting away from the meaning of atomic propositions.

%%%%%%%%%%%%%%%%%%%%%%%%%%%%%%%%%%
\subsubsection{Concurrent game frames}
\label{sec:CGS}
%%%%%%%%%%%%%%%%%%%%%%%%%%%%%%%%%%

Concurrent game frames are to \ATL\ what Kripke frames are to standard
modal logics.

\begin{definition}
  \label{def:cgf}
  A \de{concurrent game frame} (for short, CGF) is a tuple $\kframe{F}
  = (\agents, S, d, \delta)$, where
  \begin{itemize}
  \item $\agents$ is a finite, non-empty set of \de{agents}, referred
    to by the numbers $1$ through $\card{\agents}$; subsets of
    $\agents$ are called \de{coalitions};
  \item $S \ne \emptyset$ is a set of \de{states};
  \item $d$ is a function assigning to every agent $a \in \agents$ and
    every state $s \in S$ a natural number $d_a(s) \geq 1$ of
    \de{moves}, or actions, available to agent $a$ at state $s$; these
    moves are identified with the numbers $0$ through $\nmoves{a}{s} -
    1$.  For every state $s \in S$, a \de{move vector} is a $k$-tuple
    $(\move{1}, \ldots, \move{k})$, where $k = \card{\agents}$, such
    that $0 \leq \move{a} < \nmoves{a}{s}$ for every $1 \leq a \leq k$
    (thus, $\move{a}$ denotes an arbitrary action of agent $a \in
    \agents$). Given a state $s \in S$, we denote by \moves{a}{s} the
    set \set{0, \ldots, \nmoves{a}{s}-1} of all moves available to
    agent $a$ at $s$, and by $D(s)$ the set $\prod_{a \in \agents}
    \moves{a}{s}$ of all move vectors at $s$; with $\move{}$ we denote
    an arbitrary member of $D(s)$.
  \item $\delta$ is a \de{transition function} assigning to every $s
    \in S$ and $\move{} \in D(s)$ a state $\delta (s, \move{})
    \in S$ that results from $s$ if every agent $a \in \agents$ plays
    move $\move{a}$.
  \end{itemize}
\end{definition}

All definitions in the remainder of this section refer to an
arbitrarily fixed CGF.

\begin{definition}
  For two states $s, s' \in S$, we say that $s'$ is a \de{successor}
  of $s$ (or, for brevity, an $s$-successor) if $s' = \delta (s,
  \move{})$ for some $\move{} \in D(s)$.
\end{definition}

\begin{definition}
  A \de{run} in $\kframe{F}$ is an infinite sequence $\lambda = s_0,
  s_1, \ldots$ of elements of $S$ such that, for all $i \geq 0$, the
  state $s_{i+1}$ is a successor of the state $s_i$. Elements of the
  domain of $\lambda$ are called \fm{positions}.  For a run $\lambda$
  and positions $i,j \geq 0$, we use $\lambda[i]$ and $\lambda[j, i]$
  to denote the $i$th state of $\lambda$ and the finite segment $s_j,
  s_{j+1} \ldots, s_{i}$ of $\lambda$, respectively.  A run with
  $\lambda[0] = s$ is referred to as an \de{$s$-run}.
\end{definition}

Given a tuple $\tau$, we interchangeably use $\tau_n$ and $\tau(n)$ to
refer to the $n$th element of $\tau$. We use the symbol $\illegalmove$
as a placeholder for an arbitrarily fixed move of a given agent.

\begin{definition}
  \label{def:A-moves}
  Let $s \in S$ and let $A \subseteq \agents$ be a coalition of
  agents, where $\card{\agents} = k$.  An \de{$A$-move} \move{A} at
  state $s$ is a $k$-tuple $\move{A}$ such that $\move{A} (a) \in D_a
  (s)$ for every $a \in A$ and $\move{A} (a') = \illegalmove$ for
  every $a' \notin A$. We denote by \moves{A}{s} the set of all
  $A$-moves at state $s$.
\end{definition}

Alternatively, $A$-moves at $s$ can be defined as equivalence classes
on the set of all move vectors at $s$, where each equivalence class
is determined by the choices of moves of agents in $A$.

\begin{definition}
  We say that a \de{move vector $\move{}$ extends an $A$-move
    $\move{A}$} and write $\move{A} \sqsubseteq \move{}$, or $\move{}
  \sqsupseteq \move{A}$, if $\move{} (a) = \move{A} (a)$ for every $a
  \in A$.
\end{definition}

Given a coalition $A \subseteq \agents$, an $A$-move $\move{A} \in
\moves{A}{s}$, and a $(\agents \setminus A)$-move $\move{\agents
  \setminus A} \in D_{\agents \setminus A} (s)$, we denote by
$\merge{\move{A}}{\move{\agents \setminus A}}$ the unique $\move{} \in
D(s)$ such that both $\move{A} \sqsubseteq \move{}$ and $\move{\agents
  \setminus A} \sqsubseteq \move{}$.

\begin{definition}
  Let $\move{A} \in \moves{A}{s}$.  The \de{outcome} of \move{A} at
  $s$, denoted by $out(s, \move{A})$, is the set of all states $s'$
  for which there exists a move vector $\move{} \in D(s)$ such that
  $\move{A} \sqsubseteq \move{}$ and $\delta (s, \move{}) = s'$.
\end{definition}

Concurrent game frames are meant to model coalitions of agents
behaving strategically in pursuit of their goals.  Given a coalition
$A$, a strategy for $A$ is, intuitively, a rule determining at a given
state what $A$-move the agents in $A$ should play.  Given a state as a
component of a run, the strategy for agents in $A$ at that state may
depend on some part of the history of the run\footnote{In general, we
  might consider the case when an agent can remember \emph{any} part
  of the history of the run; it suffices, however, for our purposes in
  this paper to consider only those parts that are made up of
  consecutive states of a run.}, the length of this ``remembered''
history being a parameter formally represented by an ordinal $\gamma
\leq \omega$.  Intuitively, players using a $\gamma$-recall strategy
can ``remember'' any number $n < \gamma$ of the previous
\emph{consecutive} states of the run.  If $\gamma$ is a natural
number, then $\gamma$ can be thought of as a number of the consecutive
states, including the current state, on which an agent is basing its
decision of what move to play.  If, however, $\gamma = \omega$, then
an agent can remember any number of the previous consecutive states of
the run.

Given a natural number $n$, by $S^n$ we denote the set of sequences of
elements of $S$ of length $n$; the length of a sequence $\kappa$ is
denoted by $|\kappa|$ and the last element of $\kappa$ by $l(\kappa)$.

\begin{definition}
  \label{def:A-strategies}
  Let $A \subseteq \agents$ be a coalition and $\gamma$ an ordinal
  such that $1 \leq \gamma \leq \omega$.  A \de{$\gamma$-recall
    strategy for $A$} (or, a \de{$\gamma$-recall $A$-strategy}) is a
  mapping $\str{A}[\gamma] : \bigunion_{1 \leq n < 1+\gamma} S^n
  \mapsto \bigunion \crh{\moves{A}{s}}{s \in S}$ such that
  $\str{A}[\gamma] (\kappa) \in \moves{A}{l(\kappa)}$ for every
  $\kappa \in \bigunion_{1 \leq n < 1+\gamma} S^n$.
\end{definition}

\begin{remark}
  Given that $1 + \omega = \omega$, the condition of
  Definition~\ref{def:A-strategies} for the case of $\omega$-recall
  strategies can be rephrased in a simpler form as follows:
  $\str{A}[\omega] : \bigunion_{1 \leq n < \omega} S^n \mapsto
  \bigunion \crh{\moves{A}{s}}{s \in S}$ such that $\str{A}[\omega]
  (\kappa) \in \moves{A}{l(\kappa)}$ for every $\kappa \in
  \bigunion_{1 \leq n < \omega} S^n$.
\end{remark}

\begin{definition}
  Let $\str{A}[\gamma]$ be a $\gamma$-recall $A$-strategy. If $\gamma
  = \omega$, then $\str{A}[\gamma]$ is referred to as a
  \de{perfect-recall $A$-strategy}; otherwise, $\str{A}[\gamma]$ is
  referred to as a \de{bounded-recall $A$-strategy}. Furthermore, if
  $\gamma = 1$, then $\str{A}[\gamma]$ is referred to as a
  \de{positional $A$-strategy}.
\end{definition}

Thus, agents using a perfect-recall strategy have potentially
unlimited memory; those using positional strategies have none ($\gamma
= 1$ means that an agent bases its decisions on one state only, i.e.,
the current one); in between, agents using $n$-recall strategies, for
$1 < n < \omega$, can base their decisions on the $n-1$ previous
consecutive states of the run as well as the current state.  We
usually write $\str{A}$ instead of $\str{A}[\gamma]$ when $\gamma$ is
understood from the context.

\begin{remark}
  Even though the concept of $n$-recall strategies, for $1 < n <
  \omega$ is of some interest in itself, in the present paper it is
  introduced for purely technical reasons, to be used in the proof of
  the satisfiability-wise equivalence (see
  Theorem~\ref{thr:from_hintikka_to_models} below) of the semantics of
  \ATL\ based on concurrent game models and the one based on
  concurrent game Hintikka structures as well as in the completeness
  proof for our tableau procedure.

  We note, however, that a more realistic notion of finite-memory
  strategy is the one allowing a strategy to be computed by a finite
  automaton reading a sequence of states in the history of a run and
  producing a move to be played, as proposed in \cite{Thomas95}.
\end{remark}

\begin{definition}
  \label{def:outcomes_of_strategies}
  Let $\str{A}[\gamma]$ be an $A$-strategy.  The \de{outcome of
    $\str{A}[\gamma]$ at state $s$}, denoted by $out(s,
  \str{A}[\gamma])$, is the set of all $s$-runs $\lambda$ such that
\[
  \begin{array}{lc}
    (\gamma) & \lambda[i+1] \in out (\lambda[i],
    \str{A}[\gamma](\lambda[j, i]))
    \text{ holds for all } i \geq 0, \\
    & \mbox{ where } j = \max (i - \gamma + 1,0).
  \end{array}
 \]

\end{definition}

Note that for positional strategies condition $(\mathbf{\gamma})$
reduces to
$$
\begin{array}{ll}
  (\mathbf{P}) & \lambda[i+1] \in out (\lambda[i],
  \str{A}(\lambda[i])), \text{ for all } i \geq 0,
\end{array}
$$
whereas for perfect-recall strategies it reduces to
$$
\begin{array}{ll}
  (\mathbf{PR}) & \lambda[i+1] \in out (\lambda[i],
  \str{A}(\lambda[0, i])),
  \text{ for all } i \geq 0.
\end{array}
$$

%%%%%%%%%%%%%%%%%%%%%%%%%%%%%%%%%%
\subsubsection{Truth of \ATL-formulae}
\label{sec:atl-semantics-informally}
%%%%%%%%%%%%%%%%%%%%%%%%%%%%%%%%%%

We are now ready to define the truth of \ATL-formulae in terms of
concurrent game models and perfect-recall strategies.

\begin{definition}
  \label{def:cgm}
  A \de{concurrent game model} (for short, CGM) is a tuple $\mmodel{M}
  = (\kframe{F}, \ap, L)$, where
  \begin{itemize}
  \item \kframe{F} is a concurrent game frame;
  \item \ap\ is a set of atomic propositions;
  \item $L$ is a labeling function $L: S \rightarrow \power{\ap}$.
    Intuitively, the set $L(s)$ contains the atomic propositions that
    are true at state $s$.
  \end{itemize}
\end{definition}

\begin{definition}
  \label{def:atl-satisfaction}
  Let $\mmodel{M} = (\agents, S, d, \delta, \ap, L)$ be a concurrent
  game model.  The satisfaction relation $\Vdash$ is inductively
  defined for all $s \in S$ and all \ATL-formulae as follows:
  \begin{itemize}
  \item \sat{M}{s}{p} iff $p \in L(s)$, for all $p \in \ap$;
  \item \sat{M}{s}{\neg \vp} iff \notsat{M}{s}{\vp};
  \item \sat{M}{s}{\vp \imp \psi} iff \sat{M}{s}{\vp} implies
    \sat{M}{s}{\psi};
  \item \sat{M}{s}{\coalnext{A} \vp} iff there exists an $A$-move
    $\move{A} \in \moves{A}{s}$ such that \sat{M}{s'}{\vp} for all
    $s' \in out(s, \move{A})$;
  \item \sat{M}{s}{\coalbox{A} \vp} iff there exists a
    perfect-recall $A$-strategy \str{A} such that
    \sat{M}{\lambda[i]}{\vp} holds for all $\lambda \in out(s,
    \str{A})$ and all positions $i \geq 0$;
  \item \sat{M}{s}{\coal{A} \vp \until \psi} iff there exists a
    perfect-recall $A$-strategy \str{A} such that, for all $\lambda
    \in out(s, \str{A})$, there exists a position $i \geq 0$ with
    \sat{M}{\lambda[i]}{\psi} and \sat{M}{\lambda[j]}{\vp} holds
    for all positions $0 \leq j < i$.
  \end{itemize}
\end{definition}

\begin{definition}
  \label{def:atl-truth}
  Let $\theta$ be an \ATL-formula and $\G$ be a set of \ATL-formulae.
  \begin{itemize}
  \item $\theta$ is \de{true at a state $s$ of a CGM} \mmodel{M} if
    \sat{M}{s}{\theta}; $\G$ is \de{true at $s$}, denoted
    \sat{M}{s}{\G}, if \sat{M}{s}{\vp} holds for every $\vp \in
    \Gamma$;
  \item $\theta$ is \de{satisfiable in a CGM} \mmodel{M} if
    \sat{M}{s}{\theta} holds for some $s \in \mmodel{M}$; $\G$ is
    \de{satisfiable in} \mmodel{M} if \sat{M}{s}{\G} holds for some $s
    \in \mmodel{M}$;
  \item $\theta$ is \de{true in a CGM} \mmodel{M} if
    \sat{M}{s}{\theta} holds for every $s \in \mmodel{M}$.
\end{itemize}
\end{definition}

As the clauses for the modal operators $\coalbox{A} $ and $\coal{A}
\until $ in Definition~\ref{def:atl-satisfaction} involve strategies,
these will henceforth be referred to as \de{strategic operators}.

\begin{remark}
  \label{rm:num_of_agents}
  As in the present paper we are only concerned with satisfiability of
  single formulae (or, equivalently, finite sets of formulae), and a
  formula can only contain finitely many atomic propositions, the size
  of \ap\ is of no real significance for our purposes here. The issue
  of the cardinality of the set of agents $\agents$ is more involved,
  however, as infinite coalitions can be \emph{named} within a single
  formula, which would imply certain technical complications.
  Nevertheless, when interested in satisfiability of single formulae,
  the finiteness of $\agents$ does not result in a loss of generality.
  Indeed, as every formula $\varphi$ mentions only \emph{finitely many
    coalitions}, we can definite an equivalence relation of finite
  index on the set of agents that is naturally induced by $\varphi$;
  to wit, two agents are considered ``equivalent'' if they always
  occur (or not) together in all the coalitions mentioned in $\varphi$ (i.e. $a   \cong_{\vp} b$ if $a \in A$ iff $b \in A$ holds for every coalition
  $A$ mentioned in $\vp$). Then, $\varphi$ can be rewritten into a
  formula $\varphi'$ in which equivalence classes with respect to
  $\cong_{\vp}$ are treated as single agents. It is not hard to show
  that $\vp'$ is satisfiable iff $\vp$ is, and thus the satisfiability
  of the latter can be reduced to the satisfiablity of the former.
\end{remark}

%%%%%%%%%%%%%%%%%%%%%%%%%%%%%%%%%%
\subsection{Fixpoint characterization of strategic operators}
\label{sec:fixpoint_characterisation}
%%%%%%%%%%%%%%%%%%%%%%%%%%%%%%%%%%

In the tableau procedure described later on in the paper and in the
proofs of a number of results concerning \ATL, we will make use of the
fact that the strategic operators $\coalbox{A} $ and $\coal{A} \until$
can be given neat fixpoint characterizations, as shown in
\cite{GorDrim06}.  In this respect, \ATL\ turns out to be not much
different from \LTL\ and \CTL, whose ``long-term'' modalities are
well-known to have similar fixpoint characterizations.

The following definitions introduce set theoretic operators
corresponding to the semantics of the respective coalitional
modalities in a sense made precise in Theorem \ref{thr:fixpoint}.

\begin{definition}
  Let $(\agents, S, d, \delta)$ be a CGF and let $X \subseteq S$.
  Then, $[\coalnext{A}]$ is an operator $\power{S} \mapsto \power{S}$
  defined by the following condition: $s \in [\coalnext{A}] (X)$ iff
  there exists $\move{A} \in \moves{A}{s}$ such that $out(s, \move{A})
  \subseteq X$.
\end{definition}

\begin{definition}
  Let $(\agents, S, d, \delta)$ be a CGF and let $X, Y \subseteq S$.  Then,
  we define operators $[Y \inter \coalnext{A}]$ and $[Y \union
  \coalnext{A}]$ from $\power{S}$ to $\power{S}$ as expected:
  \begin{itemize}
  \item $[Y \inter \coalnext{A}](X) = Y \inter [\coalnext{A}] (X)$;
 \item $[Y \union \coalnext{A}](X) = Y \union [\coalnext{A}](X)$.

  \end{itemize}
\end{definition}

Given a formula $\vp$ and a model $\mmodel{M}$, we denote by
$\truthset{M}{\vp}$ the set \crh{s}{\sat{M}{s}{\vp}}; we simply write
$\truthset{}{\vp}$ when $\mmodel{M}$ is clear from the context.

Given a monotone operator $[\Omega]: \power{S} \mapsto \power{S}$, we
denote by $\mu X.  [\Omega] (X)$ and $\nu X. [\Omega] (X)$ the least
and greatest fixpoints of $[\Omega]$, respectively.

\begin{theorem}[Goranko, van Drimmelen \cite{GorDrim06}]
  \label{thr:fixpoint}
  Let $(\agents, S, d, \delta, \ap, L)$ be a CGM. Then, for any
  formulae $\vp,\psi$:
  \begin{itemize}
  \item $\ext{\coalnext{A} \vp}{} = [\coalnext{A}](\ext{\vp}{})$
  \item $\ext{\coalbox{A} \vp}{} = \nu X. [\ext{\vp}{} \inter
    \coalnext{A}] (X)$;
  \item $\ext{\coal{A} \vp \until \psi}{} = \mu X. [\ext{\psi}{}
    \union [\ext{\vp}{} \inter \coalnext{A}]] (X)$.
  \end{itemize}
\end{theorem}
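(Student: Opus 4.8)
The first item is immediate from the definition of $[\coalnext{A}]$ and the satisfaction clause for $\coalnext{A}\vp$, so the real content is in the two fixpoint equations. I would prove each by the standard two-part argument for a characteristic-function/fixpoint claim: first show that $\ext{\coalbox{A}\vp}{}$ (resp. $\ext{\coal{A}\vp\until\psi}{}$) is a fixpoint of the relevant operator, and then show it is the greatest (resp. least) such. Throughout, write $\Phi = \ext{\vp}{}$ and $\Psi = \ext{\psi}{}$, and abbreviate $[\Phi\inter\coalnext{A}]$ by $F$; note $F$ is monotone, so the Knaster--Tarski theorem guarantees $\nu X.F(X)$ and $\mu X.[\Psi\union F](X)$ exist.

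\medskip
\noindent\textbf{The $\coalbox{A}$ case.} To see $\ext{\coalbox{A}\vp}{}$ is a postfixpoint of $F$, i.e. $\ext{\coalbox{A}\vp}{} \subseteq \Phi \inter [\coalnext{A}](\ext{\coalbox{A}\vp}{})$: if $\sat{M}{s}{\coalbox{A}\vp}$ via a perfect-recall strategy $\str{A}$, then taking $i=0$ gives $s\in\Phi$, and the $A$-move $\str{A}(s)$ has the property that every successor $s'\in out(s,\str{A}(s))$ again satisfies $\coalbox{A}\vp$ — witnessed by the ``shifted'' strategy $\kappa\mapsto\str{A}(s\cdot\kappa)$ — so $s\in[\coalnext{A}](\ext{\coalbox{A}\vp}{})$. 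For the reverse inclusion (prefixpoint), suppose $s$ lies in the right-hand side; unfolding, from every reachable state there is an $A$-move keeping $\vp$ true and staying inside the set, and one stitches these local $A$-moves along finite histories into a single perfect-recall $A$-strategy whose outcomes all satisfy $\Box\vp$. Hence $\ext{\coalbox{A}\vp}{}$ is a fixpoint. For \emph{greatest}: if $Z\subseteq\Phi\inter[\coalnext{A}](Z)$, then from any $s\in Z$ we again build, by the same stitching, a perfect-recall $A$-strategy forcing the run to stay in $Z\subseteq\Phi$ forever, so $Z\subseteq\ext{\coalbox{A}\vp}{}$; therefore $\ext{\coalbox{A}\vp}{}=\nu X.F(X)$.

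\medskip
\noindent\textbf{The $\coal{A}\until$ case.} Write $G(X)=\Psi\union F(X)$. That $\ext{\coal{A}\vp\until\psi}{}$ is a fixpoint of $G$ amounts to the ``unfolding'' equivalence $\coal{A}\vp\until\psi \equiv \psi\dis(\vp\con\coalnext{A}\coal{A}\vp\until\psi)$, which one checks directly from Definition~\ref{def:atl-satisfaction}, again using the shift of a perfect-recall strategy for the $\Leftarrow$ direction. For \emph{least}: if $Z$ satisfies $\Psi\union F(Z)\subseteq Z$ and $s\in\ext{\coal{A}\vp\until\psi}{}$, I would argue $s\in Z$ by showing that along the witnessing strategy $\str{A}$, every outcome run reaches $\psi$; then one does induction on a rank function assigning to each $s'\in\ext{\coal{A}\vp\until\psi}{}$ the supremum over $\str{A}$-outcomes of the least position at which $\psi$ first holds — a well-defined ordinal by König-type reasoning, or more simply one observes that $\ext{\coal{A}\vp\until\psi}{}$ is itself the least prefixpoint because any prefixpoint $Z$ of $G$ is closed under the one-step unfolding and contains $\Psi$, whence a straightforward induction on the first-$\psi$-position bound places all of $\ext{\coal{A}\vp\until\psi}{}$ inside $Z$.

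\medskip
\noindent\textbf{Main obstacle.} The delicate point in both cases is the construction of a single perfect-recall $A$-strategy out of a family of local $A$-moves, one per reachable finite history — i.e. the prefixpoint-to-semantics direction for $\coalbox{A}$ and the analogous ``gluing'' in the $\until$ argument. One must be careful that the local choices, made state-by-state along histories, genuinely assemble into a well-defined function on $\bigcup_{1\le n<\omega}S^n$ and that the resulting outcome set is exactly the runs that stay inside the postfixpoint (resp. reach $\psi$); the perfect-recall setting is precisely what makes this gluing unproblematic, since the strategy may read the entire history and hence ``remember'' which branch of the unfolding it is on. I would also flag that since these fixpoint characterizations are quoted from \cite{GorDrim06}, it suffices here to give the proof sketch above and refer the reader there for the full details.
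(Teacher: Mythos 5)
The paper does not actually prove this theorem: it is imported verbatim from \cite{GorDrim06}, so there is no internal proof to compare against, and your closing remark that a citation suffices is exactly what the authors do. Your sketch is the standard Knaster--Tarski two-inclusion argument and is correct in outline: the first item is definitional; for $\coalbox{A}\vp$ you show $\ext{\coalbox{A}\vp}{}$ is a fixpoint of $[\ext{\vp}{}\inter\coalnext{A}]$ via shifting (for the postfixpoint inclusion) and gluing (for the prefixpoint inclusion), and that any postfixpoint $Z$ is contained in it by turning the pointwise choice of witnessing $A$-moves into a positional, hence perfect-recall, strategy whose outcomes never leave $Z$; for $\coal{A}\vp\until\psi$ the fixpoint property is the unfolding equivalence of Corollary~\ref{cor:fixpoint}.

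One place where your sketch is loose: the rank function you propose for the least-fixpoint inclusion is assigned to \emph{states}, as the supremum over outcomes of \emph{some} witnessing perfect-recall strategy of the first position where $\psi$ holds. That quantity need not decrease when you pass from $s$ to a successor $s'$, because the rank of $s'$ is computed from $s'$'s own chosen witnessing strategy, not from the shift of the strategy at $s$. The clean version is the one you give as your alternative: fix a single witnessing strategy $\str{A}$ at the single state $s$ in question, observe that the tree of $\str{A}$-compliant finite histories along which $\psi$ has not yet occurred is finitely branching (since every $d_a(s)$ is a natural number, $out(s,\move{A})$ is finite) and has no infinite branch, hence is finite by K\"onig's lemma, and then do downward induction on that finite tree to place $s$ in any prefixpoint $Z$. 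With that reading the argument is sound; either spell out the history-based induction or drop the state-indexed rank in favour of it.
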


\begin{corollary}
  \label{cor:fixpoint}
  The following equivalences hold at every state of every CGM with $A
  \subseteq \agents$:
  \begin{itemize}
  \item $\coalbox{A} \vp \equivalence \vp \con \coalnext{A} \coalbox{A}
    \vp$;
  \item $\coal{A} \vp \until \psi \equivalence \psi \dis (\vp \con
    \coalnext{A} \coal{A} \vp \until \psi)$;
  \end{itemize}
\end{corollary}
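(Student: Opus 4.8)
The plan is to derive Corollary~\ref{cor:fixpoint} directly from Theorem~\ref{thr:fixpoint} together with the defining property of greatest and least fixpoints. First I would recall that for a monotone operator $[\Omega]$ on $\power{S}$, both $\mu X. [\Omega](X)$ and $\nu X. [\Omega](X)$ are genuine fixpoints, i.e. $[\Omega](\nu X.[\Omega](X)) = \nu X.[\Omega](X)$ and similarly for $\mu$. The operators appearing in Theorem~\ref{thr:fixpoint} are easily seen to be monotone (intersection, union, and $[\coalnext{A}]$ are all monotone in $X$), so the fixpoint equalities are available.

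For the first equivalence, apply Theorem~\ref{thr:fixpoint} to $\coalbox{A}\vp$: we have $\ext{\coalbox{A}\vp}{} = \nu X. [\ext{\vp}{} \inter \coalnext{A}](X)$, and since this set is a fixpoint of the operator $[\ext{\vp}{}\inter\coalnext{A}]$, it equals $\ext{\vp}{} \inter [\coalnext{A}](\ext{\coalbox{A}\vp}{})$. Now $[\coalnext{A}](\ext{\coalbox{A}\vp}{}) = \ext{\coalnext{A}\coalbox{A}\vp}{}$ by the first clause of Theorem~\ref{thr:fixpoint} applied to the formula $\coalbox{A}\vp$, and $\inter$ corresponds to $\con$ on truth sets. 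Hence $\ext{\coalbox{A}\vp}{} = \ext{\vp \con \coalnext{A}\coalbox{A}\vp}{}$, which, holding for the truth sets in an arbitrary CGM, means exactly that $\coalbox{A}\vp \equivalence \vp \con \coalnext{A}\coalbox{A}\vp$ is true at every state of every CGM. The second equivalence is entirely analogous, using the least-fixpoint clause: $\ext{\coal{A}\vp\until\psi}{} = \mu X.[\ext{\psi}{} \union [\ext{\vp}{}\inter\coalnext{A}]](X)$ is a fixpoint of its operator, so it equals $\ext{\psi}{} \union (\ext{\vp}{} \inter [\coalnext{A}](\ext{\coal{A}\vp\until\psi}{}))$, and unwinding via the first clause of Theorem~\ref{thr:fixpoint} and translating $\union,\inter$ back into $\dis,\con$ gives $\ext{\coal{A}\vp\until\psi}{} = \ext{\psi \dis (\vp \con \coalnext{A}\coal{A}\vp\until\psi)}{}$.

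There is essentially no obstacle here; the only point requiring a word of care is the passage from the set-theoretic fixpoint identity to the object-language equivalence, i.e. observing that $\ext{\chi_1}{} = \ext{\chi_2}{}$ in every CGM is the same as $\chi_1 \equivalence \chi_2$ being valid, and that the boolean set operations $\inter$, $\union$ match the connectives $\con$, $\dis$ on satisfaction sets (immediate from Definition~\ref{def:atl-satisfaction}). I would state the argument for the $\coalbox{A}$ case in full and simply remark that the $\coal{A}\until$ case is proved in the same way, mutatis mutandis, replacing the greatest fixpoint by the least one.
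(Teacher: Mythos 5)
Your proof is correct and is exactly the argument the paper intends: the corollary is stated without a written proof precisely because it follows immediately from Theorem~\ref{thr:fixpoint} by unfolding the greatest/least fixpoints once and translating $\inter$, $\union$ back into $\con$, $\dis$ on truth sets, which is what you do.
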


%%%%%%%%%%%%%%%%%%%%%%%%%%%%%%%%%%
\subsection{Tight, general, and loose \ATL-satisfiability}
\label{sec:satisfiability}
%%%%%%%%%%%%%%%%%%%%%%%%%%%%%%%%%%

Unlike the case of standard modal logics, it is natural to think of
several apparently different notions of \ATL-satisfiability.  The
differences lie along two dimensions: the types of strategies used in
the definition of the satisfaction relation and the relationship
between the set of agents mentioned in a formula and the set of agents
referred to in the language. We consider these issues in turn.

The notion of strategy, as introduced above, is dependent on the
amount of memory used to prescribe it. At one end of the spectrum are
\emph{positional} (or \emph{memoryless}) strategies, which only take
into consideration the current state of, but not any part of the
history of, the run; and at the other---\emph{perfect recall}
strategies, which take into account the entire history of the run. It
turns out, however, that these both ``extreme'' types of
strategy---and, hence, all those in between---yield equivalent
semantics in the case of \ATL\ (they, however, differ in the case of
the more expressive logic \ATL*, considered
in~\cite{AHK02}). Therefore, the above definition of truth of
\ATL-formulae (Definition~\ref{def:atl-satisfaction}) could have been
couched in terms of positional, rather than perfect-recall, strategies
without any changes in what formulae are satisfiable at which states.
This equivalence, first mentioned in~\cite{AHK02}, can be proved using
a model-theoretic argument; independently, it follows as a corollary
of the soundness and completeness theorems for the tableau procedure
presented below (see Corollary~\ref{cor:positionality}).

Now, assuming the type of strategies being fixed, one can consider
three different, at least on the face of it, notions of satisfiability
and validity for \ATL, depending on the relationship between the set
of agents mentioned in a formula and the set of agents referred to in
the language, as introduced in \cite{WLWW06}.

For every \ATL-formula $\theta$, we denote by $\agents_{\theta}$ the
set of agents occurring in $\theta$. When considering an \ATL-formula
$\theta$ in isolation, we may assume, without a loss of generality,
that the names of the agents occurring in $\theta$ are the numbers $1$
through $\card{\agents_{\theta}}$; hence, the following definitions.

\begin{definition}
  An \ATL-formula $\theta$ is \de{\agents-satisfiable}, for some
  $\agents \supseteq \agents_{\theta}$, if $\theta$ is satisfiable in
  a CGM $\mmodel{M} = (\agents, S, d, \delta, \ap, L)$; $\theta$ is
  \de{\agents-valid} if $\theta$ is true in every such CGM.
\end{definition}

\begin{definition}
  An \ATL-formula $\theta$ is \de{tightly satisfiable} if $\theta$ is
  satisfiable in a CGM $\mmodel{M} = (\agents_{\theta}, S, d, \delta,
  \ap, L)$; $\theta$ is \de{tightly valid} if $\theta$ is true in
  every such CGM.
\end{definition}

Clearly, $\theta$ is tightly satisfiable iff it is
$\agents_{\theta}$-satisfiable.

\begin{definition}
  \label{def:general_sat}
  An \ATL-formula $\theta$ is \de{generally satisfiable} if $\theta$
  is satisfiable in a CGM $\mmodel{M} = (\agents', S, d, \delta, \ap,
  L)$ for some $\agents'$ with $\agents_{\theta} \subseteq \agents'$;
  $\theta$ is \de{generally valid} if $\theta$ is true in every such
  CGM.
\end{definition}

To see that tight satisfiability (validity) is different from general
satisfiability (validity), consider the formula $\neg \coalnext{1} p
\con \neg \coalnext{1} \neg p$; it is easy to see that this formula is
generally, but not tightly satisfiable (accordingly, its negation is
tightly, but not generally, valid).  Obviously, tight satisfiability
implies general satisfiability, and it is not hard to notice that it
also implies $\agents$-satisfiability (in a model where any agent $a'
\in \agents \setminus \agents_{\theta}$ plays a dummy role by
having exactly one action available at every state).

We now show that testing for both $\agents$-satisfiability and general
satisfiability for $\theta$ can be reduced to testing for tight
satisfiability and a special case of $\agents$-satisfiability where
$\agents = \agents_{\theta} \union \set{a'}$ for some $a' \notin
\agents_{\theta}$ (more precisely, $a' = \card{\agents_{\theta}}+ 1
$)---in other words, only one new agent suffices to witness
satisfiability of $\theta$ over CGFs involving agents not in
$\agents_{\theta}$. This result, proved below, was first stated, with
a proof sketch, for satisfiability in the more restricted (but
equivalent with respect to satisfiability, see \cite{Goranko01})
semantics based on ``alternating transition systems'',
in~\cite{WLWW06}.

\begin{theorem}
  \label{thr:one_fresh_agent_is_enough}
  Let $\theta$ be an \ATL-formula, $\agents_{\theta} \subsetneq
  \agents$, and $a' \notin \agents_{\theta}$.  Then, $\theta$ is
  $\agents$-satisfiable iff $\theta$ is $(\agents_{\theta} \union
  \set{a'})$-satisfiable.
\end{theorem}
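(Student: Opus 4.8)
The plan is to prove the two directions separately, the left-to-right one being the substantive part. For the easy direction, assume $\theta$ is $(\agents_\theta \union \set{a'})$-satisfiable in some CGM $\mmodel{M}$ over the agent set $\agents_\theta \union \set{a'}$. Since $\agents_\theta \union \set{a'} \subseteq \agents$, I would extend $\mmodel{M}$ to a CGM $\mmodel{M}'$ over $\agents$ by letting every agent in $\agents \setminus (\agents_\theta \union \set{a'})$ have exactly one available move at every state (a ``dummy'' agent), leaving $S$, $\delta$ (suitably reindexed to account for the extra trivial coordinates), $\ap$ and $L$ unchanged. A straightforward induction on the structure of $\theta$ shows that for every coalition $A \subseteq \agents_\theta$ mentioned in $\theta$, the $A$-moves, their outcomes, and hence the truth values of all subformulae of $\theta$ are preserved; so $\theta$ is $\agents$-satisfiable. (This is essentially the observation already made in the text that tight satisfiability implies $\agents$-satisfiability.)

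For the hard direction, suppose $\theta$ is $\agents$-satisfiable, say $\sat{M}{s_0}{\theta}$ in a CGM $\mmodel{M} = (\agents, S, d, \delta, \ap, L)$. The idea is to collapse all the agents \emph{not} occurring in $\theta$ into the single fresh agent $a'$, while keeping the agents of $\agents_\theta$ untouched. Concretely, I would build a CGM $\mmodel{M}' = (\agents_\theta \union \set{a'}, S, d', \delta', \ap, L)$ on the \emph{same} state space, with the same labeling, where at each state $s$ every agent $a \in \agents_\theta$ keeps its move set $D_a(s)$, and the fresh agent $a'$ is given as its move set the product $\prod_{b \in \agents \setminus \agents_\theta} D_b(s)$ — that is, a single move of $a'$ in $\mmodel{M}'$ encodes a joint move of all the old ``outside'' agents. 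The transition function $\delta'$ is then defined by $\delta'(s, (\move{a})_{a \in \agents_\theta}, \move{a'}) = \delta(s, \move{})$, where $\move{}$ is the move vector over $\agents$ obtained by reading off the $\agents_\theta$-coordinates from the $\move{a}$'s and the remaining coordinates from the tuple $\move{a'}$.

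The crux is then to verify that this ``regrouping'' preserves the truth of $\theta$ at $s_0$, and the cleanest way is a simultaneous induction on subformulae showing $\truthset{M}{\chi} = \truthset{M'}{\chi}$ for every subformula $\chi$ of $\theta$. The boolean and atomic cases are trivial since $S$ and $L$ are unchanged. For the modal cases, the key observation is that for any coalition $A$ mentioned in $\theta$ we have $A \subseteq \agents_\theta$, and then for every state $s$ there is an outcome-preserving correspondence between $A$-moves of $\mmodel{M}$ at $s$ and $A$-moves of $\mmodel{M}'$ at $s$: an $A$-move in $\mmodel{M}'$ names exactly the choices of agents in $A$, which are among $\agents_\theta$, and its outcome $out(s, \move{A})$ — the set of states reachable by letting the complement coalition vary — is literally the same set in both models, because varying $\agents \setminus A$ in $\mmodel{M}$ is the same as varying $(\agents_\theta \setminus A) \union \set{a'}$ in $\mmodel{M}'$ (the fresh agent's moves sweep out precisely the joint moves of the old outside agents, and those old agents are all in $\agents \setminus A$). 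This gives the $\coalnext{A}$ case directly, and via the fixpoint characterizations of $\coalbox{A}$ and $\coal{A}\until$ from Theorem~\ref{thr:fixpoint} (or equivalently by the usual unravelling of strategies, using that perfect-recall and positional strategies coincide), it propagates to the strategic operators: since the operators $[\coalnext{A}]$ agree on $\power{S}$ in the two models, so do their least and greatest fixpoints. Hence $\sat{M'}{s_0}{\theta}$, so $\theta$ is $(\agents_\theta \union \set{a'})$-satisfiable.

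I expect the main obstacle to be bookkeeping rather than conceptual: making the correspondence between move vectors over $\agents$ and move vectors over $\agents_\theta \union \set{a'}$ precise enough that the outcome sets $out(s, \move{A})$ are seen to be \emph{equal} (not merely in bijection) in the two models, which is what makes the induction — in particular the fixpoint step — go through verbatim. One should also note the degenerate possibility that $\agents \setminus \agents_\theta$ might be empty or a singleton, in which case the fresh agent is a dummy and the construction degenerates to (a reindexing of) the easy direction; this is harmless. No genuine difficulty arises from $\agents$ being infinite, since only the finitely many coalitions named in $\theta$ — all subsets of the finite set $\agents_\theta$ — ever matter.
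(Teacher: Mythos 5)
Your proposal is correct and follows essentially the same route as the paper: collapse all agents outside $\agents_{\theta}$ into the single fresh agent $a'$ by giving it the product of their move sets, observe that $out(s,\move{A})$ (and hence $[\coalnext{A}]$) is unchanged for every $A \subseteq \agents_{\theta}$, and conclude by induction via the fixpoint characterization of Theorem~\ref{thr:fixpoint}; the converse is by dummy padding. The only nitpick is that your easy direction assumes $\agents_{\theta} \union \set{a'} \subseteq \agents$, which need not hold literally, but is repaired exactly as in the paper by letting an arbitrary $b \in \agents \setminus \agents_{\theta}$ play the role of $a'$.
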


\begin{proof}
  Suppose, first, that $\theta$ is $\agents$-satisfiable.  Let
  $\mmodel{M} = (\agents, S, d, \delta, \ap, L)$ be a CGM and $s \in
  S$ be a state such that \sat{M}{s}{\theta}.  To obtain a
  $(\agents_{\theta} \union \set{a'})$-model $\mmodel{M}'$ for
  $\theta$, first, let, for every $s \in S$:
    \begin{itemize}
  \item $d'_a (s) = d_a (s)$ for every $a \in \agents_{\theta}$;
  \item $d'_{a'} (s) = \card{\prod_{b \in (\agents -
        \agents_{\theta})} d_b (s)}$;
  \end{itemize}
  then, define $\delta'$ in the following way:
  $\delta'(\merge{\move{\agents_{\theta}}}{\move{a'}}) =
  \delta(\merge{\sigma_{\agents_{\theta}}}{\sigma_{\agents -
      \agents_{\theta}}}$), where $\sigma_{a'}$ is the place of
  $\sigma_{\agents - \agents_{\theta}}$ in the lexicographic ordering
  of $D_{\agents - \agents_{\theta}} (s)$. Finally, put $\mmodel{M}' =
  (\agents_{\theta} \union \set{a'}, S, d', \delta', \ap, L)$.

  Notice that the above definition immediately implies that $out (s,
  \move{A})$ is the same set in both \mmodel{M} and $\mmodel{M}'$ for
  every $s \in S$ and every $\move{A} \in \moves{A}{s}$ with $A
  \subseteq \agents_{\theta}$, and therefore, in both models,
  $[\coalnext{A}] (X)$ is the same set for every $X \subseteq S$ and
  every $A \subseteq \agents_{\theta}$. It can then be shown, by a
  routine induction on the structure of subformulae $\chi$ of
  $\theta$, using Theorem~\ref{thr:fixpoint}, that \sat{M}{s}{\chi}
  iff \sat{M'}{s}{\chi} for every $s \in S$.

  Suppose, next, that $\theta$ is $(\agents_{\theta} \union
  \set{a'})$-satisfiable.  Let $\mmodel{M}$ be the model witnessing
  the satisfaction and let $b$ be an arbitrary agent in $\agents -
  \agents_{\theta}$.  To obtain a $\agents$-model $\mmodel{M}'$ for
  $\theta$, first, let, for every $s \in S$:
  \begin{itemize}
  \item $d'_a (s) = d_a (s)$ for every $a \in \agents_{\theta}$;
  \item $d'_b (s) = d_{a'} (s)$;
  \item $d'_{b'} (s) = 1$ for any $b' \in \agents \setminus (\set{b}
    \union \agents_{\theta}$);
  \end{itemize}
  then, define $\delta'$ in the following way:
  $\delta'(\merge{\sigma_{\agents_{\theta}}}{\sigma_{\agents -
      \agents_{\theta}}}) =
  \delta(\merge{\sigma_{\agents_{\theta}}}{\sigma_{a'}})$, where
  $\sigma_{a'} = \sigma_{b}$.  Finally, put $\mmodel{M}' = (\agents,
  S, d', \delta', \ap, L)$.  The rest of the argument is identical to
  the one for the opposite direction.
\end{proof}

\begin{corollary}
  \label{cor:general_satisfiability}
  Let $\theta$ be an \ATL-formula. Then, $\theta$ is generally
  satisfiable iff $\theta$ is either tightly satisfiable or
  $(\agents_{\theta} \union \set{a'})$-satisfiable for any $a' \notin \agents_{\theta}$.
\end{corollary}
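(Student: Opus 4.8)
The plan is to read the corollary off Theorem~\ref{thr:one_fresh_agent_is_enough} by a simple case split on whether the agent set witnessing general satisfiability is exactly $\agents_{\theta}$ or strictly larger.

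For the left-to-right direction, suppose $\theta$ is generally satisfiable, so by Definition~\ref{def:general_sat} it is $\agents'$-satisfiable for some $\agents'$ with $\agents_{\theta} \subseteq \agents'$. If $\agents' = \agents_{\theta}$, then $\theta$ is $\agents_{\theta}$-satisfiable, which is the same as being tightly satisfiable, and we are done. Otherwise $\agents_{\theta} \subsetneq \agents'$; fixing an arbitrary $a' \notin \agents_{\theta}$ and applying Theorem~\ref{thr:one_fresh_agent_is_enough} with $\agents := \agents'$ yields that $\theta$ is $(\agents_{\theta} \union \set{a'})$-satisfiable. That the truth of this last statement does not depend on the particular choice of $a'$ is clear: for $a', a'' \notin \agents_{\theta}$, any bijection $\agents_{\theta} \union \set{a'} \to \agents_{\theta} \union \set{a''}$ that is the identity on $\agents_{\theta}$ induces, by relabelling agents, an isomorphism between the corresponding classes of CGMs that preserves satisfaction of $\theta$, since $\theta$ mentions no agent outside $\agents_{\theta}$.

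For the right-to-left direction, if $\theta$ is tightly satisfiable then it is $\agents_{\theta}$-satisfiable, hence generally satisfiable by taking $\agents' := \agents_{\theta}$ in Definition~\ref{def:general_sat}; and if $\theta$ is $(\agents_{\theta} \union \set{a'})$-satisfiable for some $a' \notin \agents_{\theta}$, then it is generally satisfiable by taking $\agents' := \agents_{\theta} \union \set{a'}$.

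There is essentially no obstacle here: all of the real work has already been carried out in the proof of Theorem~\ref{thr:one_fresh_agent_is_enough}. The only point that deserves a word of care is the phrase ``for any $a'$'' in the statement, which must be read as asserting that $(\agents_{\theta} \union \set{a'})$-satisfiability holds for one (equivalently, every) fresh agent $a'$; the relabelling argument sketched above justifies this equivalence.
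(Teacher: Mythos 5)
Your proposal is correct, and since the paper's own proof is just ``Straightforward,'' your case split on whether the witnessing agent set equals $\agents_{\theta}$, followed by an appeal to Theorem~\ref{thr:one_fresh_agent_is_enough}, is exactly the intended argument. The remark about the relabelling of fresh agents is a sensible clarification of the phrase ``for any $a'$'' and does not change the substance.
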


\begin{proof}
  Straightforward.
\end{proof}

Theorem~\ref{thr:one_fresh_agent_is_enough} and
Corollary~\ref{cor:general_satisfiability} essentially mean that it
suffices to consider two distinct notions of satisfiability for
\ATL-formulae: tight satisfiability and satisfiability in CGMs with
one fresh agent, which we will henceforth refer to as \fm{loose
  satisfiability}.

%%%%%%%%%%%%%%%%%%%%%%%%%%%%%%%%%%
\subsection{Alternative semantic characterization of negated modal
operators} \label{sec:goranko-drimmelen-semantics}
%%%%%%%%%%%%%%%%%%%%%%%%%%%%%%%%%%

Under Definition~\ref{def:atl-satisfaction}, truth conditions for
negated modal operators, such as $\neg \coal{A} \until$, involve
claims about the non-existence of moves or strategies.  In
\cite{GorDrim06}, an alternative semantic characterization of such
formulae has been proposed; this alternative characterization involves
claims about the existence of so-called in \cite{GorDrim06}
\fm{co-moves} and \fm{co-strategies}.

\begin{definition}
  \label{def:co-moves}
  Let $s \in S$ and $A \subseteq \agents$. A \de{co-$A$-move} at state
  $s$ is a function $\comove{A}: D_{A}(s) \mapsto D(s)$ such that
  $\move{A} \sqsubseteq \comove{A} (\move{A})$ for every $\move{A} \in
  \moves{A}{s}$. We denote the set of all co-$A$-moves at $s$ by
  $D^{c}_{A} (s)$.
\end{definition}

Intuitively, given an $A$-move $\move{A} \in \moves{A}{s}$, which
represents a collective action of agents in $A$, a co-$A$-move assigns
to $\move{A}$ a ``countermove'' $\move{\agents \setminus A}$ of the
complement coalition $\agents \setminus A$; taken together, these two
moves produce a unique move vector $\merge{\move{A}}{\move{\agents
    \setminus A}} \in D(s)$.

\begin{definition}
  \label{def:comove_outcome}
  Let $\comove{A} \in D^{c}_{A} (s)$.  The outcome of $\comove{A}$ at
  $s$, denoted by $out(s, \comove{A})$, is the set $\bigunion
  \crh{\delta(s, \comove{A}(\move{A}))}{\move{A} \in
    \moves{A}{s}}$. (Thus, $out(s, \comove{A})$ is the range of
  \comove{A}).

\end{definition}

We next define co-strategies, which are related to co-moves in the
same way as strategies are related to moves.

\begin{definition}
  \label{def:co-A-strategies}
  Let $A \subseteq \agents$ be a coalition and $\gamma$ an ordinal
  such that $1 \leq \gamma \leq \omega$.  A \de{$\gamma$-recall
    co-$A$-strategy} is a mapping $\costr{A}[\gamma] : \bigunion_{1
    \leq n < 1+\gamma} S^n \mapsto \bigunion \crh{\comoves{A}{s}}{s
    \in S}$ such that $\costr{A}[\gamma] (\kappa) \in
  \comoves{A}{l(\kappa)}$ for every $\kappa \in \bigunion_{1 \leq n <
    1+\gamma} S^n$.
\end{definition}

Note that the coalition following a co-$A$-strategy is $\agents
\setminus A$.

\begin{remark}
  Given that $1 + \omega = \omega$, the condition of the
  Definition~\ref{def:co-A-strategies} for the case of $\omega$-recall
  strategies can be rephrased in a simpler form as follows:
  $\costr{A}[\omega] : \bigunion_{1 \leq n < \omega} S^n \mapsto
  \bigunion \crh{\comoves{A}{s}}{s \in S}$ such that
  $\costr{A}[\omega] (\kappa) \in \comoves{A}{l(\kappa)}$ for every
  $\kappa \in \bigunion_{1 \leq n < \omega} S^n$.
\end{remark}

\begin{remark}
  A $\gamma$-recall co-strategy can be defined equivalently as a
  mapping from pairs ($\kappa \in S^n$; $\gamma$-recall strategy
  $\str{A}[\gamma]$) to the set of outcome states $out
  (l(\kappa),\str{A}[\gamma](\kappa))$.
\end{remark}

We will write $\costr{A}$ instead of $\costr{A}[\gamma]$ when $\gamma$
is understood from the context.

\begin{definition}
  Let $\costr{A}[\gamma]$ be a $\gamma$-recall co-$A$-strategy.  If
  $\gamma = \omega$, then $\costr{A}[\gamma]$ is referred to as a
  \de{perfect-recall co-$A$-strategy}; otherwise, $\str{A}[\gamma]$ is
  referred to as a \de{bounded-recall co-$A$-strategy}. Furthermore,
  if $\gamma = 1$, then $\costr{A}[\gamma]$ is referred to as a
  \de{positional co-$A$-strategy}.
\end{definition}

\begin{definition}
  \label{def:outcomes_of_co-strategies}
  Let $\costr{A}[\gamma]$ be a co-$A$-strategy.  The \de{outcome of
    $\costr{A}[\gamma]$ at state $s$}, denoted by $out(s,
  \costr{A}[\gamma])$, is the set of all $s$-runs $\lambda$ such that
 \[
  \begin{array}{lc}
    (\gamma^c) & \lambda[i+1] \in out (\lambda[i],
    \costr{A}(\lambda[j,i]))
    \text{ holds for all } i \geq 0, \\
    & \mbox{ where } j = \max (i - \gamma + 1,0).
  \end{array}\]
\end{definition}

For positional co-strategies, condition $(\mathbf{\gamma^c})$ reduces to
$$
\begin{array}{ll}
  (\mathbf{CP}) & \lambda[i+1] \in out (\lambda[i],
  \costr{A}(\lambda[i])), \text{ for all } i \geq 0,
\end{array}
$$
whereas for perfect-recall co-strategies, it reduces to
$$
\begin{array}{ll}
  (\mathbf{CPR}) & \lambda[i+1] \in out (\lambda[i],
  \costr{A}(\lambda[0, i])),
  \text{ for all } i \geq 0.
\end{array}
$$

Now, we can give alternative truth conditions for negated modalities, couched in terms of co-moves and co-strategies.

\begin{theorem}[Goranko, Drimmelen \cite{GorDrim06}]
  \label{thr:GorDrim}
  Let $\mmodel{M}$ be a CGM and $s \in \mmodel{M}$.  Then,
  \begin{enumerate}
  \item \sat{M}{s}{\neg \coalnext{A} \vp} iff there exists a
    co-$A$-move $\comove{A} \in D^{c}_{A} (s)$ such that
    $\sat{M}{s'}{\neg \vp}$ for every $s' \in out(s, \comove{A})$;
  \item \sat{M}{s}{\neg \coalbox{A} \vp} iff there exists a perfect
    recall co-$A$-strategy \costr{A} such that, for every $\lambda
    \in out(s, \costr{A})$, there exists position $i \geq 0$ with
    \sat{M}{\lambda[i]}{\neg \vp};
  \item \sat{M}{s}{\neg \coal{A} \vp \until \psi} iff there exists a
    perfect recall co-$A$-strategy \costr{A} such that, for every
    $\lambda \in out(s, \costr{A})$ and every position $i \geq 0$
    with $\sat{M}{\lambda[i]}{\psi}$, there exists a position $0 \leq
    j < i$ with $\sat{M}{\lambda[j]}{\neg \vp}$.
  \end{enumerate}
\end{theorem}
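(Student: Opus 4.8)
The plan is to prove the three equivalences by relating, in each case, the non-existence of a strategy (resp.\ move) of the given kind to the existence of a co-strategy (resp.\ co-move) of the complementary kind, exploiting the determinacy inherent in the finite-branching game structure at each state. First I would dispose of item (1), the ``next'' case, since it involves no recall and is the base for the intuition: by Definition~\ref{def:atl-satisfaction}, $\notsat{M}{s}{\coalnext{A} \vp}$ means that for \emph{every} $A$-move $\move{A} \in \moves{A}{s}$ there is some $s' \in out(s, \move{A})$ with $\notsat{M}{s'}{\vp}$, i.e.\ with $\sat{M}{s'}{\neg\vp}$. For each such $\move{A}$ pick (by choice over the finite set $\moves{A}{s}$) a move vector $\move{}_{\move{A}} \sqsupseteq \move{A}$ with $\sat{M}{\delta(s,\move{}_{\move{A}})}{\neg\vp}$; the assignment $\comove{A}:\move{A}\mapsto\move{}_{\move{A}}$ is a co-$A$-move by Definition~\ref{def:co-moves}, and by Definition~\ref{def:comove_outcome} every state in $out(s,\comove{A})$ satisfies $\neg\vp$. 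The converse is immediate: given such a $\comove{A}$, any $A$-move $\move{A}$ has $\delta(s,\comove{A}(\move{A})) \in out(s,\move{A}) \cap \truthset{}{\neg\vp}$, so no $A$-move forces $\vp$.

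Next I would handle item (3), the ``until'' case, which is the most delicate, and then note that (2) follows the same pattern (indeed $\neg\coalbox{A}\vp$ is equivalent to $\neg\coal{A}\top\until\neg\vp$ up to the obvious reading, so one can either reduce (2) to (3) or argue it directly in parallel). By Definition~\ref{def:atl-satisfaction}, $\notsat{M}{s}{\coal{A}\vp\until\psi}$ means: for every perfect-recall $A$-strategy $\str{A}$ there exists a run $\lambda \in out(s,\str{A})$ such that for every position $i\ge 0$, either $\notsat{M}{\lambda[i]}{\psi}$ or there is $0\le j<i$ with $\notsat{M}{\lambda[j]}{\vp}$. The task is to convert the ``for every strategy, some bad run'' statement into ``there exists a co-strategy, all of whose runs are bad,'' where ``bad'' on the co-side is exactly the condition in item~(3): for every $\lambda\in out(s,\costr{A})$ and every position $i$ with $\sat{M}{\lambda[i]}{\psi}$ there is $0\le j<i$ with $\sat{M}{\lambda[j]}{\neg\vp}$.

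The engine of the argument is the fixpoint characterization of Theorem~\ref{thr:fixpoint}: $\truthset{}{\coal{A}\vp\until\psi} = \mu X.\,[\,\truthset{}{\psi}\cup[\truthset{}{\vp}\cap\coalnext{A}]\,](X)$. I would first establish, by the model-theoretic equivalence of positional and perfect-recall strategies (the ``positionality'' fact mentioned before Corollary~\ref{cor:positionality}, or a direct induction along the ordinal approximants of the least fixpoint), that $s\notin\truthset{}{\coal{A}\vp\until\psi}$ iff $s$ lies outside every approximant, which by a De Morgan / complementation analysis of the monotone operator means: $s$ belongs to the \emph{greatest} fixpoint of the dual operator $\nu Y.\,[\,\truthset{}{\neg\psi}\cap[\langle\!\langle\overline{A}\rangle\!\rangle\!\!\bigcirc](Y\cup\truthset{}{\neg\vp})\,]$, where $[\langle\!\langle\overline{A}\rangle\!\rangle\!\!\bigcirc]$ is the co-move (``$\agents\setminus A$ can force into'') operator dual to $[\coalnext{A}]$. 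Membership in this greatest fixpoint yields, by the standard ``unravelling a $\nu$-fixpoint into a strategy'' construction, a co-$A$-strategy witnessing item~(3); conversely, any co-$A$-strategy as in item~(3) produces, for each $A$-strategy $\str{A}$, a run in $out(s,\str{A})$ lying in the greatest fixpoint, hence outside $\truthset{}{\coal{A}\vp\until\psi}$, giving $\notsat{M}{s}{\coal{A}\vp\until\psi}$.

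The main obstacle I anticipate is the last step: extracting a single perfect-recall co-$A$-strategy from the ``for all $A$-strategies, some bad run'' hypothesis. Naively diagonalising over all $A$-strategies is not obviously coherent, since different $A$-strategies yield different bad runs and one must stitch these choices into one well-defined co-strategy. The clean way around this is not to work with runs directly but to push everything through the set-theoretic fixpoint operators of Section~\ref{sec:fixpoint_characterisation}: determinacy of the one-step game (which holds because at each state the move sets $\moves{A}{s}$ and $D_{\agents\setminus A}(s)$ are finite, so $s\notin[\coalnext{A}](X)$ iff $s\in[\langle\!\langle\overline{A}\rangle\!\rangle\!\!\bigcirc](\,\overline{X}\,)$) upgrades, via the $\mu$/$\nu$ duality of the monotone operators, to ``global'' determinacy of the until-game, and only then does one read off the co-strategy from the greatest fixpoint by a routine memoryless-selection argument. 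I would therefore phrase the proof so that the combinatorial heavy lifting is entirely in the one-step determinacy equality for $[\coalnext{A}]$ versus its co-operator, and the rest is bookkeeping with Theorem~\ref{thr:fixpoint} and Corollary~\ref{cor:fixpoint}.
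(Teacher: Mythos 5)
The paper does not actually prove Theorem~\ref{thr:GorDrim}; it is imported from \cite{GorDrim06} and stated without proof, so there is no in-paper argument to compare yours against. On its merits, your architecture is the standard (and correct) one: one-step determinacy for $\coalnext{A}$, lifted through the $\mu$/$\nu$ duality of the fixpoint characterizations of Theorem~\ref{thr:fixpoint}, with a positional co-strategy read off the resulting fixpoint. Your treatment of item (1) is complete and correct.

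Two steps as written would fail, however. First, the dual operator you display for item (3) is wrong. Writing $[\coalnext{A}]^{c}$ for the co-move operator ($s \in [\coalnext{A}]^{c}(Y)$ iff some co-$A$-move at $s$ has all its outcome states in $Y$), complementing $\mu X.\,[\ext{\psi}{}\union[\ext{\vp}{}\inter\coalnext{A}]](X)$ yields $\nu Y.\,\ext{\neg\psi}{}\inter\bigl(\ext{\neg\vp}{}\union[\coalnext{A}]^{c}(Y)\bigr)$, with the $\ext{\neg\vp}{}$ disjunct evaluated at the \emph{current} state, outside the co-move operator --- exactly matching the paper's $\beta$-decomposition $\neg\psi\con(\neg\vp\dis\neg\coalnext{A}\coal{A}\vp\until\psi)$. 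Your version $\nu Y.\,\ext{\neg\psi}{}\inter[\coalnext{A}]^{c}(Y\union\ext{\neg\vp}{})$ pushes $\neg\vp$ one step into the future and denotes a strictly smaller set: a state satisfying $\neg\vp\con\neg\psi$ all of whose successors satisfy $\vp\con\psi$ lies in the correct fixpoint (and satisfies both sides of item (3)) but not in yours. Second, the parenthetical reduction of (2) to (3) via ``$\neg\coalbox{A}\vp$ is equivalent to $\neg\coal{A}\top\until\neg\vp$'' is unavailable: as the paper notes, $\coalbox{A}$ and $\coaldiam{A}$ are not interdefinable in \ATL, precisely because negating a strategic quantifier switches which coalition holds the strategy. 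Nor is (2) simply ``the same pattern'' as (3): complementing the $\nu$-fixpoint for $\coalbox{A}\vp$ gives the $\mu$-fixpoint $\mu Y.\,\ext{\neg\vp}{}\union[\coalnext{A}]^{c}(Y)$, and extracting a co-strategy that guarantees reaching $\ext{\neg\vp}{}$ on \emph{every} run requires a rank-decreasing choice of co-moves along the (ordinal) approximants, not the memoryless unravelling of a greatest fixpoint you describe. Both defects are repairable within your framework, but as stated they are genuine gaps.
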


\begin{remark}
  Since both types of strategies yield the same semantics for \ATL, in
  the last two clauses of Theorem~\ref{thr:GorDrim}, ``perfect
  recall'' can be replaced with ``positional''.
\end{remark}

%%%%%%%%%%%%%%%%%%%%%%%%%%%%%%%%%%
%%%%%%%%%%%%%%%%%%%%%%%%%%%%%%%%%%
\section{Hintikka structures for \ATL}
\label{sec:atl-hintikka}
%%%%%%%%%%%%%%%%%%%%%%%%%%%%%%%%%%
%%%%%%%%%%%%%%%%%%%%%%%%%%%%%%%%%%

When proving completeness of the tableau procedure described in the
next section, we will make use of a new kind of semantic structures
for \ATL---namely, Hintikka structures.  The basic difference between
models and Hintikka structures is that while models specify the truth
or otherwise of every formula of the language at every state, Hintikka
structures only provide truth values of the formulae relevant to the
evaluation of a fixed formula $\theta$. Before defining Hintikka
structures for \ATL, which we, for the sake of terminological
consistency, call \fm{concurrent game Hintikka structures}, we
introduce, with a view to simplifying the subsequent presentation,
$\alpha$- and $\beta$-notation for \ATL-formulae.

%%%%%%%%%%%%%%%%%%%%%%%%%%%%%%%%%%
\subsection{$\alpha$- and $\beta$-notation for \ATL}
\label{sec:alpha-beta-atl}
%%%%%%%%%%%%%%%%%%%%%%%%%%%%%%%%%%

We divide all \ATL-formulae into primitive and non-primitive ones.

\begin{definition}
  Let $\vp$ be an \ATL-formula. Then, \vp\ is \de{primitive} if it is
  one of the following:
  \begin{itemize}
  \item $\top$;
  \item $p \in \ap$;
  \item $\neg p$ for some $p \in \ap$;
  \item $\coalnext{A} \psi$ for some formula $\psi$;
  \item $\neg \coalnext{A} \psi$ for some formula $\psi$ and $A \ne
    \agents$.
  \end{itemize}
  Otherwise, \vp\ is \de{non-primitive}.
\end{definition}

Intuitively, $\vp$ is primitive if the truth of $\vp$ at a state $s$
of a CGM cannot be reduced to the truth of any ``semantically
simpler'' formulae at $s$; otherwise, $\vp$ is non-primitive.  Note,
in particular, that $\neg p$ is not considered ``semantically
simpler'' then $p$, as the truth of the former can not be reduced to
the truth, as opposed to the falsehood, of the latter.

Following \cite{Smullyan68}, we classify all non-primitive formulae
into $\alpha$-ones and $\beta$-ones.  Intuitively, $\alpha$-formulae
are ``conjunctive'' formulae: an $\alpha$-formula is true at a state
$s$ iff two other formulae, ``conjuncts'' of $\alpha$, denoted by
$\alpha_1$ and $\alpha_2$, are true at $s$.  By contrast,
$\beta$-formulae are ``disjunctive'' formulae, true at a state $s$ iff
either of their ``disjuncts'', denoted by $\beta_1$ and $\beta_2$, is
true at $s$.  For neatness of classification, if the truth of a
non-primitive formula $\psi$ at $s$ can be reduced to the truth of
only \emph{one} simpler formula at $s$, then $\psi$ is treated as an
$\alpha$-formula; for such formulae, $\alpha_1 = \alpha_2$. The
following tables list $\alpha$- and $\beta$-formulae together with
their respective ``conjuncts'' and ``disjuncts''.

\begin{center}
  \begin{tabular}{|c|c|c|}
    \hline
    $\alpha$ & $\alpha_1$ & $\alpha_2$ \\ \hline
    $\neg \neg \vp$ & $\vp$ & $\vp$ \\
    $\neg (\vp \imp \psi)$ & $\vp$ & $\neg \psi$ \\
    $\neg \coalnext{\agents} \vp$ & $\coalnext{\emptyset} \neg \vp$ &
    $\coalnext{\emptyset} \neg \vp$ \\
    $\coalbox{A} \vp$ & $\vp$ & $\coalnext{A} \coalbox{A} \vp$ \\
    \hline
\end{tabular}
\end{center}

\begin{center}
  \begin{tabular}{|c|c|c|}
    \hline
    $\beta$ & $\beta_1$ & $\beta_2$ \\ \hline
    $\vp \imp \psi$ & $\neg \vp$ & $\psi$ \\
    $\coal{A} (\vp \until \psi)$ & $\psi$ & $\vp \con \coalnext{A} \coal{A} (\vp
    \until \psi)$ \\
    $\neg \coal{A} (\vp \until \psi)$ & $\neg \psi \con \neg \vp$ &
    $\neg \psi \con \neg \coalnext{A} \coal{A} (\vp
    \until \psi)$ \\
    $\neg \coalbox{A} \vp$ & $\neg \vp$ & $\neg \coalnext{A}
    \coalbox{A} \vp$
    \\ \hline
\end{tabular}
\end{center}

The entries for the non-modal connectives in the above tables are
motivated by the well-known classical validities.  The entries for the
strategic operators are motivated by
Corollary~\ref{cor:fixpoint}. Lastly, it can be easily checked that
\sat{M}{s}{\neg \coalnext{\agents} \vp} iff
\sat{M}{s}{\coalnext{\emptyset} \neg \vp} for every CGM \mmodel{M} and
$s \in \mmodel{M}$.

%%%%%%%%%%%%%%%%%%%%%%%%%%%%%%%%%%
\subsection{Concurrent game Hintikka structures}
\label{sec:hintikka-atl}
%%%%%%%%%%%%%%%%%%%%%%%%%%%%%%%%%%

We are now ready to define concurrent game Hintikka structures (CGHSs,
for short).  Like concurrent game models, CGHSs are based on
concurrent game frames, where different kinds of strategies may be
used, ranging from positional to perfect-recall.  As it will become
evident from the forthcoming completeness proof, in the case of basic
\ATL, which we primarily focus on in this paper, it suffices to
consider only \emph{positional} Hintikka structures. Nevertheless, we
consider, in this section, the most general case of CGHSs, based on
\emph{perfect-recall} strategies\footnote{Our reason for doing so is
  that we intend to consider, in a follow-up work, adaptations of the
  tableau procedure described herein to some important variations and
  extensions of \ATL, such as $\ATL$ with incomplete information,
  $\ATL^*$, and Game Logic (\cite{AHK02}), where positional strategies
  only do not suffice; then, the results in this section will be put
  to full use.}.

\begin{definition}
  \label{def:cghs}
  A (perfect-recall) \de{concurrent game Hintikka structure} (for
  short, CGHS) is a tuple $\hintikka{H} = (\agents, S, d, \delta, H)$,
  where
  \begin{itemize}
  \item $(\agents, S, d, \delta)$ is a concurrent game frame;
  \item $H$ is a labeling of the elements of $S$ with sets of
    \ATL-formulae that satisfy the following constraints:
    \begin{description}
    \item[H1] If $\neg \vp \in H(s)$, then $\vp \notin H(s)$;
    \item[H2] if $\alpha \in H(s)$, then $\alpha_1 \in H(s)$ and
      $\alpha_2 \in H(s)$;
    \item[H3] if $\beta \in H(s)$, then $\beta_1 \in H(s)$ or $\beta_2
      \in H(s)$;
    \item[H4] if $\coalnext{A} \vp \in H(s)$, then there exists an
      $A$-move $\move{A} \in \moves{A}{s}$ such that $\vp \in H(s')$
      for all $s' \in out(s, \move{A})$;
    \item[H5] if $\neg \coalnext{A} \vp \in H(s)$, then there exists a
      co-$A$-move $\comove{A} \in D^c_A (s)$ such that $\neg \vp \in
      H(s')$ for all $s' \in out(s, \comove{A})$;
    \item[H6] if $\coal{A} \vp \until \psi \in H(s)$, then there
      exists a perfect-recall $A$-strategy \str{A} such that, for
      all $\lambda \in out(s, \str{A})$, there exists a position $i
      \geq 0$ such that $\psi \in H(\lambda[i]$) and $\vp \in
      H(\lambda[j])$ holds for all positions $0 \leq j < i$;
    \item[H7] if $\neg \coalbox{A} \vp \in H(s)$, then there exists a
      perfect-recall co-$A$-strategy \costr{A} such that, for every
      $\lambda \in out(s, \costr{A})$, there exists position $i
      \geq 0$ with $\neg \vp \in H(\lambda[i])$.
    \end{description}
  \end{itemize}
\end{definition}

\begin{remark}
  To obtain the definition of positional CGHS, all one has to do is
  replace ``perfect-recall'' with ``positional'' in clauses (H6) and
  (H7) of Definition~\ref{def:cghs}.
\end{remark}

\begin{definition}
  Let $\theta$ be an \ATL-formula and $\hintikka{H} = (\agents, S, d,
  \delta, H)$ be a CGHS.  We say that \hintikka{H} is a \de{concurrent
    game Hintikka structure for $\theta$} if $\theta \in H(s)$ for
  some $s \in S$.
\end{definition}

Hintikka structures can be thought of as representing a class of
models on the set of states $S$ that, for every $s \in S$, agree on
the formulae in $H(s)$ (that is, make exactly the same formulae in
$H(s)$ true). Models themselves can be thought of as \emph{maximal}
Hintikka structures, whose states are labeled with maximally
consistent sets of formulae.  More precisely, given a CGM $\mmodel{M}
= (\agents, S, d, \delta, \ap, L)$, we can define the extended
labeling function $L^+_{\mmodel{M}}$ by $L^+_{\mmodel{M}} (s) =
\crh{\vp}{\sat{M}{s}{\vp}}$, where $\vp$ ranges over all
\ATL-formulae, and the resulting structure $(\agents, S, d, \delta,
L^+_{\mmodel{M}})$ will be a Hintikka structure. This immediately
gives rise to the following theorem.

\begin{theorem}
  \label{from_models_to_hintikka}
  Let $\theta$ be an \ATL-formula.  Every CGM $\mmodel{M} = (\agents,
  S, d, \delta, \ap, L)$ satisfying $\theta$ induces a CGHS
  $\hintikka{H} = (\agents, S, d, \delta, L^+_{\mmodel{M}})$ for
  $\theta$, where $L^+_{\mmodel{M}}$ is the extended labeling function
  on \mmodel{M}.
\end{theorem}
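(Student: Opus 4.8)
The plan is to verify directly that $\hintikka{H} = (\agents, S, d, \delta, L^+_{\mmodel{M}})$ is a CGHS for $\theta$. The ``for $\theta$'' part is immediate: since $\mmodel{M}$ satisfies $\theta$, there is a state $s \in S$ with $\sat{M}{s}{\theta}$, and hence $\theta \in L^+_{\mmodel{M}}(s)$ by the definition of the extended labeling function. It therefore remains to check that $L^+_{\mmodel{M}}$ satisfies the constraints (H1)--(H7) of Definition~\ref{def:cghs} at every state $s \in S$. Throughout, one reads ``$\chi \in L^+_{\mmodel{M}}(t)$'' and ``$\sat{M}{t}{\chi}$'' as interchangeable by the very definition of $L^+_{\mmodel{M}}$.

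First I would dispatch the ``Boolean'' constraints. (H1) is exactly the satisfaction clause for negation: if $\neg\vp \in L^+_{\mmodel{M}}(s)$ then $\notsat{M}{s}{\vp}$, so $\vp \notin L^+_{\mmodel{M}}(s)$. For (H2) and (H3) I would invoke the semantic equivalences that justify the entries of the $\alpha$/$\beta$ tables: for each $\alpha$-formula, $\sat{M}{s}{\alpha}$ iff $\sat{M}{s}{\alpha_1}$ and $\sat{M}{s}{\alpha_2}$; for each $\beta$-formula, $\sat{M}{s}{\beta}$ iff $\sat{M}{s}{\beta_1}$ or $\sat{M}{s}{\beta_2}$. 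For the non-modal rows these are the familiar classical validities; for the rows headed by $\coalbox{A}\vp$ and $\coal{A}\vp\until\psi$ they are precisely the equivalences of Corollary~\ref{cor:fixpoint}; and for $\neg\coalnext{\agents}\vp$ it is the observation recorded in Section~\ref{sec:alpha-beta-atl} that $\sat{M}{s}{\neg\coalnext{\agents}\vp}$ iff $\sat{M}{s}{\coalnext{\emptyset}\neg\vp}$. In each case membership in $L^+_{\mmodel{M}}(s)$ transfers from $\alpha$ (resp. $\beta$) to the appropriate conjuncts (resp. a disjunct).

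Next, the ``modal'' constraints (H4)--(H7), which concern the primitive formula $\coalnext{A}\vp$, its negation, and the two eventuality forms. For (H4): if $\coalnext{A}\vp \in L^+_{\mmodel{M}}(s)$ then $\sat{M}{s}{\coalnext{A}\vp}$, and the satisfaction clause for $\coalnext{A}$ in Definition~\ref{def:atl-satisfaction} supplies an $A$-move $\move{A} \in \moves{A}{s}$ with $\sat{M}{s'}{\vp}$, i.e. $\vp \in L^+_{\mmodel{M}}(s')$, for all $s' \in out(s,\move{A})$. For (H5) I would use Theorem~\ref{thr:GorDrim}(1): from $\sat{M}{s}{\neg\coalnext{A}\vp}$ one obtains a co-$A$-move $\comove{A} \in D^c_A(s)$ with $\neg\vp \in L^+_{\mmodel{M}}(s')$ for every $s' \in out(s,\comove{A})$. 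For (H6) the satisfaction clause for $\coal{A}\until$ already yields a perfect-recall $A$-strategy $\str{A}$ such that every $\lambda \in out(s,\str{A})$ has a position $i \geq 0$ with $\sat{M}{\lambda[i]}{\psi}$ and $\sat{M}{\lambda[j]}{\vp}$ for all $0 \leq j < i$, which is literally (H6) once satisfaction is rewritten as $H$-membership. Finally, (H7) follows in the same fashion from Theorem~\ref{thr:GorDrim}(2), which turns $\sat{M}{s}{\neg\coalbox{A}\vp}$ into a perfect-recall co-$A$-strategy $\costr{A}$ witnessing the required eventuality.

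There is no genuine obstacle: the argument is a clause-by-clause translation between the satisfaction relation of $\mmodel{M}$ and set membership in $L^+_{\mmodel{M}}$, and all of its ingredients—Definition~\ref{def:atl-satisfaction}, Corollary~\ref{cor:fixpoint}, and Theorem~\ref{thr:GorDrim}—are already available. This is, of course, the easy direction; the substantive work lies in the converse, Theorem~\ref{thr:from_hintikka_to_models}, of extracting an actual CGM from a (possibly positional) CGHS.
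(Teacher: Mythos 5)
Your proof is correct and follows exactly the route the paper intends: the paper's own proof is the one-line remark ``Straightforward, using Theorem~\ref{thr:GorDrim} for (H5) and (H7)'', and your clause-by-clause verification of (H1)--(H7) via Definition~\ref{def:atl-satisfaction}, the $\alpha$/$\beta$ table equivalences (Corollary~\ref{cor:fixpoint}), and Theorem~\ref{thr:GorDrim} is precisely the argument being elided. Nothing to add.
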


\begin{proof}
  Straightforward, using Theorem~\ref{thr:GorDrim} for (H5) and (H7).
\end{proof}

Conversely, every Hintikka structure for a formula $\theta$ can be
expanded to a maximal one---that is, a model---by declaring, for every
$s \in S$, all atomic propositions outside $H(s)$ to be false at $s$.
To prove this claim, however, we need a few auxiliary definitions.

\begin{definition}
  Let $\hintikka{H} = (\agents, S, d, \delta, H)$ be a CGHS.  A
  \de{run} of length $m$, where $1 \leq m < \omega$, in \hintikka{H}
  is a sequence $\lambda = s_0, \ldots, s_{m-1}$ of elements of $S$
  such that, for all $0 \leq i < m - 1$, the state $s_{i+1}$ is a
  successor of the state $s_i$.  Numbers $0$ through $m-1$ are called
  \de{positions} of $\lambda$. The length of $\lambda$, defined as the
  number of positions in $\lambda$, is denoted by $|\lambda|$.  For
  each position $0 \leq i < m$, we denote by $\lambda[i]$ the $i$th
  state of $\lambda$.  A \de{finite run} in \hintikka{H} is a run of
  length $m$ for some $m$ with $1 \leq m < \omega$. A finite run with
  $\lambda[0] = s$ is a \de{finite $s$-run}.
\end{definition}

\begin{definition}
  \label{def:compliant-finite-s-runs}
  Let \hintikka{H} be a CGHS, $\lambda$ be a finite
  $s$-run in \hintikka{H}, and $\costr{A}[m]$ be an $m$-recall
  co-$A$-strategy on the frame of \hintikka{H}, where $1 \leq m <
  \omega$.  We say that $\lambda$ is \de{compliant with
    $\costr{A}[m]$} if
  \begin{itemize}
  \item $|\lambda| = m + 1$;
  \item $\lambda[i+1] \in out (\lambda[i],
    \costr{A}[m](\lambda[0,i]))$ holds for all $0 \leq i < m$.
  \end{itemize}
\end{definition}

\begin{definition}
  Let \hintikka{H} be a CGHS, let $\lambda$ be an (infinite) $s$-run
  in \hintikka{H} and let \costr{A} be a perfect-recall
  co-$A$-strategy on the frame of \hintikka{H}. We say that $\lambda$
  is \de{compliant with $\costr{A}$} if $\lambda \in out (s,
  \costr{A})$.
\end{definition}

\begin{theorem}
  \label{thr:from_hintikka_to_models}
  Let $\theta$ be an \ATL-formula.  Every CGHS $\hintikka{H} =
  (\agents, S, d, \delta, H)$ for $\theta$ can be expanded to a CGM
  satisfying $\theta$.
\end{theorem}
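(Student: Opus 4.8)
The plan is to leave the concurrent game frame of $\hintikka{H}$ untouched and merely add a labelling read off from $H$. So, given $\hintikka{H} = (\agents, S, d, \delta, H)$ with $\theta \in H(s_0)$, put $\mmodel{M} = (\agents, S, d, \delta, \ap, L)$ with $L(s) = H(s) \cap \ap$. Everything then reduces to a \emph{truth lemma}: for every \ATL-formula $\chi$ and every $s \in S$, if $\chi \in H(s)$ then $\sat{M}{s}{\chi}$. Applying this to $\theta$ at $s_0$ shows that $\mmodel{M}$ satisfies $\theta$. Note that the lemma automatically covers negated formulas, so the usual companion statement ``$\neg\chi \in H(s)$ implies $\notsat{M}{s}{\chi}$'' is a special case and need not be carried separately.

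The truth lemma is proved by induction on the structure of $\chi$. The base cases $\chi \in \{\top, p, \neg p\}$ are immediate from the definition of $L$ and (H1). The boolean cases, and also $\chi = \neg\neg\vp$ and $\chi = \neg(\vp \imp \psi)$, follow routinely from (H1)--(H3) and the induction hypothesis, since there the relevant $\alpha$- or $\beta$-components are genuinely smaller than $\chi$. For $\chi = \coalnext{A}\vp$ one applies (H4) and the induction hypothesis to verify the satisfaction clause of Definition~\ref{def:atl-satisfaction} directly; for $\chi = \neg\coalnext{A}\vp$ with $A \neq \agents$, one uses (H5), the induction hypothesis, and Theorem~\ref{thr:GorDrim}(1). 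The only syntactically awkward case is $\chi = \neg\coalnext{\agents}\vp$, whose $\alpha$-rule merely swaps $\neg$ and $\coalnext{}$: here (H2) yields $\coalnext{\emptyset}\neg\vp \in H(s)$, and rather than recursing one argues directly that, by (H4) and the induction hypothesis on $\neg\vp$, the formula $\neg\vp$ holds at every $s$-successor, which --- given the equivalence $\sat{M}{s}{\neg\coalnext{\agents}\vp} \Leftrightarrow \sat{M}{s}{\coalnext{\emptyset}\neg\vp}$ already noted above --- is exactly $\sat{M}{s}{\chi}$.

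The substantial work is in the strategic cases, which split in two. The \emph{eventualities} are handled by reading the required strategy straight off the Hintikka constraints: if $\coal{A}\vp\until\psi \in H(s)$ then (H6) supplies a perfect-recall $A$-strategy whose outcome runs all realise $\psi$ with $\vp$ holding strictly before, and the induction hypothesis on the proper subformulas $\vp,\psi$ upgrades ``$\in H(\cdot)$'' to ``$\Vdash$'', which is precisely the satisfaction clause; dually, if $\neg\coalbox{A}\vp \in H(s)$ then (H7), the induction hypothesis on $\neg\vp$, and Theorem~\ref{thr:GorDrim}(2) give $\sat{M}{s}{\neg\coalbox{A}\vp}$. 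The \emph{safety-type} formulas $\coalbox{A}\vp$ and $\neg\coal{A}\vp\until\psi$ have no dedicated constraint, so I would extract them from the local constraints using the fixpoint characterisations of Theorem~\ref{thr:fixpoint}. For $\coalbox{A}\vp$: the set $Z = \crh{t \in S}{\coalbox{A}\vp \in H(t)}$ satisfies $Z \subseteq \ext{\vp}{}$ (by (H2) and the induction hypothesis on $\vp$) and $Z \subseteq [\coalnext{A}](Z)$ (by (H2), which places $\coalnext{A}\coalbox{A}\vp$ in $H(t)$, and (H4)), hence $Z$ is a post-fixpoint of the monotone operator $[\ext{\vp}{} \inter \coalnext{A}]$ and so, by Knaster--Tarski, $Z \subseteq \nu X.[\ext{\vp}{} \inter \coalnext{A}](X) = \ext{\coalbox{A}\vp}{}$. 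For $\neg\coal{A}\vp\until\psi$ one argues symmetrically that, for $N = \crh{t \in S}{\neg\coal{A}\vp\until\psi \in H(t)}$, the set $\compl{N}$ is a pre-fixpoint of $X \mapsto \ext{\psi}{} \union [\ext{\vp}{} \inter \coalnext{A}](X)$ --- using (H2)/(H3) and the induction hypothesis on $\neg\psi,\neg\vp$ to rule out $N \inter \ext{\psi}{}$ and $N \inter \ext{\vp}{}$, and (H5) applied to $\neg\coalnext{A}\coal{A}\vp\until\psi$ (with a routine adjustment when $A = \agents$) to force, inside any purported witnessing $A$-move, a successor lying in $N$ --- whence $\ext{\coal{A}\vp\until\psi}{} = \mu X.(\ldots) \subseteq \compl{N}$, i.e. $N \subseteq \ext{\neg\coal{A}\vp\until\psi}{}$. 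Equivalently --- and closer to the apparatus of compliant finite runs set up just above --- one can build an explicit \emph{positional} strategy, resp.\ co-strategy, that keeps the formula in the labels along every compliant run and switches to an arbitrary action once $\neg\vp$ has been met; this is where positional CGHSs come in.

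I expect the main obstacle to be precisely these two safety-type cases: since no single constraint witnesses them, one must either run the fixpoint/coinduction argument above or carry out the explicit strategy construction, and in the latter one must be careful with the bookkeeping for $\neg\coal{A}\vp\until\psi$ --- the (H3)-induced dichotomy ``$\neg\vp$ has already occurred'' versus ``keep following the (H5)-co-move''. A secondary point is organising the induction so that it stays well-founded: the long-term operators must be treated through (H4)--(H7) and Theorems~\ref{thr:fixpoint} and~\ref{thr:GorDrim} rather than unfolded via (H2)/(H3), and the lone case $\neg\coalnext{\agents}\vp$ needs the small direct argument indicated above.
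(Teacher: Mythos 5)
Your proposal is correct, and its overall architecture (label $L(s)=H(s)\cap\ap$, then a truth lemma by induction using (H1)--(H5) for the local cases and (H6), (H7) together with Theorem~\ref{thr:GorDrim} for the eventualities) matches the paper's. Where you genuinely diverge is in the two safety-type cases $\coalbox{A}\vp$ and $\neg\coal{A}\vp\until\psi$, which you correctly identify as the crux. The paper handles $\neg\coal{A}\vp\until\psi$ by explicitly constructing a perfect-recall co-$A$-strategy as a limit of its finite prefixes $\costr{A}[n]$, maintaining the invariant that along every compliant run either $\neg\vp$ has already appeared (with $\neg\psi$ throughout) or $\neg\psi$ and $\neg\coalnext{A}\coal{A}\vp\until\psi$ persist; the case $\coalbox{A}\vp$ is then declared analogous. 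Your primary route instead shows that $\crh{t}{\coalbox{A}\vp\in H(t)}$ is a post-fixpoint of $[\ext{\vp}{}\inter\coalnext{A}]$ and that the complement of $\crh{t}{\neg\coal{A}\vp\until\psi\in H(t)}$ is a pre-fixpoint of the until-operator, and concludes by Knaster--Tarski and Theorem~\ref{thr:fixpoint}; the step where (H5) (or, when $A=\agents$, the $\alpha$-rule plus (H4)) forces a successor labelled with $\neg\coal{A}\vp\until\psi$ inside the outcome of \emph{every} candidate $A$-move is exactly right and is the semantic content of the paper's invariant. The fixpoint argument is shorter and avoids the delicate bookkeeping of extending $\costr{A}[n]$ to $\costr{A}[n+1]$ coherently, at the price of leaning on Theorem~\ref{thr:fixpoint} (which the paper proves elsewhere and which already encapsulates the strategy-stitching); the paper's explicit construction is self-contained and generalises more readily to settings where the fixpoint characterisation is unavailable. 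Two minor points: your remark that positional CGHSs ``come in'' here is not needed for this theorem, which concerns perfect-recall CGHSs (a positional witness (co-)strategy is in any case enough, since it induces a perfect-recall one with the same outcome set); and your folding of the negated formulas into a single one-sided truth lemma is fine provided the induction is on formula length rather than the subformula relation, since e.g.\ the case $\neg(\vp\imp\psi)$ invokes the hypothesis for $\neg\psi$, which is not a subformula --- the paper sidesteps this by carrying the two implications simultaneously.
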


\begin{proof}
  Let $\hintikka{H} = (\agents, S, d, \delta, H)$ be a CGHS for
  $\theta$.  To obtain a CGM $\mmodel{M} = (\agents, S, d, \delta,
  \ap, L)$, we define the labeling function $L$ as follows: $L(s) =
  H(s) \inter \ap$, for every $s \in S$.

  To establish the statement of the theorem, we prove, by induction on
  the structure of formula $\chi$ that, for every $s \in S$ and every
  $\chi$, the following claim holds:
  \begin{displaymath}
    \chi \in H(s) \text{ implies } \sat{M}{s}{\chi} \textit{ and } \neg
    \chi \in H(s) \text{ implies } \sat{M}{s}{\neg \chi}.
  \end{displaymath}
  Let $\chi$ be some $p \in \ap$.  Then, $p \in H(s)$ implies $p \in
  L(s)$ and, thus, \sat{M}{s}{p}; if, on the other hand, $\neg p \in
  H(s)$, then due to (H1), $p \notin H(s)$ and thus $p \notin L(s)$;
  hence, \sat{M}{s}{\neg p}.

  Assume that the claim holds for all subformulae of $\chi$; then, we have
  to prove that it holds for $\chi$, as well.

  Suppose that $\chi$ is $\neg \vp$.  If $\neg \vp \in H(s)$, then the
  inductive hypothesis immediately gives us $\sat{M}{s}{\neg \vp}$;
  if, on the other hand, $\neg \neg \vp \in H(s)$, then by virtue of
  (H2), $\vp \in H(s)$ and hence, by inductive hypothesis,
  $\sat{M}{s}{\vp}$ and thus $\sat{M}{s}{\neg \neg \vp}$.

  The cases of $\chi = \vp \imp \psi$ and $\chi = \coalnext{A} \psi$
  and are straightforward, using (H2)--(H5).

  Suppose that $\chi = \coal{A} \vp \until \psi$.  If $\coal{A} \vp
  \until \psi \in H(s)$, then the desired conclusion immediately
  follows from (H6) and the inductive hypothesis.

  Assume now that $\neg \coal{A} \vp \until \psi \in H(s)$.  In view
  of the inductive hypothesis and Theorem~\ref{thr:GorDrim}, it
  suffices to show that there exists a perfect-recall co-$A$-strategy
  $\costr{A}$ such that $\lambda \in out(s, \costr{A})$ implies that,
  if there exists $i \geq 0$ with $\psi \in H(\lambda[i])$, then there
  exists $0 \leq j < i$ with $\neg \vp \in H(\lambda[j])$.

  We define the required $\costr{A}$ by induction on the length of
  sequences in its domain.  This amounts to defining finite prefixes
  of $\costr{A}$ for every $1 \leq n < \omega$---the restrictions of
  $\costr{A}$ to sequences of states of length $\leq n$.  Clearly, the
  finite prefix of $\costr{A}$ of length $n$ is an $n$-recall
  co-$A$-strategy.  We only explicitly define the value of
  $\costr{A}[n](\lambda)$, where $|\lambda| = n$, if $\lambda$ is a
  finite $s$-run compliant with $\costr{A}[n-1]$ (recall
  Definition~\ref{def:compliant-finite-s-runs}), where
  $\costr{A}[n-1]$ is a strategy defined at the previous step of the
  induction.  The values of $\costr{A}[n](\lambda)$ for any other
  sequences of length $n$ are immaterial. The only other constraint
  that we have to take into account when defining $\costr{A}[n]$ is
  that, if $\costr{A}[n]$ extends $\costr{A}[m]$, then the values of
  $\costr{A}[m]$ and $\costr{A}[n]$ should agree on all the sequences
  of length $m$.  Alongside defining $\costr{A}[n]$ for every $1 \leq
  n < \omega$, we prove that the following invariant property holds:
  If $\lambda \in out(s, \costr{A}[n])$, then
  \begin{center}
    \(
    \begin{array}{lll}

      &(i) & \text{\emph{Either} there exists a  position } 0 \leq i \leq
      n, \text{ such that } \\
      (\dag) \hspace{1cm} & & \neg \vp \in H(\lambda[i])
      \text{ and } \neg \psi \in H(\lambda[j])
      \text{ for all } 0 \leq j \leq i, \\
      & (ii) & \textit {or } \neg  \psi,
      \neg \coalnext{A} \coal{A} \vp \until \psi \in H(\lambda[i])
      \text{ for all } 0 \leq i \leq n.
    \end{array}
    \)
  \end{center}
  Clearly, if every finite prefix of $\costr{A}$ satisfies (\dag),
  $\costr{A}$ is the required co-$A$-strategy.

  We start by defining $\costr{A}[1]$.  There is only one $s$-run of
  length $1$, namely $(s)$.  As $\neg \coal{A} \vp \until \psi \in
  H(s)$, in view of (H3) and (H2), either $\neg \psi, \neg \vp \in
  H(s)$ or $\neg \psi, \neg \coalnext{A} \coal{A} \vp \until \psi \in
  H(s)$. In the former case any co-$A$-move will produce a
  co-$A$-strategy $\costr{A}[1]$ such that, if $\lambda \in out(s,
  \costr{A}[1])$, then $\lambda$ satisfies (\dag) (i). In the latter
  case, (H5) guarantees that there exists a co-$A$-move $\comove{A}
  \in D^c_A (s)$ such that $\neg \coal{A} \vp \until \psi \in H(s')$
  for all $s' \in out(s, \comove{A})$.  This, together with (H3) and
  (H2) guarantees that $\neg \psi, \neg \coalnext{A} \coal{A} \vp
  \until \psi \in H(s')$ for every $s' \in out(s, \comove{A})$, which,
  as $\neg \psi \in H(s)$, ensures that (\dag) (ii) holds for any
  $\lambda \in out(s, \costr{A}[1])$.  Thus, in either case, (\dag)
  holds for every $\lambda \in out(s, \costr{A}[1])$.

  Next, inductively assume that, if $\lambda$ is an $s$-run compliant
  with $\costr{A}[n]$, then (\dag) holds for $\lambda$.  We need to
  show how to extend $\costr{A}[n]$ to $\costr{A}[n+1] \supset
  \costr{A}[n]$ in the (\dag)-preserving way.  If (\dag) (i) holds for
  every $\lambda$ satisfying the condition of the inductive
  hypothesis, then obviously, any co-$A$-move will do.  Otherwise,
  (\dag) (ii) holds for every such $\lambda$; then, $\costr{A}[n+1]$
  can be obtained from $\costr{A}[n]$ as in the second part of the
  ``basis case'' argument.  For all other sequences $\kappa$ of length
  $n+1$ (i.e., those that do not start with $s$ or are not compliant
  with $\costr{A}[n]$), the value $\costr{A}[n](\kappa)$ can be
  defined arbitrarily.  For all sequences $\kappa$ of length $\leq n$,
  we stipulate $\costr{A}[n+1](\kappa) = \costr{A}[n](\kappa)$.  This
  completes the definition of $\costr{A}[n+1]$.  As we have seen, if
  $\lambda$ is an $s$-run compliant with $\costr{A}[n+1]$, then (\dag)
  holds for $\lambda$.

  The case of $\neg \coalbox{A} \vp \in H(s)$ is straightforward using
  (H7), while the case of $\coalbox{A} \vp \in H(s)$ can be proved in
  a way analogous to the case of $\neg \coal{A} \vp \until \psi$,
  using suitable definitions of compliancy of (finite and infinite)
  runs with strategies.
\end{proof}

Theorems~\ref{from_models_to_hintikka} and
\ref{thr:from_hintikka_to_models} taken together mean that, from the
point of view of a single \ATL-formula, satisfiability in a
(perfect-recall) model and in a (perfect-recall) Hintikka structure
are equivalent.

%%%%%%%%%%%%%%%%%%%%%%%%%%%%%%%%%%
%%%%%%%%%%%%%%%%%%%%%%%%%%%%%%%%%%
\section{Terminating tableaux for tight \ATL-satisfiability}
\label{sec:atl-tableau-procedure}
%%%%%%%%%%%%%%%%%%%%%%%%%%%%%%%%%%
%%%%%%%%%%%%%%%%%%%%%%%%%%%%%%%%%%

In the current section, we present a tableau method for testing
\ATL-formulae for tight satisfiability.

Traditionally, tableau techniques work by decomposing the formula
whose satisfiability is being tested into ``semantically simpler''
formulae. In the classical propositional case (\cite{Smullyan68}),
``semantically simpler'' implies ``smaller'', which by itself
guarantees termination of the procedure in a finite number of
steps. Another feature of the tableau method for the classical
propositional logic is that this decomposition into semantically
simpler formulae results in a tree representing an exhaustive search
for a model---or, to be more precise, a Hintikka set (the classical
analogue of Hintikka structures)---for the input formula.  If at least
one branch of the tree produces a Hintikka set for the input formula,
the search has succeeded and the formula is pronounced
satisfiable\footnote{Even though this tree is usually built step-by-step by decomposing one formula at a time (see~\cite{Smullyan68} and
  \cite{Wolper85}), it can be compressed into a simple
  tree---i.e., a tree with a single interior node---whose root is the
  set containing only the input formula and whose leaves are
  all minimal downward-saturated extensions (to be defined later
  on; see Definitions~\ref{def:downward-saturated} and
  \ref{def:min-downward-saturated-extension}) of the root. We will use
  this, more compact, approach in our tableaux.}.

These two defining features of the classical tableau method do not
emerge unscathed when the method is applied to logics containing
fixpoint operators, such as \ATL\ (in this respect, the case of \ATL\
is similar to those of \LTL\ and \CTL).

Firstly, decomposition of \ATL-formulae into ``semantically simpler''
ones, which, just as in the classical case, is carried out by breaking
up $\alpha$- and $\beta$-formulae into their respective ``conjuncts''
and ``disjuncts,'' does not always produce smaller formulae, as can be
seen from the tables given in section \ref{sec:alpha-beta-atl}.
Therefore, we will have to take special precautions to ensure that the
procedure terminates (in our case, as in~\cite{Wolper85}, this will
involve the use of the so-called \emph{prestates}).

Secondly, in the classical case the only reason why it might turn out
to be impossible to produce a Hintikka set for the input formula is
that every attempt to build such a set results in a collection of
formulae containing a patent inconsistency (from here on, by
\de{patent inconsistency} we mean a pair of formulas of the form
$\vp$, $\neg \vp$)\footnote{Notice that this condition implies but is
  not, in general, equivalent to propositional inconsistency.}.  In the
case of \ATL, there are two other reasons for a tableau not to
correspond to any Hintikka structure for the input formula. First,
applying decomposition rules to eventualities---formulae whose truth
conditions require that some formula ($\psi$ in the case of the
eventuality $\coal{A} \vp \until \psi$, and $\neg \vp$ in the case of
the eventuality $\neg \coalbox{A} \vp$) ``eventually'' becomes true;
the tableau analog of this we will refer to as \fm{realization of an
  eventuality},---one can indefinitely postpone their realization by
always choosing the ``disjunct'' (notice that both eventualities are
$\beta$-formulas) ``promising'' that the realization will happen
further down the line, this ``promise'' never being
fulfilled. Therefore, in addition to not containing patent
inconsistencies, ``good'' \ATL\ tableaux should not contain sets with
unrealized eventualities.  Yet another reason for the resultant
tableau not to represent a Hintikka structure is that some sets do not
have all the successors they would be required to have in a
corresponding Hintikka structure.

Coming back to the realization of eventualities, it should be noted
that, in a Hintikka structure for the input formula, all the
eventualities belonging to the labels of its states have to be
realized, and different eventualities can place different demands on
the labels of states of a Hintikka structure. Fortunately, in the case
of \ATL\ (just like in those of \LTL\ and \CTL\ and unlike, for
example, those of Parikh's game logic~\cite{PaulyPar03} and
propositional $\mu$-calculus~\cite{BradStir07}), eventualities can be
``taken on'' one at a time: we can ensure, and this lies at the heart
of our completeness proof, that having realized eventualities one by
one, we can then assemble a Hintikka structure out of the ``building
blocks'' realizing single eventualities.  This technique resembles the
mosaic method used to prove decidability of a variety of modal and
temporal logics (see, for example, \cite{MMR00}).

%%%%%%%%%%%%%%%%%%%%%%%%%%%%%%%%%%
\subsection{Brief description of the tableau procedure}
\label{sec:procedure}
%%%%%%%%%%%%%%%%%%%%%%%%%%%%%%%%%%

In essence, the tableau procedure for testing an \ATL-formula $\theta$
for satisfiability is an attempt to construct a non-empty graph
$\tableau{T}^{\theta}$, called a \fm{tableau}, representing all
possible concurrent game Hintikka structures for $\theta$.  If the
attempt is successful, $\theta$ is pronounced satisfiable; otherwise,
it is declared unsatisfiable. (As this whole section is exclusively
concerned with tight satisfiability, whenever we use the word
``satisfiable'' or any derivative thereof, we mean the tight variety;
another reason to keep the language generic is that---as we shall see
later on---the basic ideas transfer smoothly over to other species of
satisfiability).

The tableau procedure consists of three major phases: \fm{construction
  phase}, \fm{prestate elimination phase}, and \fm{state elimination
  phase}.  Accordingly, we have three types of tableau rules:
construction rules, a prestate elimination rule, and state elimination
rules.  The procedure itself essentially specifies---apart from the
starting point of the whole process---in what order and under what
circumstances these rules should be applied.

During the construction phase, the construction rules are used to
produce a directed graph $\tableau{P}^{\theta}$---referred to as the
\emph{pretableau} for $\theta$---whose set of nodes properly contains
the set of nodes of the tableau $\tableau{T}^{\theta}$ that we are
ultimately building.  Nodes of $\tableau{P}^{\theta}$ are sets of
\ATL-formulae, some of which---referred to as
\fm{states}\footnote{From here on, the term ``state'' is used in two
  different meanings: as ``state'' of the (pre)tableaux---which is a
  set of \ATL-formulas satisfying certain conditions, to be stated
  shortly, ---and as a ``state'' of a semantic structure (frame,
  model, or Hintikka structure).  Usually, the context will determine
  explicitly which of these we mean; when the context leaves room for
  ambiguity, we will explicitly mention what kind of states we mean.}
--- are meant to represent states (whence the name) of a Hintikka
structure, while others---referred to as \fm{prestates}---fulfill a
purely technical role of helping to keep $\tableau{P}^{\theta}$
finite.  During the prestate elimination phase, we create a smaller
graph $\tableau{T}_0^{\theta}$ out of
$\tableau{P}^{\theta}$---referred to as the \fm{initial tableau for
  $\theta$}---by eliminating all the prestates of
$\tableau{P}^{\theta}$ (and tweaking with its edges) since prestates
have already fulfilled their function: as we are not going to add any
more nodes to the graph built so far, the possibility of producing an
infinite structure is no longer a concern.  Lastly, during the state
elimination phase, we remove from $\tableau{T}_0^{\theta}$ all the
states, if any, that cannot be satisfied in any CGHS, for one of the
following three reasons: either they are inconsistent, or contain
unrealizable eventualities, or do not have all the successors needed
for their satisfaction.  This results in a (possibly empty) subgraph
$\tableau{T}^{\theta}$ of $\tableau{T}_0^{\theta}$, called the
\de{final tableau for $\theta$}.  Then, if we have some state $\Delta$
in $\tableau{T}^{\theta}$ containing $\theta$, we pronounce $\theta$
satisfiable; otherwise, we declare $\theta$ unsatisfiable.

%%%%%%%%%%%%%%%%%%%%%%%%%%%%%%%%%%
\subsection{Construction phase} \label{sec:atl-construction_rules}
%%%%%%%%%%%%%%%%%%%%%%%%%%%%%%%%%%

As already mentioned, at the construction phase, we build the
pretableau $\tableau{P}^{\theta}$ --- a directed graph whose nodes are
sets of \ATL-formulae, coming in two varieties: \fm{states} and
\fm{prestates}.  Intuitively, states are meant to represent states of
CGHSs, while prestates are ``embryo states'', which will in the course
of the construction be ``unwound'' into states.  Technically, states
are downward saturated, while prestates do not have to be so.

\begin{definition}
  \label{def:downward-saturated}
  Let $\D$ be a set of \ATL-formulae.  We say that $\D$ is
  \de{downward saturated} if the following conditions are satisfied:
  \begin{itemize}
  \item if $\alpha \in \D$, then $\alpha_1 \in \D$ and $\alpha_2 \in
    \D$;
  \item if $\beta \in \D$, then $\beta_1 \in \D$ or $\beta_2 \in \D$.
  \end{itemize}
\end{definition}

Moreover, $\tableau{P}^{\theta}$ will contain two types of edge.  As
has already been mentioned, tableau techniques usually work by setting
in motion an exhaustive search for a Hintikka structure for the input
formula; one type of edge, depicted by unmarked double arrows
$\brancharrow$, will represent this exhaustive search dimension of our
tableaux.  Exhaustive search looks for all possible alternatives, and
in our tableaux the alternatives will arise when we unwind prestates
into states; thus, when we draw an unmarked arrow from a prestate \G\
to states $\D$ and $\D'$ (depicted as $\G \brancharrow \D$ and $\G
\brancharrow \D'$, respectively), this intuitively means that, in any
CGHS, a state satisfying \G\ has to satisfy at least one of $\D$ and
$\D'$.

Another type of edge represents transitions in CGHSs effected by move
vectors.  Accordingly, this type of edge will be represented in
pretableaux by single arrows marked with
$\card{\agents_{\theta}}$-tuples $\move{}$ of numbers, each number
intuitively representing an $a$-move for some $a \in
\agents_{\theta}$.  Intuitively, we think of these
$\card{\agents_{\theta}}$-tuples as move vectors.  Thus, if we draw an
arrow marked by $\move{}$ from a state $\D$ to a prestate $\G$
(depicted as $\D \movearrow \G$), this intuitively means that, in any
CGHS represented by the tableau we are building, from a state
satisfying $\D$ we can move along $\move{}$ to a state satisfying
$\G$.

It should be noted that, in the pretableau, we never create in one go
full-fledged successors for states, which is to say we never draw a
marked arrow from state to state; such arrows always go from states to
prestates.  On the other hand, unmarked arrows connect prestates to
states.  Thus, the whole construction of the pretableau alternates
between going from prestates to states along edges represented by
double unmarked arrows and going from states to prestates along the
edges represented by single arrows marked by ``move vectors''. This
cycle has, however, to start somewhere.

The tableau procedure for testing satisfiability of $\theta$ starts
off with the creation of a single prestate $\set{\theta}$. Thereafter,
a pair of construction rules are applied to the part of the pretableau
created thus far: one of the rules, \Rule{SR}, specifies how to unwind
prestates into states; the other, \Rule{Next},---how to obtain
``successor'' prestates from states.  To state \Rule{SR}, we need the
following definition.

\begin{definition}
  \label{def:min-downward-saturated-extension}
  Let $\G$ and $\D$ be sets of \ATL-formulae.  We say that $\D$ is a
  \de{minimal downward saturated extension of $\G$} if the following
  holds:
  \begin{itemize}
  \item $\G \subseteq \D$;
  \item $\D$ is downward saturated;
  \item there is no downward saturated set $\D'$ such that $\G
    \subseteq \D' \subset \D$.
  \end{itemize}
\end{definition}

Note that $\G$ can be a minimal downward saturated extension of
itself.

We now state the first construction rule.

\bigskip

\Rule{SR} Given a prestate $\G$, do the following:
\begin{enumerate}
\item add to the pretableau all the minimal downward saturated
  extensions $\D$ of $\G$ as \de{states};

\item for each of the so obtained states $\D$, if $\D$ does not contain
  any formulae of the form $\coalnext{A} \vp$ or $\neg \coalnext{A}
  \vp$, add the formula $\coalnext{\agents_{\theta}} \truth$ to $\D$;

\item for each state $\D$ obtained at steps 1 and 2, put $\G
  \brancharrow \D$;

\item if, however, the pretableau already contains a state $\D'$ that
  coincides with $\D$, do not create another copy of $\D'$, but only
  put $\G \brancharrow \D'$.
\end{enumerate}

We denote the finite set of states that have outgoing edges from a
prestate $\G$ by $\st{\G}$.  These include genuinely ``new'' states
created by applying of \Rule{SR} to $\G$ as well as the states that
had already been in the pretableau and got identified with a state
that would otherwise have been created by applying \Rule{SR} to $\G$.

\begin{example}
  \label{ex:sr-rule}
  As a running example illustrating our tableau procedure, we will be
  constructing a tableau for the formula $\theta_1 = \neg \coalbox{1}
  p \con \coalnext{1,2} p \con \neg \coalnext{2} \neg p$.  The
  construction of the tableau for this formula starts off with the
  creation of a prestate $\G_1 = \set{\neg \coalbox{1} p \con
    \coalnext{1,2} p \con \neg \coalnext{2} \neg p}$.  Next, \Rule{SR}
  is applied to $\G_1$, which produces two states, which we call, for
  future reference, $\D_1$ and $\D_2$ (in the diagram below, as well
  as in the following examples, we omit the customary set-theoretic
  curly brackets around states and prestates of the (pre)tableaux):

\medskip

  \begin{picture}(200,50)(0,450)
    \footnotesize
    \thicklines

    \put(67,487.5){$(\G_1)$}

    \put(165,490){\makebox(0,0){
        $\begin{array}{c}
          \neg \coalbox{1} p \con
          \coalnext{1,2} p \con \neg \coalnext{2} \neg p = \theta_1
        \end{array}$
      }}
    \put(120,483){\line(-1,-1){10}}
    \put(122,483){\line(-1,-1){10}}
    \put(113,475){\vector(-1,-1){5}}

    \put(207,483){\line(1,-1){10}}
    \put(209,483){\line(1,-1){10}}
    \put(216.3,475){\vector(1,-1){5}}

    % node 2; state
    \put(23,457.5){$(\D_1)$}

    \put(95,458){\makebox(0,0){
        $\begin{array}{c}
          \theta_1, \neg \coalbox{1} p, \coalnext{1,2} p, \\
          \neg \coalnext{2} \neg p, \neg \coalnext{1} \coalbox{1} p
        \end{array}$
      }}

    % node 3; state
    \put(162,457.5){$(\D_2)$}

    \put(230,458){\makebox(0,0){
        $\begin{array}{c}
          \theta_1, \neg \coalbox{1} p, \coalnext{1,2} p, \\
          \neg \coalnext{2} \neg p, \neg p
        \end{array}$
      }}
  \end{picture}
\end{example}

In general, if at least one subformula of a non-primitive member of a
prestate $\G$ is a $\beta$-formula, $\G$ will have more than one
minimal downward saturated extension; hence, for such a $\G$, the set
\st{\G} will contain more than one state.  The only exception to this
general rule may occur when we come across $\beta$-formulae for which
$\beta_1 = \beta_2$, such as $(\vp \imp \neg \vp)$.

\medskip

We now turn to our second construction rule, \Rule{Next}, which
creates ``successor'' prestates from states. The rule has to ensure
that a sufficient supply of successor prestates is created to enforce
the truth of all ``next-time formulae'' (see below) at the current
state. Unlike the case of logics whose models are sets of states
connected by edges of binary relations, such as \LTL\ and \CTL, in
\ATL\ successor prestates cannot be created by simply removing the
``next-time'' modality from a formula and creating an edge associated
with that formula.  On the contrary, in \ATL, transitions are effected
by move vectors, with which we, then, associate formulae made true by
actions of agents making up that particular move vector. Thus, the
rule \Rule{Next} needs to provide each agent mentioned in the input
formula with a sufficient number of actions available at the current
state, and then ``populate'' prestates associated each resultant move
vector $\sigma$ with appropriate formulae.

Before formally introducing the rule, we provide some intuition behind
it.  The rule is applicable to a state, say $\D$; more precisely, it
is applicable to the formulae of the form $\coalnext{A} \vp$---which
we refer to as \emph{positive next-time formulae}---and $\neg
\coalnext{A} \psi$, where $A \ne \agents$---which we refer to as
\emph{proper negative next-time formulae}---belonging to
$\D$. Positive and proper negative next-time formulae are referred to
collectively as \emph{next-time formulae}. These formulae are arranged
in a list $\mathbb{L}$ and, thus, numbered; all the positive formulae
in $\mathbb{L}$ precede all the negative ones; otherwise, the ordering
is immaterial. The agents mentioned in the input formula $\theta$ can
be thought of as having to decide which formulae from $\D$ appearing
under the ``next-time'' coalition modalities $\coalnext{\ldots}$ and
$\neg \coalnext{\ldots}$ should be included into a successor prestate
associated with each move vector $\sigma$ (inclusion into a prestate
intuitively corresponds to satisfiability in the successor states of a
Hintikka structure, as prestates eventually get unwound into tableau
states).  Therefore, the number of ``actions'' each agent mentioned in
$\theta$ is given at $\D$ equals the number of the next-time formulae
in $\D$ (= length of $\mathbb{L}$).  These actions are combined into
``move vectors'' $\sigma$ leading to successor prestates.  The
inclusion of formulae into the prestate $\G_{\sigma}$ created as a
successor of $\D$ by an arrow labeled with $\sigma$ is then decided as
follows.  A formula $\vp$ for which $\coalnext{A} \vp \in \mathbb{L}$
is included into $\G_{\sigma}$, if every agent in $A$ ``votes'' in
$\sigma$ for this formula (i.e. every $i$th slot in $\sigma$ with $i
\in A$ contains the number representing the position of $\coalnext{A}
\vp$ in $\mathbb{L}$).  On the other hand, $\neg \psi$ for which $\neg
\coalnext{A} \psi \in \mathbb{L}$ is included into $\G_{\sigma}$ (for
technical reasons, at most one such formula can be included into any
prestate) if every agent \emph{not in} $A$ votes, in the sense
explained above for the positive case, for a negative formula from
$\mathbb{L}$ (not necessarily $\neg \coalnext{A} \psi$) and, moreover,
$\neg \coalnext{A} \psi$ is the formula decided on by the
\emph{collective} (negative) vote of agents in $\agents \setminus A$.
Technically, this collective vote is represented by the number
$\mathbf{neg}(\move{})$, which is computed using all negative votes of
$\sigma$, which allows it to represent a truly collective decision.

We now turn to the technical presentation of \Rule{Next}.  The rule
does not apply to the states containing patent inconsistencies since
such states, obviously, cannot be part of any CGHS (so, we are not
wasting time creating ``junk'' that will have to be removed anyway).

\bigskip

\Rule{Next} Given a state $\D$ such that for no $\chi$ we have $\chi,
\neg \chi \in \D$, do the following:
\begin{enumerate}
\item Order linearly all positive and proper negative next-time formulae of
  $\D$ in such a way that all the positive next-time formulae precede all
  the negative ones; suppose the result is the list
   \[\mathbb{L} = \coalnext{A_0} \vp_0, \ldots, \coalnext{A_{m-1}}
   \vp_{m-1},\lnot \coalnext{A'_0} \psi'_0, \ldots, \lnot
   \coalnext{A'_{l-1}} \psi_{l-1}.\] (Note that, due to step 2 of
   \Rule{SR}, $\mathbb{L}$ is always non-empty.) Let $r_{\D} = m + l$;
   denote by $D (\D)$ the set $\set{0,
     \ldots,r_{\D}-1}^{\card{\agents_{\theta}}}$; lastly, for every
   $\move{} \in D(\D)$, denote by $N(\move{})$ the set
   \crh{i}{\move{i} \geq m}, where \move{i} stands for the $i$th
   component of the tuple $\sigma$, and by $\mathbf{neg}(\move{})$ the
   number $[\sum_{i \in N(\move{})} (\move{i} - m)] \mod l$.

 \item Consider the elements of $D(\D)$ in the lexicographic order and
   for each $\sigma \in D(\D)$ do the following:
  \begin{enumerate}
  \item Create a prestate
     \begin{eqnarray*}
       \G_{\move{}} & = & \crh{\vp_p}{\coalnext{A_p} \vp_p \in \D
         \text{ and }  \move{a} = p \text{ for all } a \in A_p}\\
       & \union & \crh{\neg \psi_q}{\neg \coalnext{A'_q}
         \psi_q \in \D, \ \mathbf{neg}(\move{}) = q, \text { and }
         \agents_{\theta} - A'_q \subseteq N(\move{})};
     \end{eqnarray*}

     put $\G_{\move{}} := \{\top\}$ if the sets on both sides of the
     union sign above are empty.

   \item Connect $\D$ to $\G_{\move{}}$ with $\movearrow$;
   \end{enumerate}
   If, however, $\G_{\move{}} = \G$ for some prestate $\G$ that has
   already been added to the pretableau, only connect $\D$ to $\G$
   with $\movearrow$.
 \end{enumerate}

 We denote the finite set of prestates \crh{\G}{\D \movearrow \G\
   \text{ for some } \move{} \in D(\Delta) } by $\sucpr{\D}$.  Note
 that a state $\D$ may get connected to some $\G \in \sucpr{\D}$ by
 arrows labeled by distinct $\move{}, \move{}' \in \moves{}{\D}$.  In
 such cases, we ``glue together'' arrows labeled by $\sigma$ and
 $\sigma'$, in effect creating an arrow marked by a set of labels
 rather than a label (in examples below, in such cases, we attach
 several labels to a single arrow).

\setcounter{example}{0}

\begin{example}[continued]
  \label{ex:next_rule}
  Let us apply the \Rule{Next} rule to the state \linebreak $\D_1 =
  \set{\theta_1, \neg \coalbox{1} p, \coalnext{1,2} p, \neg
    \coalnext{2} \neg p, \neg \coalnext{1} \coalbox{1} p}$ from our
  running example.  We arrange all the positive and proper negative
  next-time formulae of this state in the list $\mathbb{L} =
  \coalnext{1,2} p, \neg \coalnext{2} \neg p, \neg \coalnext{1}
  \coalbox{1} p$.  Then, at $\D_1$, each of the two agents from
  $\theta_1$ is going to have 3 actions, denoted by numbers 0, 1, and
  2.  To decide what formulae are to be included in the prestates
  resulting from tuples of those actions, we also need to separately
  number all the negative next-time formulae from $\mathbb{L}$: $\neg
  \coalnext{2} \neg p$ will be numbered $0$, while $\neg \coalnext{1}
  \coalbox{1} p$ will be numbered $1$ ($\mathbf{neg}(\move{})$ in the
  table below will refer to these numbers).  The following table
  illustrates which formulae are included into prestates associated
  with what move vectors at $\D$:

  \begin{center}
    \footnotesize
  $\begin{array}{|c|c|l|}
    \hline
    \move{} & \mathbf{neg}(\move{}) & \text{formulae} \\
    \hline
    0,0 & 0 & p \\
    0,1 & 0 & \truth \\
    0,2 & 1 & \neg \coalbox{1} p \\
    1,0 & 0 & \neg \neg p \\
    1,1 & 0 & \neg \neg p \\
    1,2 & 1 & \neg \coalbox{1} p \\
    2,0 & 1 & \truth \\
    2,1 & 1 & \neg \coalbox{1} p \\
    2,2 & 0 & \neg \neg p \\
    \hline
  \end{array}$
  \end{center}

  In the table above, it so happens that only one formula is included
  into each prestate; in general, however, this does not have to be
  the case.  Based on the above table, by applying \Rule{Next} to
  $\D_1$, we produce the following set of its prestate successors:

\medskip
  \begin{picture}(200,65)(0,405)
    \footnotesize
    \thicklines

    % node 2; state

    \put(78,457){$(\D_1)$}

    \put(150,460){\makebox(0,0){
        $\begin{array}{c}
          \neg \coalbox{1} p, \coalnext{1,2} p, \\
          \neg \coalnext{2} \neg p, \neg \coalnext{1} \coalbox{1} p
        \end{array}$
      }}

    \put(86, 440){\tiny $0,2$}
    \put(81.5, 434){\tiny $1,2$}
    \put(76, 428){\tiny $2,1$}

    \put(70,410){\makebox(0,0){
        $\begin{array}{c}
          \neg \coalbox{1} p
        \end{array}$
      }}

    \put(127, 440){\tiny $0,0$}

    \put(115,410){\makebox(0,0){
        $\begin{array}{c}
          p
        \end{array}$
      }}

    \put(173, 439){\tiny $1,0$}
    \put(177.5, 433.5){\tiny $1,1$}
    \put(184, 428){\tiny $2,2$}

    \put(105, 445){\vector(-1,-1){27}}
    \put(145, 445){\vector(-1,-1){27}}

    \put(165, 445){\vector(1,-1){27}}
    \put(205, 445){\vector(1,-1){27}}

    \put(212, 441){\tiny $0, 1$}
    \put(216, 435.5){\tiny $2,0$}

    \put(195,410){\makebox(0,0){
        $\begin{array}{c}
          \neg \neg p
        \end{array}$
      }}

    \put(235,410){\makebox(0,0){
        $\begin{array}{c}
          \truth
        \end{array}$
      }}
  \end{picture}
\end{example}

\begin{remark}
  \label{rem:at_most_one_neg_fma}
  Technically, \Rule{Next} ensures that every $\G_{\move{}} \in
  \sucpr{\D}$ satisfies the following properties:
  \begin{itemize}
  \item if $\set{\coalnext{A_i} \vp_i, \coalnext{A_j} \vp_j} \subseteq
    \D$ and $\set{\vp_i, \vp_j} \subseteq \G_{\move{}}$, then $A_i
    \inter A_j = \emptyset$;

    \item $\G_{\move{}}$ contains at most one formula of the form
      $\neg \psi$ such that $\neg \coalnext{A} \psi \in \D$, since the
      number $\mathbf{neg}(\move{})$ is uniquely determined for every
      $\move{} \in \moves{}{\D}$;

    \item if $\set{\coalnext{A_i} \vp_i, \neg \coalnext{A'} \psi}
      \subseteq \D$ and $\set{\vp_i, \neg \psi} \subseteq
      \G_{\move{}}$, then $A_i \subseteq A'$.
  \end{itemize}
\end{remark}

Note that there is a connection between the above properties and the
basic properties of ``next-time'' coalition modalities, such as
monotonicity and superadditivity (see \cite{Pauly01}, \cite{Pauly02},
\cite{GorDrim06}).

The construction phase, starting with a single prestate \set{\theta},
consists of alternately applying the rule \Rule{SR} to the prestates
created as a result of the last application of \Rule{Next} (or, if we
are at the beginning of the whole construction, to $\set{\theta}$) and
applying \Rule{Next} to the states created as a result of the last
application of \Rule{SR}. This cycle continues until any application
of \Rule{Next} does not produce any new prestates; after adding the
relevant arrows, if any, the construction stage is over.  As we show
in the next subsection, this is bound to happen in a finite number of
steps---more precisely, in the number of steps exponential in the
length of $\theta$.

\setcounter{example}{0}

\begin{example}[continued]
  \label{ex:atl_pretab_1}
  Here is a complete pretableau for the formula $\theta_1 = \neg
  \coalbox{1} p \con \coalnext{1,2} p \con \neg \coalnext{2} \neg p$:

\medskip
  \begin{picture}(200,145)(0,355)
    \footnotesize
\thicklines
%node 1; prestate
\put(165,490){\makebox(0,0){
    $\begin{array}{c}
      \neg \coalbox{1} p \con
      \coalnext{1,2} p \con \neg \coalnext{2} \neg p = \theta_1
    \end{array}$
}}
\put(120,483){\line(-1,-1){10}}
\put(122,483){\line(-1,-1){10}}
\put(113,475){\vector(-1,-1){5}}

\put(207,483){\line(1,-1){10}}
\put(209,483){\line(1,-1){10}}
\put(216.3,475){\vector(1,-1){5}}

% node 2; state
\put(100,460){\makebox(0,0){
  $\begin{array}{c}
      \theta_1, \neg \coalbox{1} p, \coalnext{1,2} p, \\
      \neg \coalnext{2} \neg p, \neg \coalnext{1} \coalbox{1} p
  \end{array}$
}}

% node 3; state
\put(225,460){\makebox(0,0){
    $\begin{array}{c}
      \theta_1, \neg \coalbox{1} p, \coalnext{1,2} p, \\
      \neg \coalnext{2} \neg p, \neg p
  \end{array}$
}}

\put(62,446){\vector(0,-1){30}}
\put(48, 440){\tiny $0,2$}
\put(48, 434){\tiny $1,2$}
\put(48, 428){\tiny $2,1$}

% node 4; prestate
\put(60,410){\makebox(0,0){
    $\begin{array}{c}
      \neg \coalbox{1} p
    \end{array}$
}}

\qbezier(85, 446)(110, 418)(140, 416)
\put(140,416){\vector(1,0){5}}
\put(74, 440){\tiny $0,0$}

% node 5; prestate
\put(145,410){\makebox(0,0){
    $\begin{array}{c}
      p
    \end{array}$
}}

\qbezier(115, 446)(130, 428)(197, 416)
\put(195,416){\vector(1,0){5}}
\put(105, 439){\tiny $1, 0$}
\put(111, 433.5){\tiny $1,1$}
\put(119, 428){\tiny $2,2$}

\qbezier(152, 444)(150, 428)(283, 416)
\put(283,416){\vector(1,0){5}}
\put(139, 441){\tiny $0, 1$}
\put(141, 435.5){\tiny $2,0$}

\put(180,445){\vector(-1,-1){30}}
\put(182, 440){\tiny $0,0$}

\put(233,445){\vector(-1,-1){30}}
\put(235, 440){\tiny $1, 0$}
\put(229, 433.5){\tiny $1,1$}

% node 10; prestate
\put(205,410){\makebox(0,0){
  $\begin{array}{c}
   \neg \neg p
  \end{array}$
}}

% node 10; prestate
\put(298,410){\makebox(0,0){
  $\begin{array}{c}
   \truth
  \end{array}$
}}

\put(263,445){\vector(1,-1){30}}
\put(270, 440){\tiny $0, 1$}

\put(50,402){\line(-1,-1){10}}
\put(52,402){\line(-1,-1){10}}
\put(42,393){\vector(-1,-1){5}}

\put(20,380){\makebox(0,0){
    $\begin{array}{c}
      \neg \coalbox{1} p, \\
      \neg \coalnext{1} \coalbox{1} p
  \end{array}$
}}

\put(58,402){\line(1,-1){10}}
\put(60,402){\line(1,-1){10}}
\put(68,393){\vector(1,-1){5}}

\put(85,380){\makebox(0,0){
    $\begin{array}{c}
      \neg \coalbox{1} p, \\
   \neg p, \coalnext{1,2} \truth
  \end{array}$
}}

\put(143,403){\line(0,-1){11}}
\put(145,403){\line(0,-1){11}}
\put(144,392){\vector(0,-1){5}}

\put(145,380){\makebox(0,0){
  $\begin{array}{c}
   p, \coalnext{1,2} \truth
  \end{array}$
}}

\put(208,403){\line(0,-1){11}}
\put(210,403){\line(0,-1){11}}
\put(209,392){\vector(0,-1){5}}

\put(210,380){\makebox(0,0){
    $\begin{array}{c}
      \neg \neg p, p, \coalnext{1,2} \truth
  \end{array}$
}}

\put(297,403){\line(0,-1){11}}
\put(299,403){\line(0,-1){11}}
\put(298,392){\vector(0,-1){5}}

\put(304,380){\makebox(0,0){
    $\begin{array}{c}
      \truth, \coalnext{1,2} \truth
  \end{array}$
}}

\qbezier(305, 390)(330, 402)(307, 412)
\put(307,412){\vector(-1,0){5}}
\put(320, 402){\tiny $0,0$}

\qbezier(215, 390)(250, 410)(282, 410.5)
\put(283,410.5){\vector(1,0){5}}
\put(220, 400){\tiny $0,0$}

\qbezier(140, 373)(260, 355)(287, 406)
\put(286,405){\vector(1,1){5}}
\put(142.5, 365.5){\tiny $0,0$}

\qbezier(100, 368)(250, 330)(292, 405)
\put(291,404){\vector(1,1){5}}
\put(95, 361){\tiny $0,0$}

\qbezier(20, 392)(10, 410)(39, 410)
\put(37,410){\vector(1,0){5}}
\put(4.5, 400){\tiny $0,0$}

\end{picture}
\end{example}

\begin{example}
  \label{ex:atl_pretab_2}
  For yet another demonstration of our procedure, let us build a pretableau for
  the formula $\theta_2 = \coalbox{1} \neg q \con \coal{2} p \until
  q$:

\medskip
  \begin{picture}(200,145)(0,355)
    \footnotesize
    \thicklines
    % node 1; prestate
    \put(165,490){\makebox(0,0){
        $\begin{array}{c}
          \coalbox{1} \neg q \con \coal{2} p \until q = \theta_2
        \end{array}$
      }}
    \put(130,483){\line(-1,-1){10}}
    \put(132,483){\line(-1,-1){10}}
    \put(123,475){\vector(-1,-1){5}}

    \put(207,483){\line(1,-1){10}}
    \put(209,483){\line(1,-1){10}}
    \put(216.3,475){\vector(1,-1){5}}

    \qbezier(194, 448)(233, 450)(292.5, 431)
    \put(292,432){\vector(1,-1){5}}
    \put(198, 450){\tiny $0,0$}

    % node 2; state
    \put(130,458){\makebox(0,0){
        $\begin{array}{c}
          \theta_2, \coalbox{1} \neg q, \coal{2} p \until q, \neg q, \\
          \coalnext{1} \coalbox{1} \neg q, p, \coalnext{2} \coal{2} p
          \until q
        \end{array}$
      }}

    % node 3; state
    \put(260,458){\makebox(0,0){
        $\begin{array}{c}
          \theta_2, \coalbox{1} \neg q, \coal{2} p \until q, \\
          \neg q, \coalnext{1} \coalbox{1} \neg q, q
        \end{array}$
      }}

    \put(66,446){\vector(-1,-1){20}}
    \put(47, 440){\tiny $0,0$}

    % node 4; prestate
    \put(37,420){\makebox(0,0){
        $\begin{array}{c}
          \coalbox{1} \neg q
        \end{array}$
      }}

    \put(30,412){\line(0,-1){10}}
    \put(32,412){\line(0,-1){10}}
    \put(31,403){\vector(0,-1){5}}

    \put(21,387){\makebox(0,0){
        $\begin{array}{c}
          \coalbox{1} \neg q, \neg q, \\
          \coalnext{1} \coalbox{1} \neg q
        \end{array}$
      }}

    \qbezier(19, 400)(0, 410)(18.7, 419)
    \put(15.5,416){\vector(1,1){5}}
    \put(-4, 408){\tiny $0, 0$}

    \put(82,446){\vector(1,-1){20}}
    \put(72.2, 440){\tiny $1,1$}

    % node 5; prestate
    \put(110,420){\makebox(0,0){
        $\begin{array}{c}
          \coal{2} p \until q
        \end{array}$
      }}

    \put(104,412){\line(-1,-1){10}}
    \put(106,412){\line(-1,-1){10.5}}
    \put(96,403){\vector(-1,-1){5}}

    \put(84,387){\makebox(0,0){
        $\begin{array}{c}
          \coal{2} p \until q, p, \\
          \coalnext{2} \coal{2} p \until q
        \end{array}$
      }}

    \qbezier(85, 400)(70, 410)(91, 418)
    \put(88,416){\vector(1,1){5}}
    \put(65, 408){\tiny $0, 0$}

    \put(173,445){\vector(1,-1){18}}
    \put(160.5, 441){\tiny $0,1$}

    \put(122,412){\line(1,-1){10}}
    \put(124,412){\line(1,-1){10.5}}
    \put(132,403){\vector(1,-1){5}}

    \put(138,387){\makebox(0,0){
        $\begin{array}{c}
          \coal{2} p \until q, q, \\
          \coalnext{1,2} \truth
        \end{array}$
      }}

    \qbezier(143, 399)(130, 460)(289.5, 423.5)
     \put(288,424){\vector(1,0){5}}
    \put(145, 405){\tiny $0,0$}

    % node 5; prestate
    \put(210,420){\makebox(0,0){
        $\begin{array}{c}
          \coalbox{1} \neg q, \coal{2} p \until q
        \end{array}$
      }}

    \put(202,412){\line(-1,-1){10}}
    \put(204,412){\line(-1,-1){10.5}}
    \put(194,403){\vector(-1,-1){5}}

    \put(191,380){\makebox(0,0){
        $\begin{array}{c}
          \coalbox{1} \neg q, \neg q, \\
          \coalnext{1} \coalbox{1} \neg q, \\
          \coal{2} p \until q, q
        \end{array}$
      }}

    \put(232,412){\line(1,-1){10}}
    \put(234,412){\line(1,-1){10.5}}
    \put(242,403){\vector(1,-1){5}}

    \put(256,375){\makebox(0,0){
        $\begin{array}{c}
          \coalbox{1} \neg q, \neg q, \\
          \coalnext{1} \coalbox{1} \neg q, \\
          \coal{2} p \until q, p, \\
          \coalnext{2} \coal{2} p \until q
        \end{array}$
      }}

    \qbezier(235, 398)(175, 422)(132, 420)
    \put(134,420){\vector(-1,0){5}}
    \put(215, 396){\tiny $1,1$}

    \qbezier(250, 400)(275, 412)(252, 422)
    \put(252,422){\vector(-1,0){5}}
    \put(265, 410){\tiny $0,1$}

    \qbezier(282, 395)(286, 405)(290, 412)
    \put(290,412){\vector(1,1){5}}
    \put(269, 398){\tiny $1,0$}

    \qbezier(225, 360)(34, 340)(46, 410)
    \put(46,410){\vector(0,1){5}}
    \put(210, 351){\tiny $0,0$}

    % node 10; prestate
    \put(298,420){\makebox(0,0){
        $\begin{array}{c}
          \truth
        \end{array}$
      }}

    \put(297,413){\line(0,-1){11}}
    \put(299,413){\line(0,-1){11}}
    \put(298,402){\vector(0,-1){5}}

    \put(316,390){\makebox(0,0){
        $\begin{array}{c}
          \truth, \coalnext{1,2} \truth
        \end{array}$
      }}

    \qbezier(305, 400)(330, 412)(307, 422)
    \put(307,422){\vector(-1,0){5}}
    \put(320, 412){\tiny $0,0$}
\end{picture}
\end{example}

%%%%%%%%%%%%%%%%%%%%%%%%%%%%%%%%%%
\subsection{Termination and complexity of the construction phase}
\label{sec:term_cosntr}
%%%%%%%%%%%%%%%%%%%%%%%%%%%%%%%%%%

To prove that the construction phase eventually terminates and to
estimate its complexity, we use the concept of the extended closure of
an \ATL-formula.

\begin{definition}
  \label{def:atl-tableau-closure}
  Let $\theta$ be an \ATL-formula.  The \de{closure of $\theta$},
  denoted by \cl{\theta}, is the least set of formulae such that
  \begin{itemize}
  \item $\theta \in \cl{\theta}$;
  \item \cl{\theta} is closed under subformulae;
  \item if $\coal{A} (\vp \until \psi) \in \cl{\theta}$, then $\vp
    \con \coalnext{A} \coal{A} (\vp \until \psi) \in \cl{\theta}$;
  \item if $\neg \coal{A} (\vp \until \psi) \in \cl{\theta}$, then
    $\neg \psi \con \neg \vp, \neg \psi \con \neg \coalnext{A}
    \coal{A} (\vp \until \psi) \in \cl{\theta}$;
  \item if $\coalbox{A} \vp \in \cl{\theta}$, then $\vp \con
    \coalnext{A} \coalbox{A} \vp \in \cl{\theta}$.
  \end{itemize}
\end{definition}

\begin{definition}
  Let $\theta$ be an \ATL-formula. The \de{extended closure of
    $\theta$}, denoted by \ecl{\theta}, is the least set of formulae
  such that
  \begin{itemize}
  \item if $\vp \in \cl{\theta}$, then $\vp, \neg \vp \in
    \ecl{\theta}$;
  \item if $\neg \coalnext{\agents_{\theta}} \vp \in \cl{\theta}$,
    then $\coalnext{\emptyset} \neg \vp \in \ecl{\theta}$;
  \item $\top \in \ecl{\theta}$;
  \item $\coalnext{\agents} \truth \in \ecl{\theta}$.
  \end{itemize}
\end{definition}

We denote the cardinality of \ecl{\theta} by \card{\ecl{\theta}} and
the length of a formula $\theta$ by $|\theta|$.  When calculating the
length of a formula, we assume that every agent's name counts as one
symbol and that a pair of coalition braces is ``lumped together'' as
one symbol with the temporal operator that follows it; thus,
$|\coalnext{1, 2} p| = 4$.

\begin{lemma}
  \label{lm:size_of_ecl}
  Let $\theta$ be a \ATL-formula.  Then, \ecl{\theta} is finite; more
  precisely, $\card{\ecl{\theta}} \in \bigo{|\theta|}$, i.e,
  $\card{\ecl{\theta}} \leq c \cdot \card{\theta}$ for some $c \geq
  1$.
\end{lemma}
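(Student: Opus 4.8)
The plan is to reduce the statement to a bound on the ordinary closure $\cl{\theta}$, since an inspection of the definition of the extended closure shows that $\ecl{\theta}$ is obtained from $\cl{\theta}$ by only a constant-factor blow-up. I would begin by recording the elementary fact that $\card{\subf{\theta}} \le \card{\theta}$, proved by a routine induction on the structure of $\theta$ (each subformula occurrence can be charged to a distinct symbol of $\theta$). Everything then hinges on showing $\card{\cl{\theta}} \in \bigo{\card{\theta}}$.

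For the bound on $\cl{\theta}$, the plan is to exhibit an explicit set $X \supseteq \cl{\theta}$ of size $\bigo{\card{\theta}}$. Let $\Psi$ be the union of $\subf{\theta}$ with, for each subformula $\chi$ of $\theta$ of one of the forms $\coal{A}(\vp\until\psi)$, $\neg\coal{A}(\vp\until\psi)$, or $\coalbox{A}\vp$, the one or two formulae that the corresponding clause of Definition~\ref{def:atl-tableau-closure} would add on account of $\chi$ (for instance $\vp\con\coalnext{A}\coal{A}(\vp\until\psi)$ for a positive until). Let $X$ be the set of all subformulae of members of $\Psi$. Then $\theta \in X$, and $X$ is closed under subformulae by construction; the one point requiring care — and the only real obstacle in the proof — is to verify that $X$ is closed under the three fixpoint-unfolding clauses of Definition~\ref{def:atl-tableau-closure}, i.e.\ that the closure process does not cascade. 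This amounts to checking that every positive until, negated until, and box formula occurring anywhere in $X$ is already a subformula of $\theta$: the formulae in $\Psi\setminus\subf{\theta}$ contain no untils or boxes other than those occurring inside the components $\vp,\psi$ of the original $\chi$ (which lie in $\subf{\theta}$) and the formula $\chi$ itself, and for a negated until $\neg\chi\in\subf{\theta}$ one automatically has $\chi\in\subf{\theta}$ as well, so no genuinely new eventuality is ever generated and no unfolding clause fires on a formula not already accounted for. Granting this, $\cl{\theta}\subseteq X$ because $\cl{\theta}$ is the \emph{least} set with the stated closure properties. Since $\card{\Psi\setminus\subf{\theta}}\le 2\,\card{\subf{\theta}}$ and each member of $\Psi$ contributes only a bounded number of subformulae outside $\subf{\theta}$, we obtain $\card{X}\le c'\card{\theta}$, hence $\card{\cl{\theta}}\le c'\card{\theta}$, for an absolute constant $c'$.

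It remains to pass from $\cl{\theta}$ to $\ecl{\theta}$, which I would do directly from the definition: every member of $\ecl{\theta}$ is either a formula of $\cl{\theta}$, or the negation of such a formula, or a formula $\coalnext{\emptyset}\neg\vp$ arising from some $\neg\coalnext{\agents_{\theta}}\vp \in \cl{\theta}$, or one of the two fixed formulae $\truth$ and $\coalnext{\agents}\truth$. Crudely bounding the number of formulae of each kind, this gives $\card{\ecl{\theta}} \le 3\,\card{\cl{\theta}} + 2$, and combining with the previous paragraph yields $\card{\ecl{\theta}} \le c\,\card{\theta}$ for a suitable constant $c \ge 1$; in particular $\ecl{\theta}$ is finite.
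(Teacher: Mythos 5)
Your overall strategy --- exhibit an explicit superset $X$ of $\cl{\theta}$ of linear size and invoke minimality of the closure --- is sound, and the final passage from $\cl{\theta}$ to $\ecl{\theta}$ is correct (the extended closure is defined non-recursively from $\cl{\theta}$, so $\card{\ecl{\theta}}\le 3\card{\cl{\theta}}+2$ is fine). The gap is precisely at the step you single out as the only obstacle: the claim that every negated until occurring in $X$ is already a subformula of $\theta$, so that no unfolding clause fires outside what $\Psi$ accounts for, is false. The negated-until clause of Definition~\ref{def:atl-tableau-closure} puts $\neg\psi\con\neg\vp$ into the closure, and if $\psi$ is itself of the form $\coal{B}(\chi_1\until\chi_2)$, then $\neg\psi$ is a negated until which in general is \emph{not} a subformula of $\theta$; you verified the implication $\neg\chi\in\subf{\theta}\Rightarrow\chi\in\subf{\theta}$, but it is the converse situation that bites. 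Concretely, for $\theta=\neg\coal{1}(p\until\coal{2}(q\until r))$ the closure contains $\neg\coal{2}(q\until r)$ as a subformula of the unfolding $\neg\coal{2}(q\until r)\con\neg p$; the negated-until clause then fires on it and forces $\neg r\con\neg q$ and $\neg r\con\neg\coalnext{2}\coal{2}(q\until r)$ into $\cl{\theta}$, yet none of these lie in your $X$, because $\neg\coal{2}(q\until r)\notin\subf{\theta}$ and so its unfoldings were never placed in $\Psi$. Hence $\cl{\theta}\not\subseteq X$ and the minimality argument does not close.

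The lemma itself is unaffected and the repair is small: enlarge $\Psi$ by adding, for \emph{every} positive until $\chi\in\subf{\theta}$, the unfoldings of $\neg\chi$ as well, not only for those $\neg\chi$ that happen to be subformulae of $\theta$. One then checks that every positive until or box occurring in the resulting $X$ lies in $\subf{\theta}$, and every negated until in $X$ is of the form $\neg\chi$ with $\chi\in\subf{\theta}$ a positive until: the unfoldings introduce negations only of formulae in $\subf{\theta}$, and since Definition~\ref{def:atl-tableau-closure} has no clause for negated boxes, the cascade stops after this single extra round. This at most doubles the number of eventualities to be unfolded, each still contributing $\bigo{1}$ formulae outside $\subf{\theta}\union\crh{\neg\chi}{\chi\in\subf{\theta}}$, so the bound $\card{X}\le c'\card{\theta}$, and with it the statement of the lemma, survives.
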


\begin{proof}
  Straightforward.
\end{proof}

To simplify notation, let us denote $\card{\theta}$ by $n$ and
$\card{\agents_{\theta}}$ by $k$; let also $c$ be the constant from
the statement of the preceding lemma. While building the pretableau
$\tableau{P}^{\theta}$, we create \bigo{2^{cn}} states and
\bigo{2^{cn}} prestates.  To create a state, we need no more than
\bigo{cn} steps, thus the creation of all the states takes not more
than \bigo{cn \times 2^{cn}} steps.  For a given state $\D$, to create
all the prestates in $\sucpr{\D}$, we first produce a $\G_{\move{}}$
associated with a given $\move{} \in \moves{}{\D}$, which costs
\bigo{cn} steps, and then check whether it is identical to a prestate
created earlier, which takes \bigo{(cn)^2 \times 2^{cn}} steps.  As
there are, all in all, $\bigo{(cn)^k}$ move vectors in $D(\D)$, the
whole procedure of creating prestates from a given state costs
\bigo{(cn)^k \times (cn + (cn)^2 \times 2^{cn})}.  Applying this
procedure to all \bigo{2^{cn}} states, i.e, creating all prestates can
thus be done in $\bigo{2^{cn} \times (cn)^k \times (cn + (cn)^2 \times
  2^{cn}} = \bigo{2^{(k+1)\log (cn) + cn} + 2^{(k+2) \log (cn) + 2cn
  }} = \bigo{2^{(k+2) \log(cn) + 2cn}}$. As this clearly dominates the
complexity of creating states, the cost of the construction phase as a
whole is \bigo{2^{(k+2) \log(cn) + 2cn}}.

%%%%%%%%%%%%%%%%%%%%%%%%%%%%%%%%%%
\subsection{Prestate elimination phase} \label{sec:prestate_elimination}
%%%%%%%%%%%%%%%%%%%%%%%%%%%%%%%%%%

At the second phase of the tableau procedure, we remove from
$\tableau{P}^{\theta}$ all the prestates and all the unmarked arrows,
by applying the following rule:

\bigskip

\Rule{PR} For every prestate $\G$ in $\tableau{P}^{\theta}$, do the
following:
\begin{enumerate}
\item remove $\G$ from $\tableau{P}^{\theta}$;
\item for all states $\D$ in $\tableau{P}^{\theta}$ with $\D
  \movearrow \G$ and all $\D' \in \st{\G}$, put $\D \movearrow \D'$.
\end{enumerate}

\bigskip

We call the graph obtained by applying \Rule{PR} to
$\tableau{P}^{\theta}$ the \emph{initial tableau}, which we denote by
$\tableau{T}_0^{\theta}$.  Note that if in $\tableau{P}^{\theta}$ we
have $\D \movearrow \G$ and $\st{\G}$ contains more than one state,
then in $\tableau{T}_0^{\theta}$ there is going to be more than one
edge labeled with $\move{}$ going out of $\D$.

\setcounter{example}{0}

\begin{example}[continued]
  \label{ex:atl_inittab_1}
  Here is the initial tableau $\tableau{T}_0^{\theta_1}$ for the formula
  $\theta_1 = \neg \coalbox{1} p \con \coalnext{1,2} p \con \neg
  \coalnext{2} \neg p$ (as before, some states are named for future
  reference):

\medskip
  \begin{picture}(200,95)(0,380)
    \footnotesize
    \thicklines

    \put(36, 464){\tiny $(\D_1)$}

    % node 2; state
    \put(100,460){\makebox(0,0){
        $\begin{array}{c}
          \theta_1, \neg \coalbox{1} p, \coalnext{1,2} p, \\
          \neg \coalnext{2} \neg p, \neg \coalnext{1} \coalbox{1} p
        \end{array}$
      }}

    \put(162, 464){\tiny $(\D_2)$}

    % node 3; state
    \put(225,460){\makebox(0,0){
        $\begin{array}{c}
          \theta_1, \neg \coalbox{1} p, \coalnext{1,2} p, \\
          \neg \coalnext{2} \neg p, \neg p
        \end{array}$
      }}

    \put(52,446){\vector(-1,-1){30}}
    \put(35, 440){\tiny $0,2$}
    \put(29, 434){\tiny $1,2$}
    \put(23, 428){\tiny $2,1$}

    \put(62,446){\vector(1,-1){30}}
    \put(52.5, 440){\tiny $0,2$}
    \put(57.5, 434){\tiny $1,2$}
    \put(63, 428){\tiny $2,1$}

    \qbezier(10, 414)(5, 448)(18, 422)
    \put(18,422){\vector(0,-1){5}}
    \put(3, 436){\tiny $0,0$}

    \qbezier(20, 392)(42.5, 380)(65, 390)
    \put(64,389){\vector(1,1){5}}
    \put(37, 380){\tiny $0,0$}

    \qbezier(85, 446)(110, 418)(140, 416)
    \put(140,416){\vector(1,0){5}}
    \put(74, 440){\tiny $0,0$}

    % node 5; prestate
    \put(145,410){\makebox(0,0){
        $\begin{array}{c}
          p, \coalnext{1,2} \truth
        \end{array}$
      }}

    % \put(125,446){\vector(1,-1){30}}
    \qbezier(115, 446)(130, 428)(197, 416)
    \put(195,416){\vector(1,0){5}}
    \put(105, 439){\tiny $1, 0$}
    \put(111, 433.5){\tiny $1,1$}
    \put(119, 428){\tiny $2,2$}

    \qbezier(152, 444)(150, 428)(283, 416)
    \put(283,416){\vector(1,0){5}}
    \put(139, 441){\tiny $0, 1$}
    \put(141, 435.5){\tiny $2,0$}

    \put(180,445){\vector(-1,-1){30}}
    \put(182, 440){\tiny $0,0$}

    \put(233,445){\vector(-1,-1){30}}
    \put(235, 440){\tiny $1, 0$}
    \put(229, 433.5){\tiny $1,1$}

    \put(215,410){\makebox(0,0){
        $\begin{array}{c}
          \neg \neg p, p, \coalnext{1,2} \truth
        \end{array}$
      }}

    \put(298,410){\makebox(0,0){
        $\begin{array}{c}
          \truth, \coalnext{1,2} \truth
        \end{array}$
      }}

    \put(263,445){\vector(1,-1){30}}
    \put(270, 440){\tiny $0, 1$}

    \put(31, 415){\tiny $(\D_3)$}

    \put(20,403){\makebox(0,0){
        $\begin{array}{c}
          \neg \coalbox{1} p \\
          \neg \coalnext{1} \coalbox{1} p
        \end{array}$
      }}

    \put(68, 417){\tiny $(\D_4)$}

    \put(85,403){\makebox(0,0){
        $\begin{array}{c}
          \neg \coalbox{1} p \\
          \neg p, \coalnext{1,2} \truth
        \end{array}$
      }}

\put(145,410){\makebox(0,0){
  $\begin{array}{c}
   p, \coalnext{1,2} \truth
  \end{array}$
}}

\qbezier(305, 418)(312, 441)(317, 420)
\put(317,421){\vector(0,-1){5}}
\put(307, 433){\tiny $0,0$}

\put(250,410){\vector(1,0){23}}
\put(254, 412){\tiny $0,0$}

\qbezier(140, 403)(200, 385)(275, 403)
\put(274,402.5){\vector(1,0){5}}
\put(143, 393){\tiny $0,0$}

\qbezier(90, 392)(230, 360)(289, 402)
\put(288,401){\vector(1,1){5}}
\put(92, 382){\tiny $0,0$}
\end{picture}

Thus, our procedure for the formula $\neg \coalbox{1} p \con
\coalnext{1,2} p \con \neg \coalnext{2} \neg p$ creates 7 states.  For
the sake of comparison with the top-down tableau procedure
from~\cite{WLWW06}, we estimate how many states would be created using
that procedure.  As the running time of both procedures is roughly
proportional to the number of states created, this should give us an
idea as to how the two procedures compare in practice.

While we use the concept of extended closure of a formula for
metatheoretical purposes (to prove termination and estimate
complexity, see Section~\ref{sec:term_cosntr}), the top-down tableaux-like decision procedure from~\cite{WLWW06} uses it essentially.  Technically speaking, the procedure from~\cite{WLWW06} creates not states, but ``types''---maximal, propositionally consistent, saturated subsets of the extended closure of the input formula.  So, we estimate how many types the tableau procedure
from~\cite{WLWW06} would create for the formula $\neg \coalbox{1} p
\con \coalnext{1,2} p \con \neg \coalnext{2} \neg p$.  To that end, we
fist enumerate positive formulas of the extended closure for this
formula:
\medskip

(1) \ $\neg \coalbox{1} p \con \coalnext{1,2} p \con \neg
\coalnext{2} p$,

(2) \ $\neg \coalbox{1} p \con \coalnext{1,2} p$,

(3) \ $\coalnext{2} \neg p$,

(4) \ $\coalbox{1} p$,

(5) \ $\coalnext{1} \coalbox{1} p$,

(6) \ $\coalnext{1, 2} p$,

(7) \ $p$.

\medskip
For every formula from the above list, each type contains either that formula or its negation. However, not every such combination is allowed, as there
are dependencies between formulae as to their presence in a type.

First, if (1) is in a type, then that type must contain (2), $\neg (3)$,
$\neg (4)$ and $(6)$; so, there are $2^2$ distinct types containing formula
(1).  Second, if $\neg (1)$ and $\neg (2)$ are in a type, then we have
two cases: if the type contains (4), then it contains (5), generating
$2^3$ types; and if the type $\neg (6)$, then it contains $\neg (4)$,
generating $2^3$ more types.  Lastly, if $\neg (1)$ and $(2)$ are in a
type, then $(3)$, $\neg(4)$, and $(6)$ are also in the type, generating
$2^2$ types. Thus, all in all, the top-down tableau procedure
from~\cite{WLWW06} creates 24 types, as opposed to 7 states created by
incremental tableaux.
\end{example}

\begin{example}[continued]
  \label{ex:atl_inittab_2}
  Here is the initial tableau $\tableau{T}_0^{\theta_2}$ for the
  formula $\theta_2 = \coalbox{1} \neg q \con \coal{2} p \until q$ (as
  in the previous example, some states are named for future
  reference):

\medskip
  \begin{picture}(230,100)(0,370)
    \footnotesize
    \thicklines

    \qbezier(210, 452)(233, 450)(292.5, 431)
    \put(292,432){\vector(1,-1){5}}
    \put(211, 453){\tiny $1,0$}

    \put(74, 462){\tiny $(\D'_1)$}

    % node 2; state
    \put(135,458){\makebox(0,0){
        $\begin{array}{c}
          \theta_2, \coalbox{1} \neg q, \coal{2} p \until q,  \\
          \neg q, \coalnext{1} \coalbox{1} \neg q, p, \coalnext{2} \coal{2} p
          \until q
        \end{array}$
      }}

    \put(215, 462){\tiny $(\D'_2)$}

    % node 3; state
    \put(275,458){\makebox(0,0){
        $\begin{array}{c}
          \theta_2, \coalbox{1} \neg q, \coal{2} p \until q, \\
          \neg q, \coalnext{1} \coalbox{1} \neg q, q
        \end{array}$
      }}

    \put(66,446){\vector(-1,-1){20}}
    \put(47, 440){\tiny $0,0$}

    \put(110,446){\vector(-1,-1){20}}
    \put(92, 440){\tiny $1,1$}

    \put(115,446){\vector(1,-1){20}}
    \put(121.5, 440){\tiny $1,1$}

    % node 4; prestate
    \put(20,415){\makebox(0,0){
        $\begin{array}{c}
         \coalbox{1} \neg q, \neg q, \\
         \coalnext{1} \coalbox{1} \neg q
        \end{array}$
      }}

    \qbezier(19, 428)(24, 455)(28.5, 428.5)
    \put(28.4, 432){\vector(0,-1){5}}
    \put(19, 444){\tiny $0, 0$}

    \put(82,415){\makebox(0,0){
        $\begin{array}{c}
          \coal{2} p \until q, p, \\
          \coalnext{2} \coal{2} p \until q
        \end{array}$
      }}

    \qbezier(69, 428)(74, 455)(78.5, 428.5)
    \put(78.4, 432){\vector(0,-1){5}}
    \put(68.5, 443){\tiny $0,0$}

    \put(135,415){\makebox(0,0){
        $\begin{array}{c}
          \coal{2} p \until q, q, \\
          \coalnext{1,2} \truth
        \end{array}$
      }}

    \qbezier(143, 428)(193, 445)(285, 428)
     \put(283,429){\vector(1,-1){5}}
    \put(145, 433){\tiny $0,0$}

    \put(200,445){\vector(-1,-1){18}}
    \put(182.5, 441){\tiny $0,1$}

    \put(190, 427){\tiny $(\D'_3)$}

    \put(191,410){\makebox(0,0){
        $\begin{array}{c}
          \coalbox{1} \neg q, \neg q, \\
          \coalnext{1} \coalbox{1} \neg q, \\
          \coal{2} p \until q, q
        \end{array}$
      }}

    \put(210,445){\vector(1,-1){18}}
    \put(214.5, 441){\tiny $0,1$}

    \put(244, 427){\tiny $(\D'_4)$}

    \put(256,403){\makebox(0,0){
        $\begin{array}{c}
          \coalbox{1} \neg q, \neg q, \\
          \coalnext{1} \coalbox{1} \neg q, \\
          \coal{2} p \until q, p, \\
          \coalnext{2} \coal{2} p \until q
        \end{array}$
      }}

    \qbezier(290, 386)(335, 380)(331.5, 409)
    \put(331.5,405){\vector(0,1){5}}
    \put(318, 380){\tiny $1,0$}

    \qbezier(290, 393)(315, 402)(292, 409)
    \put(292,409){\vector(-1,0){5}}
    \put(305.5, 400){\tiny $0, 1$}

    \qbezier(229, 419)(213, 430)(210, 427)
    \put(213,427){\vector(-1,0){5}}
    \put(217, 415){\tiny $0,1$}

    \qbezier(225, 386)(200, 375)(151, 405)
    \put(154,402){\vector(-1,1){5}}
    \put(210, 386.5){\tiny $1,1$}

    \qbezier(225, 383)(205, 370)(111, 403)
    \put(114,401){\vector(-1,1){5}}
    \put(132, 386.5){\tiny $1,1$}

    \qbezier(225, 380)(205, 360)(35, 403)
    \put(38,401){\vector(-1,1){5}}
    \put(70, 386.5){\tiny $0,0$}

    \qbezier(75, 403)(105, 380)(135, 403)
    \put(134,401){\vector(1,1){5}}
    \put(97, 394){\tiny $0,0$}

    \put(315,418){\makebox(0,0){
        $\begin{array}{c}
          \truth, \coalnext{1,2} \truth
        \end{array}$
      }}

    \qbezier(310, 425)(315, 455)(319, 425)
    \put(319,430){\vector(0,-1){5}}
    \put(309, 442){\tiny $0,0$}
  \end{picture}

  Again, for the sake of comparison with tableaux form~\cite{WLWW06},
  we estimate the number of types created by those tableaux; a
  calculation similar to the one from the previous example shows that
  36 types are created by the top-down tableau-like procedure, as opposed
  to 8 states created by the incremental tableau procedure.
\end{example}

We briefly remark on the time required for this second phase.  Once
again, to simplify notation, let us denote $\card{\theta}$ by $n$.
Recall that $\card{\ecl{\theta}} \in \bigo{\card{\theta}}$, i.e,
$\card{\ecl{\theta}} = c \cdot \card{\theta}$ for some $c \geq 1$.  To
remove a single prestate, we need to delete from the memory its
\bigo{cn} formulae and redirect at most \bigo{2^{cn} \times 2^{cn}}
edges---having identified set-theoretically equal states as part of
the application of \Rule{Next} and having ``glued together'' arrows
having the same source and target, we do not have, at this stage, to
deal with \bigo{cn^{k}} outgoing edges for each state.  Hence, the
removal of a single prestate can be done in \bigo{2^{2cn}} steps. As
there are at most \bigo{2^{cn}} prestates, the whole procedure takes
\bigo{2^{3cn}} steps.

%%%%%%%%%%%%%%%%%%%%%%%%%%%%%%%%%%
\subsection{State elimination phase} \label{sec:state_elimination}
%%%%%%%%%%%%%%%%%%%%%%%%%%%%%%%%%%

During the state elimination phase, we remove those nodes of
$\tableau{T}_0^{\theta}$ that cannot be satisfied in any CGHS.  As
already mentioned, there are three reasons why a state $\D$ of
$\tableau{T}_0^{\theta}$ can turn out to be unsatisfiable in any CGHS.
First, $\D$ may contain a patent inconsistency\footnote{As states are
  downward-saturated, this is tantamount to saying that $\D$ contains
  a propositional inconsistency, even though in general these two
  concepts are not identical, as noted earlier.}. Secondly,
satisfiability of $\D$ may require that at least one state from a set
of tableau states $X$ is satisfiable as a successor of the state
$s_{\D}$ of a CGHS presumably satisfying $\D$, while all states of $X$
turn out to be unsatisfiable sets.  Thirdly, $\D$ may contain an
eventuality that is not realized in the tableau; that this implies
unsatisfiability of $\D$ is much less obvious than in the preceding
two cases---in fact, a major task within the soundness proof for our
procedure is to establish that this is indeed so. Accordingly, we have
three elimination rules, \Rule{E1}--\Rule{E3}, each taking care of
eliminating states of $\tableau{T}_0^{\theta}$ on one of the
above-mentioned counts.

Technically, the elimination phase is divided into stages; at stage
$n+1$, we remove from the tableau $\tableau{T}_n^{\theta}$ obtained at
the previous stage exactly one state, by applying one of the
elimination rules, thus obtaining the tableau
$\tableau{T}_{n+1}^{\theta}$. We now state the rules governing the
process.  The set of states of tableau $\tableau{T}_{m}^{\theta}$ is
denoted by $S_m^{\theta}$.

The rationale for the first rule is obvious.

\bigskip

\Rule{E1} If $\set{\vp, \neg \vp} \subseteq \D \in S_n^{\theta}$, then
obtain $\tableau{T}_{n+1}^{\theta}$ by eliminating $\D$ from
$\tableau{T}_n^{\theta}$.

\bigskip

The rationale behind the second rule is also intuitively clear: if
$\D$ is to be satisfiable, then for each $\move{} \in \moves{}{\D}$
there should exists a satisfiable $\D'$ with $\D \movearrow \D'$.  If
all such $\D'$s have been eliminated because they are unsatisfiable,
then $\D$ is itself unsatisfiable.

\bigskip

\Rule{E2} If, for some $\sigma \in D(\Delta)$, all states $\D'$ with
$\D \movearrow \D'$ have been eliminated at earlier stages, then
obtain $\tableau{T}_{n+1}^{\theta}$ by eliminating $\D$ from
$\tableau{T}_n^{\theta}$.

\bigskip

To formulate \Rule{E3}, we need the concepts of realization of an
eventuality in a tableau. To define that concept, we need some
auxiliary notation. Let $\D \in S_0^{\theta}$, and let $\coalnext{A}
\vp$ be the $p$-th formula in the linear ordering of the next-time formulae
of $\D$ induced as part of application of \Rule{Next} to $\D$; let,
finally, $\neg \coalnext{A'} \psi$ be the $q$-th formula in the same
ordering.  Then, we use the following notation:

\begin{center}
  \(
  \begin{array}{l}
    D (\D, \coalnext{A} \vp): = \crh{\move{} \in D(\D)}{\move{a} = p
      \text{ for every } a \in A}; \\

    D (\D, \neg \coalnext{A'} \psi):= \crh{\move{}
      \in D(\D)} {\mathbf{neg}(\move{}) = q \text{ and } \agents_{\theta}
      \setminus A' \subseteq N(\sigma)}.
  \end{array}
  \)
\end{center}

\noindent Intuitively, $D(\D, \chi)$ corresponds to an $A$-move (if
$\chi = \coalnext{A} \vp$) or a co-$A$-move (if $\chi = \neg
\coalnext{A'} \psi$) witnessing the ``satisfaction'' of $\chi$ at
state $\D$ (recall that $A$-moves and co-$A$-moves can be identified
with equivalence classes on the set of move vectors).

We now recursively define what it means for an eventuality of the
form $\coal{A} \vp \until \psi$ to be realized at a state $\D$ of
tableau $\tableau{T}_n^{\theta}$.

\begin{definition}[Realization of eventuality $\coal{A} \vp \until
  \psi$]
  \label{def:realization_until}
\mbox{}
\begin{enumerate}
\item If $\set{\psi, \coal{A} \vp \until \psi} \subseteq \D \in
  S_n^{\theta}$, then $\coal{A} \vp \until \psi$ is \de{realized} at
  $\D$ in $\tableau{T}_n^{\theta}$;
\item If $\set{\vp, \coalnext{A} \coal{A} \vp \until \psi, \coal{A}
    \vp \until \psi} \subseteq \D$ and for every $\move{} \in D (\D,
  \coalnext{A} \coal{A} \vp \until \psi)$, there exists $\D' \in
  S_n^{\theta}$ such that
  \begin{itemize}
  \item $\D \movearrow \D'$ and
  \item $\coal{A} \vp \until \psi$ is realized at $\D'$ in
    $\tableau{T}_n^{\theta}$,
  \end{itemize}
  then $\coal{A} \vp \until \psi$ is \de{realized} at $\D$ in
  $\tableau{T}_n^{\theta}$.
\end{enumerate}
\end{definition}

The definition of realization for eventualities of the form $\neg
\coalbox{A} \vp$ is analogous:

\begin{definition}[Realization of eventuality $\neg \coalbox{A} \vp$]
  \label{def:realization_neg_box}
  \mbox{}
  \begin{enumerate}
  \item If $\set{\neg \vp, \neg \coalbox{A} \vp} \subseteq \D \in
    S_n^{\theta}$, then $\neg \coalbox{A} \vp$ is \de{realized} at
    $\D$ in $\tableau{T}_n^{\theta}$;
  \item If $\set{\neg \coalnext{A} \coalbox{A} \vp, \neg \coalbox{A}
      \vp} \subseteq \D$ and, for every $\move{} \in D (\D, \neg
    \coalnext{A} \coalbox{A} \vp)$ there exists $\D' \in S_n^{\theta}$
    such that
    \begin{itemize}
    \item $\D \movearrow \D'$ and
    \item $\neg \coalbox{A} \vp$ is realized at $\D'$ in
      $\tableau{T}_n^{\theta}$,
    \end{itemize}
    then $\neg \coalbox{A} \vp$ is \de{realized} at $\D$ in
    $\tableau{T}_n^{\theta}$.
  \end{enumerate}
\end{definition}

We can now state our third elimination rule.

\bigskip

\Rule{E3} If $\D \in S_n^{\theta}$ contains an eventuality that is not
realized at $\D$ in $\tableau{T}_n^{\theta}$, then obtain
$\tableau{T}_{n+1}^{\theta}$ by removing $\D$ from
$\tableau{T}_n^{\theta}$.

\bigskip

While implementation of the rules \Rule{E1} and \Rule{E2} is
straightforward, implementation of \Rule{E3} is less so.  It can be
done by computing the set of states realizing a given eventuality
$\xi$ in tableau $\tableau{T}_n^{\theta}$, say, by marking those
states that realize $\xi$ in $\tableau{T}_n^{\theta}$.  To formally
describe the procedure, we need some extra notation.

First, given $\D \in S_n^{\theta}$ and $\move{} \in D (\D)$, we denote
by $\suctab{\D}{\move{}}$ the set \crh{\D' \in S_n^{\theta}}{\D
  \movearrow \D'}.  Secondly, given a formula $\chi$, we write,
abusing set-theoretic notation, $\chi \in \tableau{T}_n^{\theta}$ to
mean that $\chi \in \D$ for some $\D \in S_n^{\theta}$.

We now describe the marking procedure for $\tableau{T}_n^{\theta}$
with respect to eventuality $\xi$.  We first do so for eventualities
of the form $\coal{A} \vp \until \psi$.  Initially, we mark $\D$ if
$\psi\in \D$. Afterwards, we
repeat the following computation for every $\D \in S_n^{\theta}$ that
is still unmarked: mark $\D$ if, for every $\move{}\, \in D(\D,
\coalnext{A} \coal{A} \vp \until \psi)$, there exists at least one
$\D'$ such that $\D' \in \suctab{\D}{\move{}}$ and $\D'$ is marked.
The procedure is over when no more states can get marked.

The procedure for computing eventualities of the form $\neg
\coalbox{A} \vp$ is similar.  Initially, we mark $\D$ if
$\neg  \vp \in \D$.  Afterwards, we repeat the
following computation for every $\D \in S_n^{\theta}$ that is still
unmarked: mark $\D$ if, for every $\move{}\, \in D(\D, \neg
\coalnext{A} \coalbox{A} \vp \}$, there exists at least one $\D'$ such
that $\D' \in \suctab{\D}{\move{}}$ and $\D'$ is marked.  The
procedure is over when no more states can get marked.

\begin{lemma}
  \label{lem:realization_rank}
  Let $\D \in S_n^{\theta}$ and $\xi \in \tableau{T}_n^{\theta}$ be an
  eventuality.  Then, $\xi$ is realized at $\D$ in
  $\tableau{T}_n^{\theta}$ iff $\D$ is marked in
  $\tableau{T}_n^{\theta}$ with respect to $\xi$.
\end{lemma}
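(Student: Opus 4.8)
The plan is to prove both directions by induction, after recasting each side of the equivalence in a suitable form. On the marking side, note that the marking procedure for an eventuality $\xi = \coal{A}\vp\until\psi$ builds an increasing chain $M_0 \subseteq M_1 \subseteq \cdots$ of subsets of $S_n^{\theta}$: here $M_0 = \{\D \in S_n^{\theta} : \psi \in \D\}$, and $M_{i+1}$ is $M_i$ together with every $\D$ such that for each $\move{} \in D(\D, \coalnext{A}\coal{A}\vp\until\psi)$ some $\D' \in \suctab{\D}{\move{}}$ lies in $M_i$. Since $S_n^{\theta}$ is finite this chain stabilises, the set of states marked with respect to $\xi$ is $\bigcup_i M_i$, and by monotonicity this set does not depend on the order in which states are processed. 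On the other side, Definition~\ref{def:realization_until} is an inductive definition, so it assigns to every $\D$ at which $\xi$ is realized a finite rank $\rank{\D}$, namely the least number of nested applications of its clauses needed to witness realization at $\D$. I will prove ``$\xi$ realized at $\D$'' $\Leftrightarrow$ ``$\D \in \bigcup_i M_i$'' by induction on $\rank{\D}$ for the forward direction and by induction on the least $i$ with $\D \in M_i$ for the converse. Throughout I assume $\xi \in \D$, which is the only situation in which \Rule{E3} ever consults realization; the eventualities $\neg\coalbox{A}\vp$ are handled identically, via Definition~\ref{def:realization_neg_box} and the second marking procedure.

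For ``realized $\Rightarrow$ marked'': if $\rank{\D}=0$ then clause~(1) of Definition~\ref{def:realization_until} applies, so $\psi \in \D$ and hence $\D \in M_0$. If $\rank{\D}=k+1$, clause~(2) applies, so for every $\move{} \in D(\D, \coalnext{A}\coal{A}\vp\until\psi)$ there is $\D' \in \suctab{\D}{\move{}}$ at which $\xi$ is realized with rank $\leq k$; by the induction hypothesis each such $\D'$ lies in some $M_{i(\move{})}$. Letting $i^{\star}$ be the maximum of the finitely many indices $i(\move{})$, every $\move{} \in D(\D, \coalnext{A}\coal{A}\vp\until\psi)$ has a successor in $M_{i^{\star}}$, whence $\D \in M_{i^{\star}+1}$ and $\D$ is marked.

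For ``marked $\Rightarrow$ realized'': induct on the least $i$ with $\D \in M_i$. If $i=0$, then $\psi \in \D$; combined with the standing assumption $\xi \in \D$, this is exactly the premise of clause~(1), so $\xi$ is realized at $\D$. If $i=j+1$ with $\D \notin M_j$, then $\psi \notin \D$ (since $M_0 \subseteq M_j$), so downward saturation applied to the $\beta$-formula $\xi$ forces $\vp \con \coalnext{A}\coal{A}\vp\until\psi \in \D$, and applying it once more to that $\alpha$-formula forces $\vp \in \D$ and $\coalnext{A}\coal{A}\vp\until\psi \in \D$; moreover, by the definition of $M_{j+1}$, every $\move{} \in D(\D, \coalnext{A}\coal{A}\vp\until\psi)$ has a successor $\D' \in \suctab{\D}{\move{}} \cap M_j$, each of which realizes $\xi$ by the induction hypothesis. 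Hence every premise of clause~(2) holds and $\xi$ is realized at $\D$. The only step here that is more than bookkeeping is precisely this last one: the marking rule literally tests a condition weaker than clause~(2), since it never checks that $\vp$ and $\coalnext{A}\coal{A}\vp\until\psi$ belong to $\D$, so the proof must explain why this omission is harmless — and the explanation is that tableau states are downward saturated, together with the fact that $D(\D, \coalnext{A}\coal{A}\vp\until\psi)$, being the class of move vectors designated by \Rule{Next} to witness the positive next-time formula $\coalnext{A}\coal{A}\vp\until\psi$, is well defined exactly when that formula belongs to $\D$. Once this is in place both inductions close, and the case of $\neg\coalbox{A}\vp$ introduces nothing new.
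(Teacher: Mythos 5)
Your argument is, in substance, the one the paper leaves as ``straightforward'': you identify the marking procedure with the stages $M_0 \subseteq M_1 \subseteq \cdots$ of a monotone fixpoint computation, identify realization with membership in the corresponding inductively defined set, and close the equivalence by two inductions. You also correctly observe that the equivalence only holds under the standing hypothesis $\xi \in \D$ (which is the only case \Rule{E3} consults), and that downward saturation supplies the conjuncts $\vp$ and $\coalnext{A} \coal{A} \vp \until \psi$ that clause (2) of Definition~\ref{def:realization_until} demands but the marking rule never tests.

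There is, however, one load-bearing step you leave implicit, of exactly the same kind as the one you do flag. In the ``marked $\Rightarrow$ realized'' induction you apply the induction hypothesis to the successors $\D' \in \suctab{\D}{\move{}} \cap M_j$; but your induction hypothesis is stated only for states containing $\xi$, and realization at $\D'$ literally requires $\coal{A} \vp \until \psi \in \D'$ (both clauses of Definition~\ref{def:realization_until} demand it). Since a state can be marked merely because it contains $\psi$, you must argue that every $\D'$ reachable from $\D$ along some $\move{} \in D(\D, \coalnext{A} \coal{A} \vp \until \psi)$ does contain $\coal{A} \vp \until \psi$. This is true, but it is a property of the tableau construction rather than of the marking and realization definitions alone: by \Rule{Next}, the prestate $\G_{\move{}}$ associated with any such $\move{}$ contains $\coal{A} \vp \until \psi$ (every agent in $A$ votes for the position of $\coalnext{A} \coal{A} \vp \until \psi$ in the list $\mathbb{L}$), and by \Rule{SR} and \Rule{PR} every state in $\st{\G_{\move{}}}$ extends $\G_{\move{}}$. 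With that observation, and its analogue for $\neg \coalbox{A} \vp$ via the co-move clause of \Rule{Next}, your induction closes and the proof is complete.
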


\begin{proof}
  Straightforward.
\end{proof}

Thus, the application of \Rule{E3} in tableau $\tableau{T}_n^{\theta}$
with respect to eventuality $\xi$ consists of carrying out the marking
procedure with respect to $\xi$ and then removing all the states that
contain $\xi$, but have not been marked with respect to $\xi$.

We have thus far described individual rules and how they can be
implemented.  To describe the state elimination phase as a whole, it
is crucial to specify the order of application of those rules.

First, we apply \Rule{E1} to all the states of
$\tableau{T}_0^{\theta}$; it is clear that, once it is done, we do not
need to go back to \Rule{E1} again.  The cases of \Rule{E2} and
\Rule{E3} are slightly more involved.  Having applied \Rule{E3} to the
states of the tableau, we could have removed, for some $\D$, all the
states accessible from it along the arrows marked by some $\move{} \in
\moves{}{\D}$; hence, we need to reapply \Rule{E2} to the resultant
tableau to get rid of such $\D$'s.  Conversely, having applied
\Rule{E2}, we could have removed some states that were instrumental in
realizing certain eventualities; hence, having applied \Rule{E2}, we
need to reapply \Rule{E3}.  Furthermore, we cannot stop the procedure
unless we have checked that \emph{all} eventualities are realized.
Thus, what we need is to apply \Rule{E3} and \Rule{E2} in a dovetailed
sequence that cycles through all the eventualities. More precisely, we
arrange all the eventualities occurring in the tableau obtained from
$\tableau{T}_0^{\theta}$ by having applied \Rule{E1} to
$\tableau{T}_0^{\theta}$ in the list $\xi_1, \ldots, \xi_m$.  Then, we
proceed in cycles. Each cycle consists of alternatingly applying
\Rule{E3} to the pending eventuality, and then applying \Rule{E2} to
the tableau resulting from that application, until all the
eventualities have been dealt with; once we reach $\xi_m$, we loop
back to $\xi_1$.  The cycles are repeated until, having gone through
the whole cycle, we have not had to remove any states.

Once that happens, the state elimination phase is over.  The resultant
graph we call the \fm{final tableau} for $\theta$ and denote by
$\tableau{T}^{\theta}$.

\begin{definition}
  The final tableau $\tableau{T}^{\theta}$ is \de{open} if $\theta \in
  \D$ for some $\D \in S^{\theta}$; otherwise, $\tableau{T}^{\theta}$
  is \de{closed}.
\end{definition}

The tableau procedure returns ``no'' if the final tableau is closed;
otherwise, it returns ``yes'' and, moreover, provides sufficient
information for producing a finite model satisfying $\theta$; that
construction is described in section \ref{sec:completeness}.

\setcounter{example}{0}

\begin{example}[continued]
  Consider the initial tableau for our formula $\theta_1$.  First, no
  states of that tableau contain patent inconsistencies.  Moreover,
  all four states containing the eventuality $\neg \coalbox{1} p$
  (which is the only eventuality in the tableau) get marked with
  respect to $\neg \coalbox{1} p$. Indeed, $\D_2$ and $\D_4$ get
  marked since they contain $\neg p$; $\D_1$ get marked since all the
  relevant move vectors (i.e, those for which $\mathbf{neg}(\move{}) =
  1$ and agent $2$ votes negatively; there are 3 such move vectors:
  $(0, 2), (1, 2), (2, 1)$) lead to a state $\D_4$ that is marked;
  finally $\D_3$ is marked as the only move vector going out of that
  state leads to a marked state, $\D_4$.  Lastly, all the states have
  all the required successors.  Therefore, no state of the initial
  tableau gets eliminated, hence, the final tableau  $\tableau{T}^{\theta_1}$ coincides with the initial tableau  $\tableau{T}_0^{\theta_1}$.  Thus, $\tableau{T}^{\theta_1}$ is open (it contains two states, $\D_1$ and $\D_2$, containing $\theta_1$);  therefore, $\theta_1 = \neg \coalbox{1} p \con \coalnext{1,2} p \con  \neg \coalnext{2} \neg p$ is satisfiable.
\end{example}

\begin{example}[continued]
  Consider the initial tableau for the formula $\theta_2$.  We have to
  eliminate state $\D'_2$ due to \Rule{E1}, as it contains a patent
  inconsistency.  For the same reason, we have to eliminate $\D'_3$.
  Furthermore, state $\D'_4$ gets eliminated due to \Rule{E3} since it
  contains an eventuality $\coal{2} p \until q$, but does not get
  marked with respect to it, as the only consistent state reachable
  from $\D'_4$ along the ``relevant'' move vector $(0, 1)$, which is
  $\D'_4$ itself, does not contain $q$.  Then, $\D'_1$ has to be
  eliminated, as both states reachable form it along the move vector
  $(0, 1)$ have been eliminated.  Thus, all the states containing the
  input formula, namely $\D'_1$ and $\D'_2$, are eliminated from the
  tableau.  Therefore, the final tableau for $\theta_2$ is closed and,
  hence, $\theta_2 = \coalbox{1} \neg q \con \coal{2} p \until q$ is
  unsatisfiable.
\end{example}

%%%%%%%%%%%%%%%%%%%%%%%%%%%%%%%%%%
\subsection{Incremental tableaux for CTL}
\label{sec:tableauCTL}
%%%%%%%%%%%%%%%%%%%%%%%%%%%%%%%%%%

The branching-time logic \CTL\ can be regarded as the one-agent
version of \ATL, where $\coal{\emptyset}$ is the universal path
quantifier and $\coal{1}$ is the existential path quantifier. Thus,
after due simplifications (notably, of the rule \Rule{Next}), our
tableau method produces an incremental tableau procedure for \CTL,
which is practically more efficient (in the average case) than Emerson
and Halpern's top-down tableau from~\cite{EmHal85}.

%%%%%%%%%%%%%%%%%%%%%%%%%%%%%%%%%%
\subsection{Complexity of the procedure}
\label{sec:complexity}
%%%%%%%%%%%%%%%%%%%%%%%%%%%%%%%%%%

We now estimate the complexity of the tableau procedure described
above.  As before, let $n = \card{\theta}$, $k =
\card{\agents_{\theta}}$, and let $c$ be the constant from the
equation $\card{\ecl{\theta}} = c \cdot \card{\theta}$ (recall
Lemma~\ref{lm:size_of_ecl}).

As we have seen, the costs of the construction phase and of the
prestate elimination phase are, respectively, \bigo{2^{(k+2) \log(cn)
    + 2cn}} and \bigo{2^{3cn}} steps.  It, thus, remains to estimate
the time required for the state elimination phase. During that phase,
we first apply \Rule{E1} to every state of the initial tableau.  To do
that, we need to go through \bigo{2^{cn}} states, and for each formula
$\vp$ of each state $\D$ check whether $\neg \vp \in \D$; this can be
done in time $\bigo{2^{cn} \times (cn)^2} = \bigo{2^{2 \log(cn) +
    cn}}$.

Next, we embark on the sequence of dovetailed applications of
\Rule{E3} and \Rule{E2}. We do it in cycles, whose number is bounded
by $\bigo{2^{cn}}$, each cycle involving going through all the
eventualities, whose number is bounded by \bigo{cn}.  For each
eventuality $\xi$, we have to, first, run the marking procedure with
respect to $\xi$ and then remove, as prescribed by \Rule{E3}, all the
relevant unmarked states; then, we apply the procedure implementing
\Rule{E2}.  The latter procedure can be carried out in $\bigo{2^{cn}
  \times (cn^k + cn)} = \bigo{2^{k \log(cn) + n} + 2^{\log (cn) + cn}}
= \bigo{2^{k \log(cn) + n}}$ steps, as we should go through
$\bigo{2^{cn}}$ states, doing the check for $\bigo{(cn)^k}$ moves
marking outgoing arrows, and possibly deleting \bigo{cn} formulas of
the state.  Since $k \leq n$, the cost of applying \Rule{E2} is
bounded by $\bigo{2^{n \log(cn) + n}} = \bigo{2^{n (\log (cn) + 1)}}$
steps.  As for the former, we need to compute the set of states
realizing $\xi$ in $\tableau{T}_n^{\theta}$, which can be done in
\bigo{2^{k \log (cn) + 3cn}} steps, as we do at most $\bigo{2^{cn}}$
``global'' status updates, each time updating the status for at most
\bigo{2^{cn}} states, each of these updates requiring looking at
$\bigo{(cn)^k}$ possible moves, which as several outgoing arrows can
be marked with the same move, can be repeated at most $\bigo{2^{cn}}$
times.  (For simplicity, we disregard the cost of applying deleting
states with unrealized eventualities, as its complexity, \bigo{2^{cn}
  \times cn}, is clearly dominated by the complexity of the marking
procedure.)  Thus, the whole sequence of dovetailed applications of
\Rule{E2} and \Rule{E3} requires $\bigo{(2^{cn} \times cn) \times
  (2^{k \log(cn) + n} + 2^{k \log (cn) + 3cn})} = \bigo{2^{(k+1) \log
    (cn) + 4cn}}$.

Thus, the overall complexity of our tableau procedure is
$\bigo{2^{(k+2) \log(cn) + 2cn}} + \bigo{2^{3cn}} + \bigo{2^{(k+1)
    \log (cn) + 4cn}}$.  As $k \leq n + 1$, this expression is bounded
by $\bigo{2^{n \log n + 5cn}} = \bigo{2^{2n \log n }} = \bigo{2^{2
    \card{\theta} \log \card{\theta}}}$.  This upper bound appears to
be better than the one claimed in~\cite{WLWW06} for the top-down
tableaux developed therein (namely, \bigo{2^{n^2}}); a more careful
analysis reveals, however, that the upper bound for tableaux
from~\cite{WLWW06} is within \bigo{2^{2n \log n }}, too.

%%%%%%%%%%%%%%%%%%%%%%%%%%%%%%%%%%
%%%%%%%%%%%%%%%%%%%%%%%%%%%%%%%%%%
\section{Soundness and completeness}
\label{sec:atl-tableaux-sound-and-complete}
%%%%%%%%%%%%%%%%%%%%%%%%%%%%%%%%%%
%%%%%%%%%%%%%%%%%%%%%%%%%%%%%%%%%%

We now prove that the tableau procedure described above is sound and
complete with respect to \ATL\ semantics as defined in
section~\ref{sec:semantics}; in algorithmic terminology, we show that
the procedure is correct.

%%%%%%%%%%%%%%%%%%%%%%%%%%%%%%%%%%
\subsection{Soundness}
\label{sec:soundness}
%%%%%%%%%%%%%%%%%%%%%%%%%%%%%%%%%%

Technically, soundness of a tableau procedure amounts to claiming that
if the input formula $\theta$ is satisfiable, then the final tableau
$\tableau{T}^{\theta}$ is open.

Before going into the technical details, we give an informal outline
of the proof.  The tableau procedure for the input formula $\theta$
starts off with creating a single prestate \set{\theta}.  Then, we
unwind \set{\theta} into states, each of which contains $\theta$.  To
establish soundness, it suffices to show that at least one these
states survives to the end of the procedure and is, thus, part of a
final tableau.

We start out by showing (Lemma~\ref{lem:propositional_consistency})
that if a prestate $\G$ is satisfiable, then at least one state
created from $\G$ using \Rule{SR} is also satisfiable.  In
particular, it ensures that if $\theta$ is satisfiable, then so is at
least one state obtained by \Rule{SR} form $\set{\theta}$.  To ensure
soundness, it is enough to show that this state never gets eliminated
from the tableau.

To that end, we first show (Lemma~\ref{lem:sat}) that, given a
satisfiable state $\D$, all the prestates created from $\D$ by
\Rule{Next}---each of which is associated with a move vector, say
$\sigma$---are satisfiable; according to
Lemma~\ref{lem:propositional_consistency}, each of these prestates
will give rise to at least one satisfiable state.  It follows that, if
a tableau state $\D$ is satisfiable, then for every move vector
$\sigma$ at $\D$, in the initial tableau, $\D$ will have at least one
satisfiable successor reachable by an arrow marked with $\sigma$;
hence, if $\D$ is satisfiable, it will not be eliminated on account of
\Rule{E2}.  Lastly, we show that no satisfiable states contain
unrealized eventualities (in the sense of
Definitions~\ref{def:realization_until} and
\ref{def:realization_neg_box}), and thus cannot be removed from the
tableau on account of \Rule{E3}.  Thus, we show that a satisfiable
state of the pretableau cannot be removed on account of any of the
state elimination rules and, therefore, survives to the end of the
procedure. In particular, this means that at least one state obtained
from the initial prestate $\theta$, and thus containing $\theta$,
survives to the end of the procedure---hence, the final tableau for
$\theta$ is open, as desired.

We start with the lemma that essentially asserts that the
``state-creation'' component of our tableaux preserves satisfiability.

\begin{lemma}
  \label{lem:propositional_consistency}
  Let $\G$ be a prestate of $\tableau{P}^{\theta}$ and let
  \sat{M}{s}{\G} for some CGM \mmodel{M} and some $s \in \mmodel{M}$.
  Then, \sat{M}{s}{\D} holds for at least one $\D \in \st{\G}$.
\end{lemma}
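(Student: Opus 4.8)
The plan is to argue entirely inside the finite set \ecl{\theta} and to exploit the fact that the $\alpha$/$\beta$ tables of Section~\ref{sec:alpha-beta-atl} are semantically sound at every state of every CGM. First I would record the following \emph{local correctness} of the tables, established by inspection: for every CGM \mmodel{M} and every state $s$, if \sat{M}{s}{\alpha} then \sat{M}{s}{\alpha_1} and \sat{M}{s}{\alpha_2}, and if \sat{M}{s}{\beta} then \sat{M}{s}{\beta_1} or \sat{M}{s}{\beta_2}. The entries for the boolean connectives are immediate from the classical truth conditions; the entries for $\coalbox{A}$ and $\coal{A} \until$ (and their negations) follow from Corollary~\ref{cor:fixpoint}, applying the equivalences directly in the positive case and after negating both sides in the negative case; and the entry for $\neg \coalnext{\agents_{\theta}} \vp$ is exactly the equivalence \sat{M}{s}{\neg \coalnext{\agents_{\theta}} \vp} iff \sat{M}{s}{\coalnext{\emptyset} \neg \vp} noted right after the tables.

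Next I would build a downward saturated superset of $\G$ all of whose members are true at $s$. Starting from $\G$ and using local correctness, I iterate: as long as the current finite set is not downward saturated, I either add the missing $\alpha_1$ or $\alpha_2$ for some offending $\alpha$ in it, or add whichever of $\beta_1,\beta_2$ holds at $s$ for some offending $\beta$ in it; by local correctness every formula so added is again true at $s$, so the invariant ``the current set is finite, contains $\G$, and is satisfied at $s$'' is preserved. Since every prestate of $\tableau{P}^{\theta}$ is a subset of the finite set \ecl{\theta}, and \ecl{\theta} is closed under passing from $\alpha$ to $\alpha_1,\alpha_2$ and from $\beta$ to $\beta_1,\beta_2$, the process never leaves \ecl{\theta} and strictly enlarges the set at each step, so it terminates, yielding a downward saturated $\D^{*}$ with $\G \subseteq \D^{*}$ and \sat{M}{s}{\D^{*}}.

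Then I would extract a genuinely \emph{minimal} downward saturated extension. The family of downward saturated sets $\D'$ with $\G \subseteq \D' \subseteq \D^{*}$ is finite and nonempty (it contains $\D^{*}$), hence has a $\subseteq$-minimal member $\D$. Any downward saturated $\D'$ with $\G \subseteq \D' \subsetneq \D$ would belong to this family and contradict minimality, so $\D$ is a minimal downward saturated extension of $\G$ in the sense of Definition~\ref{def:min-downward-saturated-extension}, and $\D \subseteq \D^{*}$ gives \sat{M}{s}{\D}. By \Rule{SR}, the set $\D$ --- or $\D \cup \set{\coalnext{\agents_{\theta}} \truth}$ if $\D$ contains no positive or proper negative next-time formula (step~2 of \Rule{SR}), and in either case possibly identified with a set-theoretically equal state already present (step~4) --- lies in $\st{\G}$. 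Since $\coalnext{\agents_{\theta}} \truth$ is true at every state of every CGM (any $\agents_{\theta}$-move has all its outcomes satisfying $\truth$), we still have \sat{M}{s}{\cdot} for this element of $\st{\G}$, which is exactly the claim.

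The only point needing a little care --- and where I would make the argument airtight --- is the termination of the saturation construction, which reduces to the claim that \ecl{\theta} is closed under the $\alpha$/$\beta$ decomposition; this is a routine case check of the two tables against the definitions of \cl{\theta} and \ecl{\theta}. Everything else is bookkeeping, and no genuinely hard step is involved.
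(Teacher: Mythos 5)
Your proof is correct and is precisely the elaboration of the argument the paper leaves as ``straightforward'' (with its pointer to the remark on $\neg \coalnext{\agents_{\theta}} \vp$ at the end of Section~\ref{sec:alpha-beta-atl}): semantic soundness of the $\alpha$/$\beta$ tables, saturation inside the finite decomposition-closed set \ecl{\theta}, extraction of a $\subseteq$-minimal downward saturated extension, and the observation that the validity of $\coalnext{\agents_{\theta}} \truth$ makes step~2 of \Rule{SR} harmless. No gaps; the points you flag as needing care (closure of \ecl{\theta} under decomposition, termination) are indeed the only ones, and you handle them correctly.
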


\begin{proof}
  Straightforward (see a remark at the end of
  section~\ref{sec:alpha-beta-atl}, though).
\end{proof}

The next lemma shows that \Rule{Next} creates from satisfiable states
satisfiable prestates (to see this, compare the condition of the lemma
with Remark~\ref{rem:at_most_one_neg_fma}).

\begin{lemma}
  \label{lem:sat}
  Let $\Phi = \set{\coalnext{A_1} \vp_1, \ldots, \coalnext{A_{m}}
    \vp_{m}, \neg \coalnext{A'} \psi}$ be a set of formulae such that
  $A_i \inter A_j = \emptyset$ for every $1 \leq i, j \leq {m}$ and
  $A_i \subseteq A'$ for every $1 \leq i \leq {m}$.  Let
  \sat{M}{s}{\Phi} for some CGM \mmodel{M} and $s \in \mmodel{M}$.
  Let, furthermore, $\move{A_i} \in \moves{A_i}{s}$ be an $A_i$-move
  witnessing the truth of $\coalnext{A_i} \vp_i$ at $s$, for each $1
  \leq i \leq {m}$, and let, finally, $\comove{A'} \in
  \comoves{A'}{s}$ be a co-$A'$-move witnessing the truth of $\neg
  \coalnext{A'} \psi$ at $s$.  Then, there exists $s' \in out (s,
  \move{A_1}) \inter \ldots \inter out (s, \move{A_m}) \inter out (s,
  \comove{A'})$ such that \sat{M}{s'}{\set{\vp_1, \ldots, \vp_{m},
      \neg \psi}}.
\end{lemma}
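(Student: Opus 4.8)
The plan is to build a single move vector $\sigma\in D(s)$ that simultaneously extends each of the witnessing $A_i$-moves $\move{A_i}$ and is produced by the witnessing co-$A'$-move $\comove{A'}$ on a suitably chosen argument; the state $s':=\delta(s,\sigma)$ will then automatically lie in every required outcome set and satisfy $\set{\vp_1,\ldots,\vp_m,\neg\psi}$.

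First, the disjointness hypothesis is to be read as applying to distinct indices (for $i=j$ it is vacuous), so $A_1,\ldots,A_m$ are pairwise disjoint; since moreover each $A_i\subseteq A'$, the union $B:=A_1\union\ldots\union A_m$ satisfies $B\subseteq A'$, and the $A_i$-moves can be glued into a single $B$-move $\move{B}$ by setting $\move{B}(a):=\move{A_i}(a)$ whenever $a\in A_i$ (well-defined by disjointness) and $\move{B}(a'):=\illegalmove$ for $a'\notin B$. Because the frame guarantees $\nmoves{a}{s}\geq 1$ for every agent $a$, we can extend $\move{B}$ to some $A'$-move $\tau_{A'}\in\moves{A'}{s}$: take $\tau_{A'}(a)=\move{B}(a)$ for $a\in B$ and an arbitrary legal move $\tau_{A'}(a)\in\moves{a}{s}$ for $a\in A'\setminus B$ (and $\illegalmove$ off $A'$). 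In particular $\move{B}\sqsubseteq\tau_{A'}$, hence $\move{A_i}\sqsubseteq\tau_{A'}$ for every $i$.

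Next, put $\sigma:=\comove{A'}(\tau_{A'})\in D(s)$ and $s':=\delta(s,\sigma)$. By the definition of a co-$A'$-move, $\tau_{A'}\sqsubseteq\sigma$, so $\move{A_i}\sqsubseteq\sigma$ for each $i$; together with $\delta(s,\sigma)=s'$ this gives $s'\in out(s,\move{A_i})$ for every $i$, whence $\sat{M}{s'}{\vp_i}$ since $\move{A_i}$ witnesses $\coalnext{A_i}\vp_i$ at $s$. On the other side, $s'=\delta(s,\comove{A'}(\tau_{A'}))$ lies in $out(s,\comove{A'})$ by Definition~\ref{def:comove_outcome}, so $\sat{M}{s'}{\neg\psi}$ since $\comove{A'}$ witnesses $\neg\coalnext{A'}\psi$ at $s$ (Theorem~\ref{thr:GorDrim}(1)). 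Collecting these, $s'\in out(s,\move{A_1})\inter\ldots\inter out(s,\move{A_m})\inter out(s,\comove{A'})$ and $\sat{M}{s'}{\set{\vp_1,\ldots,\vp_m,\neg\psi}}$, which is the desired conclusion.

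There is no genuinely hard step: the whole argument is bookkeeping with the extension relation $\sqsubseteq$, and it is exactly the hypotheses $A_i\inter A_j=\emptyset$ and $A_i\subseteq A'$ (precisely the properties of $\G_{\move{}}$ recorded in Remark~\ref{rem:at_most_one_neg_fma}) that make the single gluing-and-extension possible. The only points deserving a moment's care are the well-definedness of $\move{B}$, the degenerate case $m=0$ (where $B=\emptyset$ and one simply applies $\comove{A'}$ to any $A'$-move), and the appeal to $\nmoves{a}{s}\geq 1$ needed to extend $\move{B}$ to a full $A'$-move.
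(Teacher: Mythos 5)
Your proof is correct and follows essentially the same route as the paper's: fuse the pairwise-disjoint $\move{A_i}$ into a single $B$-move with $B=A_1\union\ldots\union A_m\subseteq A'$, extend it to an $A'$-move, apply $\comove{A'}$ to obtain a full move vector $\sigma$, and read off that $s'=\delta(s,\sigma)$ lies in every required outcome set. Your write-up is in fact more careful than the paper's two-sentence argument (which appears to contain a slip, speaking of extending the fused move to a move of the coalition $\agents_{\theta}\setminus A'$ where an $A'$-move is what is needed), and your attention to well-definedness and the degenerate case $m=0$ is appropriate.
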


\begin{proof}
  As $A_i \inter A_j = \emptyset$ for every $1 \leq i, j \leq {m}$ and
  $A_i \subseteq A'$ for every $1 \leq i \leq {m}$, all the moves
  \move{A_i}, where $1 \leq i \leq m$, can be ``fused'' into a move
  \move{A_1 \union ... \union A_{m}}.  Then, the application of the
  co-move $\comove{A'}$ to any extension of \move{A_1 \union
    ... \union A_{m}} to a move of the coalition $\agents_{\theta}
  \setminus A' \supseteq A_1 \union ... \union A_{m}$ produces a move
  vector $\move{}$ such that $s' = \delta(s, \move{})$ satisfies both
  properties from the statement of the lemma.
\end{proof}

The preceding two lemmas show that from satisfiable (pre)states we
produce satisfiable (pre)states.  This, in particular, implies two
things: first, at least one of the states containing the input formula
$\theta$ is satisfiable and, second, satisfiable states never get
eliminated due to \Rule{E2}.  It is also clear that a satisfiable
state can not contain a propositional inconsistency and thus be
removed due to \Rule{E1}.

Therefore, all that remains to show is that \Rule{E3} does not
eliminate from tableaux satisfiable states.  To that end, we will need
some extra definitions and pieces of notations drawing analogies
between what happens in CGMs and tableaux
(Definition~\ref{def:tableau_outcome_set} through Notational
convention~\ref{not:tableaux}).

In what follows, we treat labels of the arrows of the tableaux as move
vectors; the concepts of $A$-move, and all the concomitant definitions
and notation are then used in exactly the same way as for CGFs (see
section~\ref{sec:CGS}); analogously for co-$A$-moves (see
section~\ref{sec:goranko-drimmelen-semantics}).  We only explicitly
mention what notion (i.e., the one relating to the semantics of \ATL\
or to tableaux) is referred to if the context leaves room for
ambiguity. The only notion that differs between \ATL-semantics and the
\ATL-tableaux is that of ``outcome'' of (CGF vs. tableau) moves and
co-moves. Unlike the former, the latter are generally not unique, as
there might be several outgoing arrows from a state $\D$ labeled with
the same ``move vector'' $\move{}$.  We, however, define an outcome
set of a tableau $A$-move \move{A} to contain exactly one state
obtained from $\D$ by following a given $\move{} \sqsupseteq \move{A}$
to make them resemble outcomes of $A$-moves in CGFs.

\begin{definition}
  \label{def:tableau_outcome_set}
  Let $\D \in S_n^{\theta}$ and $\move{A} \in D_A (\D)$. An
  \de{outcome set} of $\move{A}$ at $\D$ is a minimal set of states $X
  \subseteq S_n^{\theta}$ such that, for every $\sigma \sqsupseteq
  \move{A}$, there exists exactly one $\D' \in X$ such that $\D
  \movearrow \D'$.
\end{definition}

Outcome sets for tableau co-moves are defined analogously:

\begin{definition}
  \label{def:tableau_outcome_set_comoves}
  Let $\D \in S_n^{\theta}$ and $\comove{A} \in D^c_A (\D)$. An
  \de{outcome set} of $\comove{A}$ at $\D$ is a minimal set of states
  $X \subseteq S_n^{\theta}$ such that, for every $\move{A} \in
  \moves{A}{\D}$, there exists exactly one $\D' \in X$ such that $\D
  \stackrel{\comove{A}(\move{A})}{\longrightarrow} \D'$.
\end{definition}

\begin{notation}
  \label{not:tableaux}
  \mbox{}
  \begin{enumerate}
  \item Whenever we write $\coalnext{A_p} \vp_p \in \D \in
    S_n^{\theta}$, we mean that $\coalnext{A_p} \vp_p$ is the $p$-th
    formula in the linear ordering of the next-time formulae of $\D$
    induced as part of applying the \Rule{Next} rule to $\D$.  We use
    the notation $\neg \coalnext{A'_q} \psi_q \in \D \in S_n^{\theta}$
    in an analogous way.
  \item Given $\coalnext{A_p} \vp_p \in \D \in S_n^{\theta}$, by
    $\move{A_p}[\coalnext{A_p} \vp_p]$ we denote (the unique) tableau
    $A_p$-move $\move{A_p} \in \moves{A_p}{\D}$ such that
    $\move{A_p}(a) = p$ for every $a \in A_p$.
  \item Given a proper $\neg \coalnext{A'_q} \psi_q \in \D \in
    S_n^{\theta}$, by $\comove{A'_q}[\neg \coalnext{A'_q} \psi_q]$ we
    denote (the unique) tableau co-$A'_q$-move satisfying the
    following condition: $\mathbf{neg}(\comove{A'_q}(\move{A'_q})) =
    q$ and $\agents_{\theta} - A'_q \subseteq
    N(\comove{A'_q}(\move{A'_q}))$ for every $\move{A'_q} \in
    \moves{A'_q}{\D}$.
  \end{enumerate}
\end{notation}

We now get down to proving that \Rule{E3} does not eliminate any
satisfiable states.  We need to show that if a tableau
$\tableau{T}^{\theta}_n$ contains a state $\D$ that is satisfiable and
contains an eventuality $\xi$, then $\xi$ is realized at $\D$. This
will be accomplished by showing that $\tableau{T}^{\theta}_n$
``contains'' a structure (more precisely, a tree) that, in a sense to
be made precise, ``witnesses'' the realization of $\xi$ at $\D$ in
$\tableau{T}^{\theta}_n$.  This tree will, in a sense to be made
precise, emulate a tree of runs effected by a strategy or co-strategy
that ``realizes'' an eventuality in a model.  This simulation is going
to be carried out step-by-step, each step, i.e. $A$-move (in the case
of $\coal{A} \vp \until \psi$) or co-$A$-move (in the case of $\neg
\coalbox{A} \vp$) will be simulated by a tableau move or co-move
associated with a respective eventuality.  That this step-by-step
simulation can be done is proved in the next two lemmas (together with
their corollaries).

\begin{lemma}
  \label{lem:outcome-sat}
  Let $\coalnext{A_p} \vp_p \in \D \in S_n^{\theta}$ and let
  \sat{M}{s}{\D} for some CGM $\mmodel{M}$ and state $s \in
  \mmodel{M}$.  Let, furthermore, $\move{A_p} \in \moves{A_p}{s}$ be
  an $A_p$-move witnessing the truth of $\coalnext{A_p} \vp_p$ at
  $s$.  Then, there exists in $\tableau{T}_n^{\theta}$ an outcome set
  $X$ of $\move{A_p}[\coalnext{A_p} \vp_p]$ such that for each $\D'
  \in X$ there exists $s' \in out (s, \move{A_p})$ such that
  $\sat{M}{s'}{\D'}$.
\end{lemma}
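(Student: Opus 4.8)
The statement essentially says that a tableau $A_p$-move realizing $\coalnext{A_p}\vp_p$ syntactically can be ``simulated'' by a semantic $A_p$-move realizing $\coalnext{A_p}\vp_p$ in a model, in the sense that every state in some tableau outcome set is satisfied at a state lying in the semantic outcome. The plan is to unpack the definitions and use the construction rule \Rule{Next} together with Lemma~\ref{lem:propositional_consistency}.

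First I would recall what $\move{A_p}[\coalnext{A_p}\vp_p]$ is: the unique tableau $A_p$-move at $\D$ sending each $a \in A_p$ to the position $p$ of $\coalnext{A_p}\vp_p$ in $\mathbb{L}$. I need to exhibit an outcome set $X$ of this move. By Definition~\ref{def:tableau_outcome_set}, $X$ must contain, for each move vector $\move{} \sqsupseteq \move{A_p}[\coalnext{A_p}\vp_p]$, exactly one state $\D'$ with $\D \movearrow \D'$. For each such $\move{}$, the rule \Rule{Next} created a prestate $\G_{\move{}}$, and since $\move{a} = p$ for every $a \in A_p$, the definition of $\G_{\move{}}$ puts $\vp_p \in \G_{\move{}}$. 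After the prestate elimination phase, $\D \movearrow \D'$ holds exactly when $\D' \in \st{\G_{\move{}}}$ for some $\move{}$, so each $\G_{\move{}}$ has been unwound into one or more tableau states, all of which are $\movearrow$-successors of $\D$ along $\move{}$.

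Now for the semantic side: I would pick an arbitrary $\move{} \sqsupseteq \move{A_p}[\coalnext{A_p}\vp_p]$ in $D(\D)$, and look at $\G_{\move{}}$. I claim $\G_{\move{}}$ is satisfiable at some $s'' \in out(s, \move{A_p})$, because every formula of $\G_{\move{}}$ is of the form $\vp_j$ for some $\coalnext{A_j}\vp_j \in \D$ with $\move{a} = j$ for all $a \in A_j$ --- but here the only such formula (given we are following a move vector extending $\move{A_p}$) is $\vp_p$ itself at the slots in $A_p$, so really I need to handle the general case where $\G_{\move{}}$ may contain several $\vp_j$'s and one negative formula $\neg\psi_q$. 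In fact Remark~\ref{rem:at_most_one_neg_fma} tells us the positive formulae in $\G_{\move{}}$ come from coalitions that are pairwise disjoint and all contained in the relevant $A'$, which is precisely the hypothesis of Lemma~\ref{lem:sat}. So I would apply Lemma~\ref{lem:sat} (taking $\move{A_i}$ to be semantic $A_i$-moves witnessing each $\coalnext{A_i}\vp_i \in \D$ whose index $i$ is voted for in $\move{}$, and using the witness $\move{A_p}$ for the particular $\coalnext{A_p}\vp_p$) to obtain an $s'' \in \bigcap_i out(s, \move{A_i}) \inter out(s, \comove{A'})$ with $\sat{M}{s''}{\G_{\move{}}}$; in particular $s'' \in out(s, \move{A_p})$. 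Then Lemma~\ref{lem:propositional_consistency} gives a $\D' \in \st{\G_{\move{}}}$ with $\sat{M}{s''}{\D'}$. I collect one such $\D'$ for each $\move{} \sqsupseteq \move{A_p}[\coalnext{A_p}\vp_p]$; the resulting set is (a subset of, hence can be pruned to) an outcome set $X$, and by construction each $\D' \in X$ is satisfied at some $s' = s'' \in out(s, \move{A_p})$.

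The main obstacle will be bookkeeping the bijection between tableau move vectors extending $\move{A_p}[\coalnext{A_p}\vp_p]$ and the $\G_{\move{}}$'s, and checking carefully that the minimality condition in Definition~\ref{def:tableau_outcome_set} can be met --- i.e., that for each extending $\move{}$ we can pick \emph{exactly one} satisfiable $\D'$ (which is fine, we just pick one), and that distinct $\move{}$'s needn't give distinct $\D'$'s (also fine, since $X$ need not be indexed by move vectors, only needs to supply a witness for each). A minor point to verify is that the ``glued'' arrows (several $\move{}$'s labelling the same edge) do not cause trouble: if $\move{}$ and $\move{}'$ both extend $\move{A_p}[\coalnext{A_p}\vp_p]$ and $\G_{\move{}} = \G_{\move{}'}$, we simply use the same $\D'$ and same $s''$ for both, which is consistent with the outcome-set definition. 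Everything else is a routine unpacking of \Rule{Next}, \Rule{PR}, and the two cited lemmas.
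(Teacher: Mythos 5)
Your proposal is correct and follows essentially the same route as the paper's proof: for each move vector extending $\move{A_p}[\coalnext{A_p} \vp_p]$, the prestate created by \Rule{Next} satisfies the hypotheses of Lemma~\ref{lem:sat} (via Remark~\ref{rem:at_most_one_neg_fma}), hence is satisfied at some $s' \in out(s, \move{A_p})$, and is then unwound into a satisfiable tableau state. The only cosmetic difference is that the paper handles a prestate with no negative next-time formula by adjoining the valid formula $\neg \coalnext{\agents_{\theta}} \bot$ before invoking Lemma~\ref{lem:sat}, and performs the downward-saturated extension inline (choosing the true disjunct of each $\beta$-formula) rather than by citing Lemma~\ref{lem:propositional_consistency}.
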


\begin{proof}
  Consider the set of prestates $Y = \crh{\G \in \sucpr{\D}}{\D
    \movearrow \G \text{ for some } \sigma \sqsupseteq
    \move{A_p}[\coalnext{A_p} \vp_p]}$. Take an arbitrary $\G \in Y$.
  It follows immediately from the \Rule{Next} rule (see
  Remark~\ref{rem:at_most_one_neg_fma}) that $\G$ (which must contain
  $\vp_p$) is either of the form \set{\vp_1, \ldots, \vp_{m}, \neg
    \psi}, where
  $$\set{\coalnext{A_1} \vp_1, \ldots, \coalnext{A_{m}} \vp_{m}, \neg
    \coalnext{A'} \psi} \subseteq \D$$ satisfies the condition of
  Lemma~\ref{lem:sat}, or of the form \set{\vp_1, \ldots, \vp_{m}},
  where $$\set{\coalnext{A_1} \vp_1, \ldots, \coalnext{A_{m}} \vp_{m}}
  \subseteq \D$$ and $A_i \inter A_j = \emptyset$ for every $1 \leq i,
  j \leq {m}$.

  As \sat{M}{s}{\D}, in the former case, by Lemma~\ref{lem:sat}, there
  exists $s' \in out (s, \sigma_{A_p})$ with \sat{M}{s'}{\G}. Then
  $\G$ can be extended to a downward saturated set $\D'$ containing at
  least one next-time formula ($\coalnext{\agents_{\theta}} \truth$ if
  nothing else) such that \sat{M}{s'}{\D'}. This is done by choosing,
  for every $\beta$-formula to be dealt with, the ``disjunct'' that is
  actually true in \mmodel{M} at $s'$ (if both ``disjuncts'' happen to
  be true at $s'$, the choice is arbitrary).

  In the latter case, the same conclusion follows from
  Lemma~\ref{lem:sat} again, by adding to $\D$ the valid formula $\neg
  \coalnext{\agents_{\theta}} \bot$.

  To complete the proof of the lemma, take $X$ to be the set of all
  tableau states $\D'$ obtainable from the prestates in $Y$ in the way
  described above.
\end{proof}

\begin{corollary}
  \label{cor:beta_out_sat}
  Let $\coalnext{A_p} \vp_p \in \D \in S_n^{\theta}$ and let
  \sat{M}{s}{\D} for some CGM $\mmodel{M}$ and state $s \in
  \mmodel{M}$.  Let, furthermore, $\move{A_p} \in \moves{A_p}{s}$ be
  an $A_p$-move witnessing the truth of $\coalnext{A_p} \vp_p$ at $s$
  and let $\chi \in \ecl{\theta}$ be a $\beta$-formula, whose
  $\beta_i$-associate ($i \in \set{1, 2}$) is $\chi_i$.  Then, there
  exists in $\tableau{T}_n^{\theta}$ an outcome set $X_{\chi_i}$ of
  $\move{A_p}[\coalnext{A_p} \vp_p]$ such that for every $\D' \in
  X_{\chi_i}$ there exists $s' \in out (s, \move{A_p})$ such that
  $\sat{M}{s'}{\D'}$, and moreover, if $\sat{M}{s'}{\chi_i}$, then
  $\chi_i \in \D'$.
\end{corollary}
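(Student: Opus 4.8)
The plan is to refine the proof of Lemma~\ref{lem:outcome-sat} by adding one bookkeeping constraint to the construction carried out there. Recall that that argument works with the set of prestates $Y = \crh{\G \in \sucpr{\D}}{\D \movearrow \G \text{ for some } \sigma \sqsupseteq \move{A_p}[\coalnext{A_p} \vp_p]}$; each $\G \in Y$ is shown, via Lemma~\ref{lem:sat}, to be satisfied at some $s' \in out(s, \move{A_p})$, and is then extended to a state $\D' \in \st{\G}$ satisfied at the same $s'$ by repeatedly resolving $\beta$-formulae in favour of some disjunct that happens to be true at $s'$ (and adjoining $\coalnext{\agents_\theta}\truth$ if no next-time formula is present). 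The only change needed is: whenever, during the saturation of some $\G \in Y$, the $\beta$-formula $\chi$ of the statement is the one being resolved and $\sat{M}{s'}{\chi_i}$ holds, we resolve $\chi$ in favour of $\chi_i$ rather than arbitrarily. Since $\chi_i$ is then true at $s'$, this is still an admissible choice and the process still yields a minimal downward-saturated extension of $\G$, so the resulting $\D'$ still lies in $\st{\G}$.

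Taking $X_{\chi_i}$ to be the set of the states $\D'$ so obtained, one per $\G \in Y$, the proof that $X_{\chi_i}$ is an outcome set of $\move{A_p}[\coalnext{A_p} \vp_p]$ at $\D$ in $\tableau{T}_n^{\theta}$, and that every $\D' \in X_{\chi_i}$ is satisfied at some $s' \in out(s, \move{A_p})$, is verbatim the proof of Lemma~\ref{lem:outcome-sat}. The extra clause — that $\sat{M}{s'}{\chi_i}$ implies $\chi_i \in \D'$ — is then immediate from the biased choice above, since $\chi$ occurs among the formulae processed while saturating $\G$: in the situations in which the corollary is invoked, $\chi$ already belongs to every $\G \in Y$ (it is precisely the formula $\vp_p$ sitting under the coalition modality $\coalnext{A_p}$), and hence is resolved as a $\beta$-formula in the course of the saturation.

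I do not expect a real obstacle here: the statement is a direct strengthening of Lemma~\ref{lem:outcome-sat} obtained by pinning down one of the otherwise-arbitrary disjunct choices, so essentially nothing beyond that lemma has to be reproved. The one point that needs a moment's care is compatibility with the requirement that $\D'$ be a \emph{minimal} downward-saturated extension of $\G$ (so that it is a genuine tableau state): this holds because such a minimal extension can always be built by resolving one $\beta$-formula at a time, and at every such step any disjunct true at $s'$ — in particular $\chi_i$ when $\chi$ comes up — is a legitimate choice.
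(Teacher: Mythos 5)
Your proposal is correct and is essentially the paper's own proof: the paper likewise constructs $X_{\chi_i}$ exactly as in Lemma~\ref{lem:outcome-sat}, with the single modification that when the $\beta$-formula $\chi$ is resolved one chooses $\chi_i$ whenever it is true at $s'$. Your additional remarks (that the biased choice still yields a minimal downward-saturated extension, and that $\chi$ does occur in every $\G \in Y$ in the situations where the corollary is applied) are accurate and only make explicit what the paper leaves implicit.
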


\begin{proof}
  Construct $X_{\chi_i}$ in a way $X$ was constructed in the proof of
  the preceding lemma, with a single modification: when dealing with
  the formula $\chi$, instead of choosing arbitrarily between $\chi_1$
  and $\chi_2$, choose $\chi_i$ whenever it is true at $s'$.
\end{proof}

\begin{lemma}
  \label{lem:outcome-sat-comoves}
  Let $\neg \coalnext{A'_q} \psi_q \in \D \in S_n^{\theta}$ and let
  \sat{M}{s}{\D} for some CGM $\mmodel{M}$ and state $s \in
  \mmodel{M}$.  Let, furthermore, $\comove{A'_q} \in
  \comoves{A'_q}{s}$ be a co-$A'_q$-move witnessing the truth of $\neg
  \coalnext{A'_q} \psi_q$ at $s$.  Then, there exists in
  $\tableau{T}_n^{\theta}$ an outcome set $X$ of $\comove{A'_q}[\neg
  \coalnext{A'_q} \psi_q]$ such that for each $\D' \in X$ there exists
  $s' \in out (s, \comove{A'_q})$ such that $\sat{M}{s'}{\D'}$.
\end{lemma}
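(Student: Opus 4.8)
The plan is to run, with the obvious dualisation, the same argument as for Lemma~\ref{lem:outcome-sat}: there the witnessing object was an $A_p$-move for a positive next-time formula, and here it is the given co-$A'_q$-move $\comove{A'_q}$ for the proper negative next-time formula $\neg \coalnext{A'_q} \psi_q$. For each tableau $A'_q$-move $\move{A'_q} \in \moves{A'_q}{\D}$ I put $\sigma := \comove{A'_q}[\neg \coalnext{A'_q} \psi_q](\move{A'_q})$, the value of the tableau co-$A'_q$-move of Notational convention~\ref{not:tableaux}(3), and let $\G_\sigma$ be the prestate with $\D \movearrow \G_\sigma$. By the definition of that co-move we have $\mathbf{neg}(\sigma) = q$ and $\agents_{\theta} \setminus A'_q \subseteq N(\sigma)$, so $\neg \psi_q \in \G_\sigma$; and by Remark~\ref{rem:at_most_one_neg_fma} the prestate $\G_\sigma$ has precisely the form $\set{\vp_{p_1}, \ldots, \vp_{p_l}, \neg \psi_q}$, where $\set{\coalnext{A_{p_1}} \vp_{p_1}, \ldots, \coalnext{A_{p_l}} \vp_{p_l}, \neg \coalnext{A'_q} \psi_q} \subseteq \D$ with $A_{p_i} \inter A_{p_j} = \emptyset$ for $i \ne j$ and $A_{p_i} \subseteq A'_q$ for every $i$ --- that is, exactly the hypotheses of Lemma~\ref{lem:sat}.

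Since \sat{M}{s}{\D}, each $\coalnext{A_{p_i}} \vp_{p_i}$ is true at $s$; I pick a CGM $A_{p_i}$-move $\move{A_{p_i}}$ witnessing it and keep the given CGM co-$A'_q$-move $\comove{A'_q}$ as the negative witness. Lemma~\ref{lem:sat} then supplies a state $s'_\sigma \in out(s, \comove{A'_q}) \inter out(s, \move{A_{p_1}}) \inter \cdots \inter out(s, \move{A_{p_l}})$ with \sat{M}{s'_\sigma}{\G_\sigma}; in particular $s'_\sigma \in out(s, \comove{A'_q})$. Exactly as in the proof of Lemma~\ref{lem:outcome-sat}, I extend $\G_\sigma$ to a downward saturated tableau state $\D'_\sigma$ with $\D \movearrow \D'_\sigma$ and \sat{M}{s'_\sigma}{\D'_\sigma}, choosing for each $\beta$-formula that must be decomposed a disjunct true at $s'_\sigma$ and adjoining $\coalnext{\agents_{\theta}} \truth$ if $\G_\sigma$ carries no next-time formula (when $l = 0$ one may first adjoin the valid $\neg \coalnext{\agents_{\theta}} \bot$ to $\D$, as in the cited proof).

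Finally I let $X$ consist of one such $\D'_\sigma$ for each vector $\sigma$ in the range of $\comove{A'_q}[\neg \coalnext{A'_q} \psi_q]$. Because a tableau co-$A'_q$-move extends the $A'_q$-move it is applied to, distinct $\move{A'_q}$ yield distinct $\sigma$, so this assignment is well defined, and --- with a little care about repetitions, exactly as at the end of the proof of Lemma~\ref{lem:outcome-sat} --- $X$ can be taken minimal in the sense of Definition~\ref{def:tableau_outcome_set_comoves}; hence $X$ is an outcome set of $\comove{A'_q}[\neg \coalnext{A'_q} \psi_q]$ at $\D$, and by construction every $\D' \in X$ is satisfied at some $s' \in out(s, \comove{A'_q})$, which is the claim.

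The only place that needs real care is the identification of the shape of each $\G_\sigma$ with the hypotheses of Lemma~\ref{lem:sat} --- at most one negative conjunct, pairwise-disjoint coalitions of the positive conjuncts, each contained in $A'_q$ --- but this is precisely what Remark~\ref{rem:at_most_one_neg_fma} records, so that work is already done. The remaining steps (keeping the $s'_\sigma$ inside $out(s, \comove{A'_q})$, and checking that $X$ is a genuine outcome set) are routine and parallel, almost word for word, the corresponding steps in the proof of Lemma~\ref{lem:outcome-sat}.
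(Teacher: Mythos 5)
Your proposal is correct and follows essentially the same route as the paper's own proof: identify the prestates reached via the tableau co-move $\comove{A'_q}[\neg \coalnext{A'_q} \psi_q]$, use Remark~\ref{rem:at_most_one_neg_fma} to see that each has exactly the shape required by Lemma~\ref{lem:sat} (handling the degenerate case with no positive conjuncts by adjoining a valid next-time formula), obtain a satisfiable $s' \in out(s, \comove{A'_q})$ for each such prestate, extend to a downward-saturated tableau state by choosing disjuncts true at $s'$, and collect these states into an outcome set $X$. No gaps.
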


\begin{proof}
  Consider the set of prestates $Y = \crh{\G \in \sucpr{\D}}{\D
    \movearrow \G, \sigma = \comove{A'_q}[\neg \coalnext{A'_q} \psi_q]
    (\move{A'_q}) \linebreak \text{ for some } \move{A'_q} \in
    \moves{A'_q}{\D}}$. Take an arbitrary $\G \in Y$.  It follows
  immediately from the \Rule{Next} rule (see
  Remark~\ref{rem:at_most_one_neg_fma}) that $\G$ (which must contain
  $\neg \psi_q$) is either of the form \set{\vp_1, \ldots, \vp_{m},
    \neg \psi_q}, where
  $$\set{\coalnext{A_1} \vp_1, \ldots, \coalnext{A_{m}} \vp_{m}, \neg
    \coalnext{A'_q} \psi_q} \subseteq \D$$ satisfies the condition of
  Lemma~\ref{lem:sat}, or of the form \set{\neg \psi_q}.

  As \sat{M}{s}{\D}, in the former case, by Lemma~\ref{lem:sat}, there
  exists $s' \in out (s, \comove{A'_q})$ with \sat{M}{s'}{\G}. Then
  $\G$ can be extended to a downward saturated set $\D'$ containing at
  least one next-time formula ($\coalnext{\agents_{\theta}} \truth$ if
  nothing else) such that \sat{M}{s'}{\D'}. This is done by choosing,
  for every $\beta$-formula to be dealt with, the ``disjunct'' that is
  actually true in \mmodel{M} at $s'$ (if both ``disjuncts'' are true,
  choose arbitrarily).

  In the latter case, the same conclusion follows from
  Lemma~\ref{lem:sat} again, by adding to $\D$ the valid formula
  $\coalnext{\emptyset}\truth.$

  To complete the proof of the lemma, take $X$ to be the set of all
  tableau states $\D'$ obtainable from the prestates in $Y$ in the way
  described above.
\end{proof}

\begin{corollary}
  \label{cor:beta_out_sat_comoves}
  Let $\neg \coalnext{A'_q} \psi_q \in \D \in S_n^{\theta}$ and let
  \sat{M}{s}{\D} for some CGM $\mmodel{M}$ and state $s \in
  \mmodel{M}$.  Let, furthermore, $\comove{A'_q} \in
  \comoves{A'_q}{s}$ be a co-$A'_q$-move witnessing the truth of $\neg
  \coalnext{A'_q} \psi_q$ at $s$ and let $\chi \in \ecl{\theta}$ be a
  $\beta$-formula, whose $\beta_i$-associate ($i \in \set{1, 2}$) is
  $\chi_i$.  Then, there exists in $\tableau{T}_n^{\theta}$ an outcome
  set $X_{\chi_i}$ of $\comove{A'_q}[\neg \coalnext{A'_q} \psi_q]$
  such that for every $\D' \in X_{\chi_i}$ there exists $s' \in out
  (s, \comove{A'_q})$ such that $\sat{M}{s'}{\D'}$, and moreover, if
  $\sat{M}{s'}{\chi_i}$, then $\chi_i \in \D'$.
\end{corollary}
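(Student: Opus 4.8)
The plan is to derive this corollary from Lemma~\ref{lem:outcome-sat-comoves} by exactly the same refinement that produces Corollary~\ref{cor:beta_out_sat} from Lemma~\ref{lem:outcome-sat}. So first I would re-run the construction in the proof of Lemma~\ref{lem:outcome-sat-comoves}: starting from the set $Y$ of prestates $\G$ reachable from $\D$ along an arrow labeled $\comove{A'_q}[\neg \coalnext{A'_q} \psi_q](\move{A'_q})$ for some $\move{A'_q} \in \moves{A'_q}{\D}$, use Lemma~\ref{lem:sat} to pick, for each $\G \in Y$, a state $s' \in out(s, \comove{A'_q})$ with $\sat{M}{s'}{\G}$, and then extend $\G$ to a minimal downward saturated set $\D'$ with $\sat{M}{s'}{\D'}$ by repeatedly resolving each $\beta$-formula that must be broken up in favour of a ``disjunct'' that holds at $s'$.

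The single change is in the disjunct-resolution step: when the procedure has to break up the distinguished $\beta$-formula $\chi$, instead of resolving it arbitrarily I would resolve it in favour of $\chi_i$ whenever $\sat{M}{s'}{\chi_i}$. This is admissible for downward saturation precisely because $\chi_i$ is then true at $s'$, so the resulting $\D'$ still satisfies $\sat{M}{s'}{\D'}$, and it contains $\chi_i$ whenever $\chi_i$ holds at $s'$. Taking $X_{\chi_i}$ to be the set of all states $\D'$ obtained in this way from the prestates in $Y$ yields the desired outcome set of $\comove{A'_q}[\neg \coalnext{A'_q} \psi_q]$, with the extra ``moreover'' clause built in by construction; the statement that for each $\D' \in X_{\chi_i}$ there exists $s' \in out(s, \comove{A'_q})$ with $\sat{M}{s'}{\D'}$ is inherited verbatim from Lemma~\ref{lem:outcome-sat-comoves}.

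I do not expect a real obstacle here: all the work is already done in Lemma~\ref{lem:sat} and Lemma~\ref{lem:outcome-sat-comoves}, and the only point needing care is the consistency of the biased choice of disjunct for $\chi$ with the requirement that the chosen disjunct be true at $s'$ --- which is immediate, since the bias is applied only in the case $\sat{M}{s'}{\chi_i}$, so it never forces a disjunct that fails at $s'$. Since $\chi \in \ecl{\theta}$, imposing this choice stays within the finite bookkeeping the construction already maintains, so minimality of the downward saturated extension is unaffected.
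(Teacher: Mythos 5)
Your proposal is correct and matches the paper's own argument: the paper proves this corollary by declaring it ``analogous to the proof of Corollary~\ref{cor:beta_out_sat}'', which in turn reruns the construction of the preceding lemma (here Lemma~\ref{lem:outcome-sat-comoves}) with exactly the biased disjunct choice you describe --- resolving $\chi$ in favour of $\chi_i$ whenever $\chi_i$ is true at $s'$. Nothing is missing.
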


\begin{proof}
  Analogous to the proof of Corollary~\ref{cor:beta_out_sat}.
\end{proof}

We now show that the tableau moves (for eventualities of the form
$\coal{A} \vp \until \psi$) and co-moves (for eventualities of the
form $\neg \coalbox{A} \vp$) whose existence was established in the
preceding two lemmas can be stitched together into what we call
eventuality realization witness trees\footnote{In the context of this
  paper, by a tree we mean any directed, connected, and acyclic graph,
  each node of which, except one, the root, has exactly one incoming
  edge.}.  Theses trees, as already mentioned, simulate trees of
runs effected in models by (co-)strategies. It will then follow that
the existence of such a tree for a state $\D$ means that it cannot be
removed from a tableau due to \Rule{E3}.

\begin{definition}
  \label{def:colored_trees}
  Let $\tree{R} = (R, \rightarrow)$ be a tree and $X$ be a non-empty
  set.  An \de{$X$-coloring} of \tree{R} is a mapping $c: R \mapsto
  X$. When such mapping is fixed, we say that \tree{R} is
  \de{$X$-colored}.
\end{definition}

\begin{definition}
  \label{def:realisation_witness_tree_until}
  A \de{realization witness tree for the eventuality $\coal{A} \vp
    \until \psi$ at state $\D \in S_n^{\theta}$} is a finite
  $S_n^{\theta}$-colored tree $\mathcal{R} = (R, \rightarrow)$ such
  that
  \begin{enumerate}
  \item the root of $\mathcal{R}$ is colored with $\D$;
  \item if an interior node of $\mathcal{R}$ is colored with $\D'$,
    then $\set{\vp, \coalnext{A} \coal{A} \vp \until \psi, \coal{A}
      \vp \until \psi} \subseteq \D'$;
  \item for every interior node $w$ of $\mathcal{R}$ colored with
    $\D'$, the children of $w$ are colored bijectively with
    the states from an outcome set of $\move{A}[\coalnext{A} \coal{A}
    \vp \until \psi] \in D_A(\D')$;
  \item if a leaf of $\mathcal{R}$ is colored with $\D'$, then
    $\set{\psi, \coal{A} \vp \until \psi} \subseteq \D'$.
  \end{enumerate}
\end{definition}

\begin{definition}
  \label{def:realisation_witness_tree_negbox}
  A \de{realization witness tree for the eventuality $\neg \coalbox{A}
    \vp$ at state $\D \in S_n^{\theta}$} is a finite
  $S_n^{\theta}$-colored tree $\mathcal{R} = (R, \rightarrow)$ such
  that
  \begin{enumerate}
  \item the root of $\mathcal{R}$ is colored with $\D$;
  \item if an interior node of $\mathcal{R}$ is colored with $\D'$,
    then $\set{\neg \coalnext{A} \coalbox{A} \vp, \neg \coalbox{A}
      \vp} \subseteq \D'$;
  \item for every interior node $w$ of $\mathcal{R}$ colored with
    $\D'$, the children of $w$ are colored bijectively with the
    states from an outcome set of $\comove{A}[\neg \coalnext{A}
    \coalbox{A} \vp]$;
  \item if a leaf of $\mathcal{R}$ is colored with $\D'$, then
    $\set{\neg \vp, \neg \coalbox{A} \vp} \subseteq \D'$.
  \end{enumerate}
\end{definition}

\begin{lemma}
  \label{lm:realization_in_the_tree}
  Let $\mathcal{R} = (R, \rightarrow)$ be a realization witness tree
  for an eventuality $\xi$ at $\D \in S_n^{\theta}$.  Then, $\xi$ is
  realized in $\tableau{T}_n^{\theta}$ at every $\D'$ coloring a node
  of $R$---in particular, at $\D$ in $\tableau{T}_n^{\theta}$.
\end{lemma}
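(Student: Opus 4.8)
The plan is a bottom-up induction on $\mathcal{R}$. Precisely, I will show, by induction on the height of the subtree of $\mathcal{R}$ rooted at a node $w$, that the eventuality $\xi$ is realized at the state $\D_w$ coloring $w$ in $\tableau{T}_n^{\theta}$. This suffices for the lemma: the subtree of $\mathcal{R}$ rooted at any node $w$ is, by inspection of the relevant definition (Definition~\ref{def:realisation_witness_tree_until} when $\xi$ has the form $\coal{A}\vp\until\psi$, Definition~\ref{def:realisation_witness_tree_negbox} when it has the form $\neg\coalbox{A}\vp$), itself a realization witness tree for $\xi$ at $\D_w$; so once the claim is proved for roots it holds for every node of $R$, and applying it to the root of $\mathcal{R}$ yields realization at $\D$. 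I carry out the argument for $\xi = \coal{A}\vp\until\psi$; the case $\xi = \neg\coalbox{A}\vp$ is symmetric, replacing moves, outcome sets of moves (Definition~\ref{def:tableau_outcome_set}), Notation~\ref{not:tableaux}(2) and Definition~\ref{def:realization_until} by co-moves, outcome sets of co-moves (Definition~\ref{def:tableau_outcome_set_comoves}), Notation~\ref{not:tableaux}(3) and Definition~\ref{def:realization_neg_box}.

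If $w$ is a leaf colored $\D_w$, then clause~4 of Definition~\ref{def:realisation_witness_tree_until} gives $\set{\psi,\coal{A}\vp\until\psi}\subseteq\D_w$, so $\xi$ is realized at $\D_w$ by clause~1 of Definition~\ref{def:realization_until}. If $w$ is an interior node colored $\D':=\D_w$, then clause~2 of Definition~\ref{def:realisation_witness_tree_until} gives $\set{\vp,\coalnext{A}\coal{A}\vp\until\psi,\coal{A}\vp\until\psi}\subseteq\D'$, and clause~3 furnishes an outcome set $X$ of the $A$-move $\move{A}:=\move{A}[\coalnext{A}\coal{A}\vp\until\psi]\in D_A(\D')$ whose members color, bijectively, the children of $w$. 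By the definitions of $D(\D',\coalnext{A}\coal{A}\vp\until\psi)$ and of $\move{A}[\coalnext{A}\coal{A}\vp\until\psi]$ (Notation~\ref{not:tableaux}(2)), a move vector $\move{}$ lies in $D(\D',\coalnext{A}\coal{A}\vp\until\psi)$ exactly when $\move{}\sqsupseteq\move{A}$; and by Definition~\ref{def:tableau_outcome_set}, for each such $\move{}$ there is exactly one $\D''\in X$ with $\D'\movearrow\D''$. Each such $\D''$ colors a child of $w$, and the subtree of $\mathcal{R}$ rooted at that child has strictly smaller height, so the induction hypothesis gives that $\coal{A}\vp\until\psi$ is realized at $\D''$ in $\tableau{T}_n^{\theta}$. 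Hence for every $\move{}\in D(\D',\coalnext{A}\coal{A}\vp\until\psi)$ there is a $\D''\in S_n^{\theta}$ with $\D'\movearrow\D''$ at which $\xi$ is realized; this is exactly the hypothesis of clause~2 of Definition~\ref{def:realization_until}, so $\xi$ is realized at $\D'$, closing the induction.

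The $\neg\coalbox{A}\vp$ case proceeds identically: clauses~2 and~4 of Definition~\ref{def:realisation_witness_tree_negbox} deliver the membership requirements of Definition~\ref{def:realization_neg_box}, and clause~3 together with Definition~\ref{def:tableau_outcome_set_comoves} and Notation~\ref{not:tableaux}(3) supplies, for every $\move{}\in D(\D',\neg\coalnext{A}\coalbox{A}\vp)$, a $\movearrow$-successor of $\D'$ coloring a child of $w$, to which the induction hypothesis applies. I expect the only point needing real care — the main obstacle of the write-up rather than of the argument — to be the bookkeeping that identifies the family of $\movearrow$-arrows quantified over in Definitions~\ref{def:realization_until} and~\ref{def:realization_neg_box} via the $D(\D',\chi)$-notation with the arrows landing in an outcome set of the associated move or co-move $\move{A}[\chi]$, $\comove{A}[\chi]$ of Notation~\ref{not:tableaux}. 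In the positive (move) case this amounts to the identity ``$\move{}\sqsupseteq\move{A}[\coalnext{A}\vp]$ iff $\move{}\in D(\D',\coalnext{A}\vp)$'' used above; in the proper-negative (co-move) case one has to confirm, from the defining conditions $\mathbf{neg}(\cdot)=q$ and $\agents_{\theta}\setminus A'\subseteq N(\cdot)$, that the outcome set of $\comove{A}[\neg\coalnext{A'}\psi]$ provides a $\movearrow$-successor for every move vector in $D(\D',\neg\coalnext{A'}\psi)$. Granting this correspondence, the structural induction above applies verbatim to both eventuality forms.
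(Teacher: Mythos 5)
Your proof is correct and follows exactly the route the paper takes: the paper's proof is a one-line appeal to ``straightforward induction on the length of the longest path from a node colored by $\D'$ to a leaf of $\mathcal{R}$,'' which is precisely your bottom-up induction on subtree height. You simply spell out the details the paper leaves implicit, including the bookkeeping identification of $D(\D',\chi)$ with the move vectors covered by an outcome set of $\move{A}[\chi]$ (resp.\ $\comove{A}[\chi]$), which indeed works out since, in the co-move case, all move vectors in $D(\D',\neg\coalnext{A'}\psi)$ extending the same $A'$-move are sent by \Rule{Next} to the same prestate and hence share their successor states.
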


\begin{proof}
  Straightforward induction on the length of the longest path from a
  node colored by $\D'$ to a leaf of $\mathcal{R}$ (recall that
  realization of eventualities was defined in
  Definitions~\ref{def:realization_until} and
  \ref{def:realization_neg_box}).
\end{proof}

We now prove the existance of realization witness trees for
satisfiable states of tableaux containing eventualities.

\begin{lemma}
  \label{lem:existence_of_rwt_soundness}
  Let $\xi \in \D$ be an eventuality formula and $\D \in S_0^{\theta}$
  be satisfiable.  Then there exists a realization witness tree
  $\tree{R} = (R, \rightarrow)$ for $\xi$ at $\D \in S_0^{\theta}$.
  Moreover, every $\D'$ coloring a node of $R$ is satisfiable.
\end{lemma}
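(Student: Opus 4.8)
The plan is to build the realization witness tree $\tree{R}$ top-down, level by level, while maintaining as an invariant that every node colored by some $\D'$ is satisfiable, witnessed by a state $s_{\D'}$ in some CGM, \emph{together with} a (perfect-recall, or equivalently positional) strategy $\str{A}$ (in the $\coal{A}\vp\until\psi$ case) or co-strategy $\costr{A}$ (in the $\neg\coalbox{A}\vp$ case) at $s_{\D}$ realizing $\xi$ there, such that the node's $s_{\D'}$ lies on a run compliant with that (co-)strategy. Concretely, for $\xi = \coal{A}\vp\until\psi$, since $\D$ is satisfiable, pick $\mmodel{M},s_{\D}$ with $\sat{M}{s_\D}{\D}$; as $\xi\in\D$, by Theorem~\ref{thr:fixpoint} (the $\mu$-fixpoint characterization) there is a strategy $\str{A}$ at $s_\D$ realizing $\coal{A}\vp\until\psi$, and to each outcome run we can assign the least position at which $\psi$ becomes true along it. This assigns an ordinal-valued (in fact natural-number-valued, since $\mu$-closure for \ATL\ reaches a fixpoint at $\omega$) ``rank'' to $s_\D$; the construction will be an induction on this rank.

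First I would handle the base case: if the rank is $0$, i.e. $\sat{M}{s_\D}{\psi}$, then since $\D$ is downward saturated and $\xi,\psi$ must both be in $\D$ by saturation of the relevant $\beta$-formula (the disjunct choice having gone to $\psi$), $\D$ itself is a leaf and $\tree{R}$ is the one-node tree. For the inductive step, $\sat{M}{s_\D}{\vp}$ and $\sat{M}{s_\D}{\coalnext{A}\coal{A}\vp\until\psi}$; because $\D$ is downward saturated and contains no patent inconsistency (it is satisfiable), the $\beta$-rule forces $\vp, \coalnext{A}\coal{A}\vp\until\psi \in \D$, so $\D$ qualifies as an interior node and $\coalnext{A}\coal{A}\vp\until\psi$ is some $p$-th next-time formula of $\D$. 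Let $\move{A}$ be the $A$-move at $s_\D$ witnessing $\coalnext{A}\coal{A}\vp\until\psi$. By Lemma~\ref{lem:outcome-sat} (applied with this next-time formula) there is an outcome set $X = X_\D$ of $\move{A}[\coalnext{A}\coal{A}\vp\until\psi]$ in $\tableau{T}_n^\theta$ such that each $\D'\in X$ has some $s_{\D'}\in out(s_\D,\move{A})$ with $\sat{M}{s_{\D'}}{\D'}$; moreover, using Corollary~\ref{cor:beta_out_sat} with $\chi = \coal{A}\vp\until\psi$ (the $\beta$-formula whose $\beta_1$ is $\psi$), I can arrange that the disjunct actually true at $s_{\D'}$ ends up in $\D'$. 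Each such $s_{\D'}$ inherits from $\str{A}$ a strategy realizing $\coal{A}\vp\until\psi$ of \emph{strictly smaller} rank (the distance to the $\psi$-position along every residual run has dropped by one). So I attach $X$ as the children of the root, colored bijectively with $X$, and recursively build subtrees for each child by the inductive hypothesis; splicing these under the root yields the finite $\tree{R}$, finite because the rank strictly decreases and is well-founded. Conditions (1)--(4) of Definition~\ref{def:realisation_witness_tree_until} hold by construction, and satisfiability of every coloring state is exactly the maintained invariant.

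The case $\xi = \neg\coalbox{A}\vp$ is entirely parallel, using Theorem~\ref{thr:GorDrim}(2) / Theorem~\ref{thr:fixpoint} for the $\mu$-characterization of $\ext{\neg\coalbox{A}\vp}{}$, co-moves in place of moves, Lemma~\ref{lem:outcome-sat-comoves} and Corollary~\ref{cor:beta_out_sat_comoves} in place of Lemma~\ref{lem:outcome-sat} and Corollary~\ref{cor:beta_out_sat}, and Definition~\ref{def:realisation_witness_tree_negbox}; here the rank is the least position at which $\neg\vp$ holds along a run compliant with the realizing co-strategy. I expect the main obstacle to be the bookkeeping that makes the rank genuinely decrease uniformly across \emph{all} children simultaneously: one must be careful that the single move $\move{A}$ chosen at $s_\D$ to witness the next-time eventuality is the ``first step'' of the globally-fixed realizing strategy, so that the residual strategy at each outcome state still realizes the eventuality and does so with smaller rank; this is where the fixpoint characterization (as opposed to the raw strategy semantics) does the real work, since it guarantees the rank is a natural number and that ``one step of the $\mu$-approximation'' corresponds exactly to peeling off one application of $[\ext{\vp}{}\inter\coalnext{A}]$. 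The remaining details—downward saturation forcing the right formulas into $\D$ and into each $\D'$, and finiteness of the tree—are routine.
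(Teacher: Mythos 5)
Your construction is essentially the paper's: the same top-down assembly of the tree using Lemma~\ref{lem:outcome-sat} and Corollary~\ref{cor:beta_out_sat} to produce, for each node, an outcome set of the tableau move $\move{A}[\coalnext{A}\coal{A}\vp\until\psi]$ whose members are satisfied at model states in $out(s,\move{A})$, with the satisfiability of every coloring set maintained as the invariant. The one place you genuinely diverge is the finiteness argument. The paper fixes the perfect-recall strategies $\str{A}^{s'}$ witnessing $\coal{A}\vp\until\psi$ at the outcome states and follows their runs, concluding finiteness by contradiction: an infinite (finitely branching) tree would contain an infinite branch, hence a run compliant with the assembled strategy on which $\psi$ never holds. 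You instead induct on the $\mu$-approximant rank from Theorem~\ref{thr:fixpoint}. That works, but two points you partly gesture at need to be explicit: (i) the closure ordinal is $\omega$ (so ranks are natural numbers) only because CGFs are finitely branching --- each $D(s)$ is finite since every $d_a(s)$ is a natural number --- as $[\coalnext{A}]$ is not continuous over infinitely branching structures; and (ii) the $A$-move fed into Lemma~\ref{lem:outcome-sat} must be one whose outcomes all lie in the \emph{previous} approximant, not merely an arbitrary move witnessing $\coalnext{A}\coal{A}\vp\until\psi$, since the latter only guarantees successors in the fixpoint, not of smaller rank. You flag (ii) yourself; with both made precise, your rank argument is a clean, arguably more self-contained replacement for the paper's reductio, at the cost of invoking the fixpoint characterization where the paper uses only the strategy semantics. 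The $\neg\coalbox{A}\vp$ case transfers as you say.
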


\begin{proof}
  We only supply the full proof for eventualities of the form
  $\coal{A} \vp \until \psi$; we then indicate how to obtain the proof
  for eventualities of the form $\neg \coalbox{A} \vp$.

  If $\psi \in \D$, then we are done straight off---the realization
  witness tree is made up of a single node, the root, colored with
  $\D$. Hence, we only need to consider the case when $\psi \notin
  \D$. As $\Delta$ is downward saturated, then $\set{\vp, \coalnext{A}
    \coal{A} \vp \until \psi} \subseteq \D$.

  So, suppose that \sat{M}{s}{\D}; in particular, \sat{M}{s}{\vp} and
  \sat{M}{s}{\coalnext{A} \coal{A} \vp \until \psi}.  The latter means
  that there exists $\move{A} \in \moves{A}{s}$ such that $s' \in out
  (s, \move{A})$ implies \sat{M}{s'}{\coal{A} \vp \until \psi}.  Now,
  $\coalnext{A} \coal{A} \vp \until \psi$ is a positive next-time
  formula.  Since $\D$ is satisfiable, it does not contain a patent
  inconsistency; hence, the \Rule{Next} rule has been applied to it.
  As part of that application, $\coalnext{A} \coal{A} \vp \until \psi$
  has been assigned a place, say $p$, in the linear ordering of the
  next-time formulae of $\D$.  Furthermore, $\coal{A} \vp \until \psi$
  is a $\beta$-formula whose $\beta_2$ is $\psi$.  Therefore,
  Corollary~\ref{cor:beta_out_sat} is applicable to $\D$, $\chi =
  \coal{A} \vp \until \psi$, $\chi_1 = \coalnext{A} \coal{A} \vp
  \until \psi$, and $\chi_2 = \psi$. According to that corollary,
  there exists an outcome set $X_{\psi}$ of $\move{A} [\coalnext{A}
  \coal{A} \vp \until \psi]$ at $\D$ such that, for every $\D' \in
  X_{\psi}$, there exists $s' \in out (s, \move{A})$ such that
  \sat{M}{s'}{\D'} and, moreover, if \sat{M}{s'}{\psi}, then $\psi \in
  \D'$.  We start building the witness tree \tree{R} by constructing a
  simple tree (i.e., one with a single interior node, the root) whose
  root $r$ is colored with $\D$ and whose leaves are colored, in the
  way prescribed by
  Definition~\ref{def:realisation_witness_tree_until}, by the states
  from $X_{\psi}$.

  Next, since \sat{M}{s'}{\coal{A} \vp \until \psi} for every $s' \in
  out (s, \move{A})$, it follows that for every such $s'$ there exists
  a (perfect-recall) $A$-strategy $\str{A}^{s'}$ such that for every
  $\lambda \in out (s', \str{A}^{s'})$ there exists $i \geq 0$ with
  \sat{M}{\lambda[i]}{\psi} and \sat{M}{\lambda[j]}{\vp} holds for all
  $0 \leq j < i$.  Then, playing $\move{A}$ followed by
  playing $\str{A}^{s'}$ for the $s' \in out (s, \move{A})$ ``chosen''
  by the counter-coalition $\agents_{\theta} \setminus A$ constitutes
  a (perfect-recall) strategy \str{A} witnessing the truth of
  $\coal{A} \vp \until \psi$ at $s$.

  We, then, continue the construction of \tree{R} as follows.  For
  every $s' \in out (s, \move{A})$ (each such $s'$ has been matched by
  a node of \tree{R} at the initial stage of the construction of
  \tree{R}), we follow the (perfect-recall) strategy $\str{A}^{s'}$,
  matching every state $s''$ appearing as part of a run
  compliant with $\str{A}^{s'}$ and satisfying the requirement that
  $\notsat{M}{s''}{\psi}$ with a node $w''$ of \tree{R} and matching
  every $A$-move of $\str{A}^{s'}$ at $s''$ with the $A$-move in the
  tableau $\move{A} [\coalnext{A} \coal{A} \vp \until \psi] \in
  \moves{A}{\D''}$ for the state $\D''$ coloring the node $w''$.  In
  this way, we follow each $\str{A}^{s'}$ along each run, up to the
  point when we reach a state $t$ where $\psi$ is true; at that point
  we reach the leaf of the respective branch of the tree we are
  building, as by construction, the node associated with $t$ will be
  colored with a state containing both $\psi$ and $\coal{A} \vp
  \until \psi$.

  In the manner outlined above, we are guaranteed to build a tree
  satisfying all conditions of
  Definition~\ref{def:realisation_witness_tree_until}.  Indeed, the
  very way the tree is built guarantees that conditions (1-4) of that
  definition hold.  As for finiteness, assuming that the resultant
  tree is infinite implies that it contains an infinite branch
  colored with sets not containing $\psi$, which in turn implies the
  existence of $\lambda \in out (s, \str{A})$ such that for every $i
  \geq 0$ we have \sat{M}{\lambda[i]}{\lnot \psi}, which contradicts
  the fact that $\str{A}$ is a strategy witnessing the truth of
  $\coal{A} \vp \until \psi$ at $s$.

  Thus, we have obtained a realization witness tree \tree{R} for
  $\coal{A} \vp \until \psi$ at $\D$ in $\tableau{T}_n^{\theta}$.
  Moreover, it is clear from the way this tree has been built that
  every state coloring a node of \tree{R} is satisfiable (in
  \mmodel{M}).

  The proof for eventualities of the form $\neg \coalbox{A} \vp$ is
  completely analogous, with reference to
  Corollary~\ref{cor:beta_out_sat_comoves} rather than
  Corollary~\ref{cor:beta_out_sat}, using the fact that $\neg
  \coalbox{A} \vp$ is a $\beta$-formula, with $\beta_1 = \neg \vp$
  and $\beta_2 = \neg \coalnext{A} \coalbox{A} \vp$.
\end{proof}

\begin{theorem}[Soundness]
\label{thr:soundness}
  If $\theta$ is satisfiable, then $\tableau{T}^{\theta}$ is open.
\end{theorem}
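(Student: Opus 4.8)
The plan is to prove the invariant that \emph{no satisfiable state of $\tableau{T}_0^{\theta}$ is ever removed during the state elimination phase}, by induction on the elimination stages. Granting this, soundness follows immediately: if $\theta$ is satisfiable, say \sat{M}{s}{\theta} for some CGM $\mmodel{M}$ and $s \in \mmodel{M}$, then the prestate $\set{\theta}$ is satisfiable, so by Lemma~\ref{lem:propositional_consistency} there is a state $\D_0 \in \st{\set{\theta}}$ with \sat{M}{s}{\D_0}; since $\D_0$ is a downward-saturated extension of $\set{\theta}$ we have $\theta \in \D_0$, and by the invariant $\D_0 \in S^{\theta}$, so $\tableau{T}^{\theta}$ is open.

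For the induction, the base case ($\tableau{T}_0^{\theta}$, obtained from $\tableau{P}^{\theta}$ by \Rule{PR}, which deletes only prestates) is trivial. For the step, I would assume every satisfiable state of $\tableau{T}_0^{\theta}$ lies in $S_n^{\theta}$, take an arbitrary satisfiable state $\D$ of $\tableau{T}_0^{\theta}$ (witnessed by some $\mmodel{M}$, $s$), and check that none of \Rule{E1}, \Rule{E2}, \Rule{E3} applies to $\D$ in $\tableau{T}_n^{\theta}$. For \Rule{E1}: \sat{M}{s}{\D} forbids $\vp, \neg\vp \in \D$, so no patent inconsistency. For \Rule{E2}: fix $\move{} \in D(\D)$ and the associated prestate $\G_{\move{}} \in \sucpr{\D}$; by Remark~\ref{rem:at_most_one_neg_fma} the next-time formulae of $\D$ contributing to $\G_{\move{}}$ satisfy the coalition-disjointness and single-negative-formula hypotheses of Lemma~\ref{lem:sat}, so that lemma (augmented by the degenerate-case arguments from the proof of Lemma~\ref{lem:outcome-sat}: adding a valid next-time formula when $\G_{\move{}}$ has no proper negative next-time formula, or noting $\G_{\move{}} = \set{\top}$ is trivially satisfiable) gives a successor $s'$ of $s$ with \sat{M}{s'}{\G_{\move{}}}; then Lemma~\ref{lem:propositional_consistency} yields $\D' \in \st{\G_{\move{}}}$ with \sat{M}{s'}{\D'}, and after \Rule{PR} we have $\D \movearrow \D'$ in $\tableau{T}_0^{\theta}$, with $\D' \in S_n^{\theta}$ by the induction hypothesis; hence every $\move{} \in D(\D)$ has a surviving successor and \Rule{E2} does not fire. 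For \Rule{E3}: for each eventuality $\xi \in \D$, Lemma~\ref{lem:existence_of_rwt_soundness} supplies a realization witness tree for $\xi$ at $\D$ all of whose nodes are coloured by satisfiable states of $\tableau{T}_0^{\theta}$; by the induction hypothesis all these states are in $S_n^{\theta}$, so Lemma~\ref{lm:realization_in_the_tree} gives that $\xi$ is realized at $\D$ in $\tableau{T}_n^{\theta}$, and \Rule{E3} does not fire either. Thus $\D$ survives stage $n+1$, completing the induction.

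The step needing the most care is \Rule{E3}: one must verify that the realization witness tree produced by Lemma~\ref{lem:existence_of_rwt_soundness}, which is constructed inside $\tableau{T}_0^{\theta}$, is still a \emph{bona fide} realization witness tree in $\tableau{T}_n^{\theta}$ in the sense of Definitions~\ref{def:realisation_witness_tree_until} and~\ref{def:realisation_witness_tree_negbox} --- in particular, that the children of each interior node are still coloured bijectively with an outcome set of the relevant tableau move \emph{in $S_n^{\theta}$}. This is exactly where the induction hypothesis does its work: every node of the tree is coloured by a satisfiable state, hence all of them survive into $S_n^{\theta}$; and since the state elimination rules never redirect or delete an arrow leaving a surviving state, each of those outcome sets still covers precisely the move vectors extending the corresponding move and remains minimal, so the witness-tree conditions continue to hold. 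A minor secondary point is the bookkeeping in \Rule{E2} for states whose successor prestate under $\move{}$ receives no formulas or only positive next-time formulas, handled exactly as in the proof of Lemma~\ref{lem:outcome-sat}. With the invariant established, applying it to $\D_0$ finishes the argument.
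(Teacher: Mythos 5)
Your proposal is correct and follows essentially the same route as the paper: the same invariant (satisfiable states of $\tableau{T}_0^{\theta}$ are never eliminated), proved by induction on elimination stages, with \Rule{E1} handled directly, \Rule{E2} via Lemma~\ref{lem:sat} and Lemma~\ref{lem:propositional_consistency}, and \Rule{E3} via Lemma~\ref{lem:existence_of_rwt_soundness} and Lemma~\ref{lm:realization_in_the_tree}. Your explicit check that the realization witness tree built in $\tableau{T}_0^{\theta}$ remains a valid witness tree in $\tableau{T}_n^{\theta}$ is exactly the point the paper glosses with ``it is clear from the construction,'' so nothing is missing.
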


\begin{proof}
  We will prove that no satisfiable states are eliminated in the state
  elimination phase of the construction of the tableau.  The statement
  of the lemma will then follow immediately from
  Lemma~\ref{lem:propositional_consistency}, which implies that if the
  initial prestate $\set{\theta}$ is satisfiable, then at least one
  state of $\tableau{T}^{\theta}$ containing $\theta$ is also
  satisfiable.

  As the elimination process proceeds in stages, we will prove by
  induction on the number $n$ of stages that, for every $\D \in
  S_0^{\theta}$, if $\D$ is satisfiable, then \D\ will not be
  eliminated at stage $n$.

  The base case is trivial: when $n = 0$, no eliminations have yet
  been done, hence no satisfiable $\D$ has been eliminated.

  Now inductively assume that, if $\D' \in S_0^{\theta}$ is
  satisfiable, it has not been eliminated during the previous $n$
  stages of the elimination phase, and thus $\D' \in
  S_n^{\theta}$. Consider stage $n+1$ and a satisfiable $\D \in
  S_0^{\theta}$.  By inductive hypothesis, $\D \in S_n^{\theta}$.  We
  will now show that no elimination rule allows elimination of $\D$
  from $\tableau{T}_n^{\theta}$; hence, $\D$ will remain in
  $\tableau{T}_{n+1}^{\theta}$.

  \Rule{E1} As $\D$ is satisfiable, it clearly cannot contain
  both $\vp$ and $\neg \vp$; therefore, it cannot be eliminated from
  $\tableau{T}_n^{\theta}$ due to \Rule{E1}.

  \Rule{E2} Due to the form of the \Rule{Next}-rule (see
  Remark~\ref{rem:at_most_one_neg_fma}), it immediately follows from
  Lemma~\ref{lem:sat} that if $\D$ is satisfiable, then all the
  prestates in $\sucpr{\D}$ are satisfiable, too.  By virtue of
  Lemma~\ref{lem:propositional_consistency}, $\tableau{T}_0^{\theta}$
  contains for every $\sigma \in D(\D)$ at least one satisfiable $\D'$
  with $\D \movearrow \D'$.  By the inductive hypothesis, all such
  $\D'$ belong to $\tableau{T}_n^{\theta}$; thus, $\D$ can not be
  eliminated from $\tableau{T}_n^{\theta}$ due to \Rule{E2}.

  \Rule{E3} We need to show that if $\D$ is satisfiable and contains
  an eventuality $\xi$, then $\xi$ is realized at $\D$ in
  $\tableau{T}_n^{\theta}$.

  According to Lemma~\ref{lem:existence_of_rwt_soundness}, there
  exists a realization witness tree $\tree{R} = (R, \rightarrow)$ for
  $\xi$ at $\D$ in $\tableau{T}_0^{\theta}$ and every $\D'$ coloring
  a node of $R$ is satisfiable.  Therefore, by inductive hypothesis,
  each such $\D'$ belongs to $S^{\theta}_n$. Then, it is clear from
  the construction of $\tree{R}$ in the proof of
  Lemma~\ref{lem:existence_of_rwt_soundness}, that $\tree{R}$ will
  still be a realization witness tree for $\D$ in
  $\tableau{T}_n^{\theta}$.  Then, by virtue of
  Lemma~\ref{lm:realization_in_the_tree}, $\xi$ is realized at $\D$ in
  $\tableau{T}_n^{\theta}$, hence cannot be eliminated due to
  \Rule{E3}.
\end{proof}

%%%%%%%%%%%%%%%%%%%%%%%%%%%%%%%%%%
\subsection{Completeness}
\label{sec:completeness}
%%%%%%%%%%%%%%%%%%%%%%%%%%%%%%%%%%

Completeness of a tableau procedure means that if the final tableau
for the input formula $\theta$ is open, than $\theta$ is satisfiable.
The completeness proof presented in this section boils down to
building a Hintikka structure $\hintikka{H}_{\theta}$ for the input
formula $\theta$ out of the open tableau $\tableau{T}^{\theta}$.
Theorem~\ref{thr:from_hintikka_to_models} then guarantees the
existence of a model for $\theta$.

Our construction of a Hintikka structure $\hintikka{H}_{\theta}$ for
$\theta$ out of $\tableau{T}^{\theta}$ is going to resemble building a
house, when bricks are assembled into prefab blocks that are then
assembled into walls that are finally assembled into a complete
structure.  We will use analogues of all of those in our producing a
Hintikka structure for $\theta$.  Larger and larger components of our
construction will satisfy more and more conditions required by
Definition~\ref{def:cghs}, so that by the end, we are going to get a
fully-fledged Hintikka structure.

The ``bricks'' of $\hintikka{H}_{\theta}$ are going to be the states
of $\tableau{T}^{\theta}$. Being downward-saturated sets containing no
patent inconsistencies (otherwise, they would have been eliminated due
to \Rule{E1}), they satisfy conditions (H1)--(H3) of
Definition~\ref{def:cghs}.

The ``prefab blocks'' are going to be \fm{locally consistent simple
  $\tableau{T}^{\theta}$-trees}, which it is our next task to define.
Intuitively, these trees are one-step components of the Hintikka
structure we are building.

\begin{definition}
  \label{def:labelld_trees}
  Let $\tree{W} = (W, \arc)$ be a tree and $Y$ be a non-empty set.
  A $Y$-\de{labeling} of \tree{W} is a mapping $l$ from the set of
  edges of \tree{W} to the set of non-empty subsets of $Y$.
   When such mapping is fixed, we say that \tree{W} is \de{labeled by $Y$}.
\end{definition}

\begin{definition}
  A tree $\tree{W} = (W, \arc)$ is a \de{$\tableau{T}^{\theta}$-tree}
  if the following conditions hold:
  \begin{itemize}
  \item \tree{W} is $S^{\theta}$-colored (recall
    Definition~\ref{def:colored_trees}), by some coloring mapping
    $c$;
  \item \tree{W} is labeled by $\union_{\D \in S^{\theta}}
    \moves{}{\D}$, by some labeling mapping $l$;
  \item $l (w \arc w') \subseteq \moves{}{\D}$ for every $w \in W$ with
    $c(w) = \D$.
  \end{itemize}
\end{definition}

\begin{definition}
  \label{def:locally_consistent}
  A $\tableau{T}^{\theta}$-tree $\tree{W} = (W, \arc)$ is \de{locally
    consistent} if the following condition holds:
    \begin{quotation}
      For every interior node $w \in W$ with $c(w) = \D$ and every
      $\D$-successor $\D' \in S^{\theta}$, there exists exactly one
      $w' \in W$ such that $l (w \arc w')= \crh{\move{}}{\D \movearrow
        \D'}$.
    \end{quotation}
\end{definition}

That is, a locally consistent tree can not have two distinct
successors $w' = c(\D')$ and $w'' = c(\D'')$ of an interior node $w =
c(\D)$ such that $\crh{\move{}}{\D \movearrow \D'}= \crh{\move{}}{\D
  \movearrow \D''}$.  Note that we label edges of
$\tableau{T}^{\theta}$-trees with sets of move vectors as each edge in
a tableau can be marked by more than one move vector.

\begin{definition}
  \label{def:simple_trees}
  A tree $\tree{W} = (W, \arc)$ is \de{simple} if it has no interior
  nodes other than the root.
\end{definition}

Locally consistent simple $\tableau{T}^{\theta}$-trees will be our
building blocks for the construction of a Hintikka structure from an
open tableau $\tableau{T}^{\theta}$.  Essentially, we are extracting
from tableaux one-step structures that resemble CGMs in that every
interior node of these structures has exactly one outcome state
associated with a given move vector.  In other words, while an open
tableau encodes all possible Hintikka structures for the input
formula, we are extracting only one of them, by choosing the outcome
states associated with move vectors at each state out of possibly
several such outcomes.

We now prove the existence of locally consistent simple
$\tableau{T}^{\theta}$-trees associated with each state $\D$.

\begin{definition}
  Let $\D \in S^{\theta}$.  A $\tableau{T}^{\theta}$-tree \tree{W} is
  \de{rooted at $\D$} if the root of \tree{W} is colored with $\D$,
  i.e., $c(r) = \D$, where $r$ is the root of \tree{W}.
\end{definition}

\begin{lemma}
  \label{lem:existence_of_lcst}
  Let $\D \in S^{\theta}$.  Then, there exists a locally consistent
  simple $\tableau{T}^{\theta}$-tree rooted at $\D$.
\end{lemma}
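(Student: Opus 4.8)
The plan is to build the desired simple tree directly from the structure already present in $\tableau{T}^{\theta}$. Fix $\D \in S^{\theta}$. Since $\D$ survived the state elimination phase, it does not contain a patent inconsistency, so \Rule{Next} was applied to $\D$ during the construction phase, and moreover, by \Rule{E2}, for every $\move{} \in D(\D)$ there is at least one $\D' \in S^{\theta}$ with $\D \movearrow \D'$. The root $r$ of our tree \tree{W} will be colored with $\D$. For each $\D$-successor $\D' \in S^{\theta}$ we need to introduce exactly one child node colored $\D'$, with the edge labeled by the full set $\crh{\move{}}{\D \movearrow \D'}$; all of these children are leaves, so the tree is automatically simple.

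\medskip

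\textbf{Carrying it out.} Concretely, let $\{\D_1, \ldots, \D_r\}$ enumerate the (finitely many, since $S^{\theta}$ is finite) distinct states $\D'$ with $\D \movearrow \D'$ in $\tableau{T}^{\theta}$. Set $W = \{r, w_1, \ldots, w_r\}$ with $r$ fresh nodes $w_1, \ldots, w_r$, and $\arc\, = \{(r, w_i) : 1 \leq i \leq r\}$; this is clearly a tree with root $r$ whose only interior node is $r$. Define the coloring by $c(r) = \D$ and $c(w_i) = \D_i$, and the labeling by $l(r \arc w_i) = \crh{\move{}}{\D \movearrow \D_i}$. This set is non-empty (each $\D_i$ is a successor by at least one move vector) and is a subset of $\moves{}{\D} = D(\D)$, so \tree{W} is a $\tableau{T}^{\theta}$-tree. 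For local consistency, the only interior node is $r$ with $c(r) = \D$; for each $\D$-successor $\D_i$ there is precisely the one node $w_i$ with $l(r \arc w_i) = \crh{\move{}}{\D \movearrow \D_i}$, and distinct $\D_i \ne \D_j$ give distinct such label sets because a given move vector $\move{} \in D(\D)$, when used as an arrow label out of $\D$, points to well-defined sets of target states and the enumeration is without repetition. Thus the condition of Definition~\ref{def:locally_consistent} holds, \tree{W} is locally consistent, and it is rooted at $\D$ by construction.

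\medskip

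\textbf{Where the only subtlety lies.} There is essentially no obstacle here: the statement is a packaging lemma whose whole content is that, after state elimination, every state still has a nonempty supply of successors for each move vector (guaranteed by \Rule{E2}), so the ``one outcome per successor state'' tree can be formed. The one point worth a line of care is verifying that distinct successor states of $\D$ receive distinct edge-label sets, so that the ``exactly one $w'$'' clause of local consistency is met rather than merely ``at least one''; this is immediate since we take the full label set $\crh{\move{}}{\D \movearrow \D'}$ for each $\D'$ separately and the children are indexed by distinct states. Hence the proof is a short, direct construction, and I would present it essentially as above.
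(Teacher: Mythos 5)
Your overall strategy (root colored $\D$, one layer of leaves for the successors, edges labeled with the full move-vector sets, local consistency checked at the single interior node) is the right one, but the construction as you carry it out has a genuine gap: you create one leaf \emph{per successor state}, and the argument you give for why this satisfies the ``exactly one'' clause of Definition~\ref{def:locally_consistent} --- namely that distinct successors $\D_i \ne \D_j$ always receive distinct label sets $\crh{\move{}}{\D \movearrow \D_i}$ --- is false. A single move vector $\move{}$ out of $\D$ leads to \emph{every} state in $\st{\G_{\move{}}}$ after the prestate elimination rule \Rule{PR}, and a prestate typically unwinds into several states; consequently two distinct successors of $\D$ can be reachable by exactly the same set of move vectors. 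This already happens in the paper's running example: in $\tableau{T}^{\theta_1}$ the states $\D_3$ and $\D_4$ are both successors of $\D_1$ with $\crh{\move{}}{\D_1 \movearrow \D_3} = \crh{\move{}}{\D_1 \movearrow \D_4} = \set{(0,2),(1,2),(2,1)}$. Your tree would then contain two leaves carrying this same label set, which is precisely what local consistency forbids (``there exists \emph{exactly one} $w'$ such that $l(w \arc w') = \crh{\move{}}{\D \movearrow \D'}$'').

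The repair is small but it is the actual content of the lemma: partition the successors of $\D$ according to the value of $\crh{\move{}}{\D \movearrow \D'}$, create one leaf per \emph{equivalence class} (i.e., per distinct label set), and color that leaf with an arbitrarily chosen representative of the class. Then for every $\D$-successor $\D'$ there is exactly one leaf whose label equals $\crh{\move{}}{\D \movearrow \D'}$, and every $\move{} \in D(\D)$ is still covered thanks to \Rule{E2}, which you correctly invoke. This choice of one representative per move-set is exactly the point of the construction --- the tableau encodes many Hintikka structures and the simple tree commits to one outcome per move --- so it should not be dismissed as a non-issue.
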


\begin{proof}
  Such a tree can be built as follows: consider all successor states
  $\D'$ of $\D$ in $\tableau{T}^{\theta}$. With each of them is
  associated a non-empty set of ``move vectors'' \crh{\move{}}{\D
    \movearrow \D'}. The $\tableau{T}^{\theta}$-tree will then consist
  of a root $r$ colored with $\D$ and a leaf associated with each such
  set of move vectors, colored with any of the successor states $\D'$
  with which this particular set of moves is associated (note that, in
  general, a tableau can contain more than one such $\D'$); the edge
  between the root $r$ and a leaf $t$ is then labeled by the set of
  moves \crh{\move{}}{\D \movearrow c(t)}. Note that, by construction
  of the tableau, different successor states $\D'$ of $\D$ are
  reachable from $\D$ by pairwise disjoint sets of moves.
\end{proof}

The next lemma essentially asserts that, in addition to conditions
(H1)--(H3), locally consistent simple $\tableau{T}^{\theta}$-trees
also satisfy conditions (H4)--(H5) of Definition~\ref{def:cghs}, where
outcomes of $A$-moves and co-$A$-moves are defined for such trees as
for CGFs; recall that edges of these trees are labeled with sets of
move vectors. Thus, locally consistent simple
$\tableau{T}^{\theta}$-trees are closely approximating Hintikka
structures, but so far only \emph{locally}.

\begin{lemma}
  \label{lem:local_consistency}
  Let \tree{S} be a locally consistent simple
  $\tableau{T}^{\theta}$-tree rooted at $\D$.  Then, the following
  hold:
  \begin{enumerate}
  \item If $\coalnext{A} \vp \in \D = c(w)$, then there exists an
    $A$-move $\move{A} \in \moves{A}{w}$ such that $\vp \in \D'$ for
    all $\D' = c(w') \in out(\D, \move{A})$.
  \item If $\neg \coalnext{A} \vp \in \D = c(w)$, then there exists a
    co-$A$-move $\comove{A} \in D^c_A (w)$ such that $\neg \vp \in
    \D'$ for all $\D' = c(w') \in out(\D, \comove{A})$.
  \end{enumerate}
\end{lemma}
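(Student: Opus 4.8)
The plan is to read off the required moves and co-moves directly from the \Rule{Next} rule, which is what created the successor prestates from $\D$, and then to transfer this information through the prestate elimination phase and the passage to locally consistent simple trees. The key observation is that the labels on the edges of a locally consistent simple $\tableau{T}^{\theta}$-tree rooted at $\D$ are, by Definition~\ref{def:locally_consistent}, exactly the nonempty sets $\crh{\move{}}{\D \movearrow \D'}$ for the successor states $\D'$ of $\D$ in $\tableau{T}^{\theta}$, and these in turn arise (via \Rule{PR}) from the prestates $\G_{\move{}}$ that \Rule{Next} attached to $\D$. So the ``outcome'' of a tableau $A$-move at $w$ is completely controlled by the voting mechanism of \Rule{Next}.

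For part (1), suppose $\coalnext{A} \vp \in \D = c(w)$. Since $\D$ survived \Rule{E1}, the \Rule{Next} rule was applied to $\D$, so $\coalnext{A} \vp$ occupies some position $p$ in the list $\mathbb{L}$. First I would take $\move{A}$ to be the tableau $A$-move at $w$ in which every agent $a \in A$ votes $p$ (this is well-defined since $A$ is a coalition of $\agents_{\theta}$ and the labels of $\tree{S}$ range over $\moves{}{\D}$); in the notation of Notation~\ref{not:tableaux}(2) this is $\move{A_p}[\coalnext{A_p}\vp_p]$. Now let $\D' = c(w') \in out(\D, \move{A})$. Then there is some $\move{} \sqsupseteq \move{A}$ with $\D \movearrow \D'$, i.e.\ $\D'$ is one of the minimal downward-saturated extensions of the prestate $\G_{\move{}}$. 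Since $\move{a} = p$ for every $a \in A$, the defining clause of $\G_{\move{}}$ in \Rule{Next} puts $\vp_p = \vp$ into $\G_{\move{}}$, hence $\vp \in \G_{\move{}} \subseteq \D'$, as required. (Here I use that the prestate-elimination rule \Rule{PR} redirects $\D \movearrow \D'$ for exactly the $\D' \in \st{\G_{\move{}}}$, and each such $\D'$ contains $\G_{\move{}}$; for the trivial case $\G_{\move{}} = \{\top\}$ there is nothing to check about $\vp$, but this case does not arise under a vote of $p$ for a positive formula, since then $\vp_p$ is always included.)

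For part (2), suppose $\neg \coalnext{A} \vp \in \D = c(w)$; note that by the definition of proper negative next-time formulae we may assume $A \ne \agents_{\theta}$ (if $A = \agents_{\theta}$ the formula is an $\alpha$-formula and the claim is vacuous in the relevant sense, cf.\ the $\alpha$-table). So $\neg \coalnext{A} \vp$ has some position $q$ among the negative formulae of $\mathbb{L}$. I would take $\comove{A}$ to be the tableau co-$A$-move $\comove{A_q}[\neg \coalnext{A_q} \psi_q]$ from Notation~\ref{not:tableaux}(3): the unique co-$A$-move on $\moves{}{\D}$ such that, for every $A$-move $\move{A}$ of $\agents_{\theta}\setminus A$'s counter-coalition, the resulting move vector $\sigma = \comove{A}(\move{A})$ satisfies $\mathbf{neg}(\sigma) = q$ and $\agents_{\theta} \setminus A \subseteq N(\sigma)$. (Its existence requires that the function $\mathbf{neg}$ can be forced to take value $q$ while all agents outside $A$ vote negatively, which is exactly the arithmetic built into the definition of $\mathbf{neg}(\move{}) = [\sum_{i \in N(\move{})}(\move{i} - m)] \bmod l$; for a fixed choice of $A$-move one solves for the votes of the agents in $\agents_{\theta}\setminus A$, which is possible since $l \geq 1$.) Now if $\D' = c(w') \in out(\D, \comove{A})$, then $\D \movearrow \D'$ along some $\sigma = \comove{A}(\move{A})$, so $\D'$ extends $\G_{\sigma}$; since $\mathbf{neg}(\sigma) = q$ and $\agents_{\theta} - A \subseteq N(\sigma)$, the second clause of $\G_{\sigma}$ in \Rule{Next} puts $\neg \psi_q = \neg \vp$ into $\G_{\sigma} \subseteq \D'$, as desired.

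The main obstacle I expect is bookkeeping rather than mathematics: carefully matching the abstract ``$A$-move at $w$'' and ``outcome set'' language of Definitions~\ref{def:tableau_outcome_set} and \ref{def:tableau_outcome_set_comoves} with the concrete labelling of the simple tree, keeping straight that edges of $\tree{S}$ carry \emph{sets} of move vectors (so an $A$-move's outcome is computed from the tableau edges, not from a single $\sigma$), and checking the co-move case's arithmetic (that one really can realise $\mathbf{neg} = q$ together with a full negative vote of $\agents_{\theta}\setminus A$). Everything else is a direct unwinding of \Rule{Next}, \Rule{PR}, and Notation~\ref{not:tableaux}.
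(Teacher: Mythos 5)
Your part (1) and the $A \ne \agents_{\theta}$ half of part (2) follow the paper's proof almost verbatim: take $\move{A}[\coalnext{A}\vp]$ (every agent in $A$ votes for the position $p$ of $\coalnext{A}\vp$), respectively $\comove{A}[\neg\coalnext{A}\vp]$, and unwind \Rule{Next} and \Rule{PR}. The paper is only slightly more explicit than you are about the arithmetic showing the co-move exists: for an arbitrary $\move{A}$ it picks one agent $b \notin A$, sets $\move{b}' = ((q - \mathbf{neg}(\move{A})) \bmod l) + m$ and $\move{a'}' = m$ for the remaining agents outside $A \cup \set{b}$, and checks $\agents_{\theta}\setminus A \subseteq N(\move{}')$ and $\mathbf{neg}(\move{}') = q$. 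Your ``one solves for the votes of the counter-coalition, which is possible since $l \geq 1$'' is the same idea and is correct.

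The genuine gap is your dismissal of the case $A = \agents_{\theta}$ in part (2). The lemma as stated has no restriction on $A$, and the claim there is not vacuous: a co-$\agents_{\theta}$-move is the identity function on move vectors, so its outcome set is \emph{all} children of $w$ in $\tree{S}$, and part (2) then asserts that $\neg\vp$ lies in \emph{every} successor state. This is exactly what is needed to verify condition (H5) of Definition~\ref{def:cghs} for formulae of the form $\neg\coalnext{\agents_{\theta}}\vp$ in the completeness proof, so the case cannot be skipped. It also cannot be read off from your construction, because $\neg\coalnext{\agents_{\theta}}\vp$ is not a proper negative next-time formula and hence has no position $q$ in $\mathbb{L}$. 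The paper's argument for this case is short but different in kind: since $\D$ is downward saturated and $\neg\coalnext{\agents_{\theta}}\vp$ is an $\alpha$-formula with $\alpha_1 = \coalnext{\emptyset}\neg\vp$, we get $\coalnext{\emptyset}\neg\vp \in \D$; this is a \emph{positive} next-time formula whose coalition is empty, so the inclusion condition of \Rule{Next} is vacuously satisfied by every move vector and $\neg\vp$ is put into every successor prestate, hence into every successor state; the identity co-$\agents_{\theta}$-move then witnesses the claim. With that case added, your proof is complete and coincides with the paper's.
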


\begin{proof}
  Note that every $\D \in S^{\theta}$ is not patently
  inconsistent. Therefore, we can assume throughout the proof that all
  next-time formulae of $\D$ have been linearly ordered as part of
  applying the \Rule{Next} rule to $\D$.

  (1) Suppose that $\coalnext{A} \vp \in \D$.  Then the required
  $A$-move is $\move{A} [\coalnext{A} \vp]$ (recall Notational
  convention~\ref{not:tableaux}).  Indeed, it immediately follows
  from the rule \Rule{Next} that for every $\move{} \sqsupseteq
  \move{A} [\coalnext{A} \vp]$ in the pretableau
  $\tableau{P}^{\theta}$, if $\D \movearrow \G$ then $\vp \in \G$.
  Now, in $\tableau{T}^{\theta}$ we have $\D \movearrow \D'$ only if
  in $\tableau{P}^{\theta}$ we had $\D \movearrow \G$ for some $\G
  \subseteq \D'$.  Therefore, $\vp \in \D'$ for every $\D'$ in any
  outcome set of $\move{A} [\coalnext{A} \vp]$ at $\D$, and the
  statement of the lemma follows.

  (2) Suppose that $\neg \coalnext{A} \vp \in \D$.  We have two cases
  to consider.

  Case 1: $A \ne \agents_{\theta}$.  Therefore, there exists $b \in
  \agents_{\theta} \setminus A$ and, furthermore, $\neg \coalnext{A}
  \vp$ occupies some place, say $q$, in the linear ordering of the
  next-time formulae of $\D$.  Consider an arbitrary $\move{A} \in D_A
  (\D)$. We claim that \move{A} can be extended to $\move{}' \supmove
  \move{A}$ such that $\D \stackrel{\move{}'}{\longrightarrow} \D'$
  and $\neg \vp \in \D'$ for some $\D'$. To show that, denote by
  $N(\move{A})$ the set \crh{i}{\move{A} (i) \geq m}, where $m$ is the
  number of positive next-time formulae in $\D'$, and by $\mathbf{neg}
  (\move{A})$ the number $\left(\sum_{i \in N(\move{A})} (\move{A}(i)
    - m))\right)\!\!\!\!\mod l$, where $l$ is the number of negative
  next-time formulae in $\D$.  Now, consider $\move{}' \supmove
  \move{A}$ defined as follows: $\move{b}' = (\left(q - \mathbf{neg}
    (\move{A})\right)\!\!\!\! \mod l) + m$ and $\move{a'}' = m$ for
  any $a' \in \agents_{\theta} \setminus (A \union \set{b})$.  It is
  easy to see that $\agents_{\theta} \setminus A \subseteq
  N(\move{}')$, and moreover, that $\mathbf{neg} (\move{}') =
  \left(\mathbf{neg} (\move{A}) + (q - \mathbf{neg}
    (\move{A}))\right)\!\!\!\!\mod l = q$.  We conclude that in the
  pretableau $\tableau{P}^{\theta}$, if $\D
  \stackrel{\move{}'}{\longrightarrow} \G$, then $\neg \vp \in \G$.
  But, \tree{S} contains at least one leaf colored with such $\D'$
  that $\D \stackrel{\move{}'}{\longrightarrow} \D'$, and this $\D'$
  was obtained by extending a $\G$ with $\D
  \stackrel{\move{}'}{\longrightarrow} \G$; hence, $\neg \vp \in \D'$,
  and the statement of the lemma follows.

  Case 2: $A = \agents_{\theta}$.  Then, by virtue of (H2),
  $\coalnext{\emptyset} \neg \vp \in \D$ and thus, by the rule
  \Rule{Next}, $\neg \vp \in \G$ for every $\G \in \sucpr{\D}$.  Then,
  $\neg \vp \in \D'$ for every $\D'$ that is a successor of $\D$ in
  $\tableau{T}^{\theta}$ and hence in the coloring set of every leaf
  of \tree{S}.  Then, the (unique) co-$\agents_{\theta}$-move, which
  is an identity function, has the required properties.
\end{proof}

Now, we come to the ``walls'' of our building---the components of the
future Hintikka structure that take care of single eventualities.
Following~\cite{GorDrim06}, we call them \fm{final tree components}.
Each final tree component is built around a realization witness tree
for the corresponding eventuality (recall
Definitions~\ref{def:realisation_witness_tree_until} and
\ref{def:realisation_witness_tree_negbox}), the existence of which is
proved in the forthcoming lemma.

\begin{lemma}
  \label{lem:existence_of_rwt}
  Let $\xi$ be an eventuality realized at $\D$ in
  $\tableau{T}_n^{\theta}$.  Then, there exists a realization witness
  tree $\mathcal{R}$ for $\xi$ at $\D$ in $\tableau{T}_n^{\theta}$.
\end{lemma}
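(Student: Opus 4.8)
The plan is to argue by induction on the height of the finite derivation witnessing that $\xi$ is realized at $\D$ according to Definitions~\ref{def:realization_until} and~\ref{def:realization_neg_box} (equivalently, by Lemma~\ref{lem:realization_rank}, on the round at which $\D$ becomes marked with respect to $\xi$ in $\tableau{T}_n^{\theta}$). I will carry out the argument for $\xi = \coal{A}\vp\until\psi$; the case $\xi = \neg\coalbox{A}\vp$ is completely analogous, replacing the tableau $A$-move $\move{A}[\coalnext{A}\coal{A}\vp\until\psi]$ by the tableau co-$A$-move $\comove{A}[\neg\coalnext{A}\coalbox{A}\vp]$, the set $D(\D,\coalnext{A}\coal{A}\vp\until\psi)$ by $D(\D,\neg\coalnext{A}\coalbox{A}\vp)$, the "stopping" formula $\psi$ by $\neg\vp$, and Definition~\ref{def:realisation_witness_tree_until} by Definition~\ref{def:realisation_witness_tree_negbox}.

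In the base case the derivation has height $0$, so $\xi$ is realized at $\D$ via clause~(1), i.e. $\set{\psi, \coal{A}\vp\until\psi}\subseteq\D$; the one-node tree $\mathcal{R}$ whose single node --- simultaneously root and leaf --- is colored with $\D$ is then a realization witness tree, since condition~(1) of Definition~\ref{def:realisation_witness_tree_until} holds by construction, conditions~(2)--(3) are vacuous, and condition~(4) holds. For the inductive step, take a minimal-height derivation of height $\geq 1$; it uses clause~(2) at the root, and minimality forbids clause~(1), so $\psi\notin\D$ and downward saturation of the $\beta$-formula $\coal{A}\vp\until\psi\in\D$ forces $\set{\vp, \coalnext{A}\coal{A}\vp\until\psi, \coal{A}\vp\until\psi}\subseteq\D$. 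Since clause~(2) refers to the move vectors and successor edges created by \Rule{Next}, the rule \Rule{Next} has been applied to $\D$, assigning $\coalnext{A}\coal{A}\vp\until\psi$ a position $p$ in the linear ordering of the next-time formulae of $\D$; then the move vectors extending $\move{A}[\coalnext{A}\coal{A}\vp\until\psi]$ are exactly the members of $D(\D,\coalnext{A}\coal{A}\vp\until\psi)$. Clause~(2) supplies, for each such $\move{}$, a state $\D''_{\move{}}$ with $\D\movearrow\D''_{\move{}}$ at which $\xi$ is realized by a strictly shorter derivation; I collect these choices into an outcome set $X$ of $\move{A}[\coalnext{A}\coal{A}\vp\until\psi]$ at $\D$. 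By the induction hypothesis each $\D''\in X$ carries a realization witness tree $\mathcal{R}_{\D''}$ for $\xi$ at $\D''$ in $\tableau{T}_n^{\theta}$. I then assemble $\mathcal{R}$ by taking a new root $r$ colored with $\D$, giving it one child per element of $X$, coloring those children bijectively with $X$, and grafting onto each child the corresponding $\mathcal{R}_{\D''}$ (identifying the child with the root of $\mathcal{R}_{\D''}$). Conditions~(1)--(4) and finiteness are then immediate: $r$ is an interior node colored with $\D$, which contains $\set{\vp, \coalnext{A}\coal{A}\vp\until\psi, \coal{A}\vp\until\psi}$ and whose children are colored bijectively with the outcome set $X$, while every other node belongs to one of the finitely many finite trees $\mathcal{R}_{\D''}$, which already satisfy conditions~(2)--(4).

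The step requiring genuine care is the passage from the choice function $\move{}\mapsto\D''_{\move{}}$ (supplied by clause~(2)) to an honest outcome set $X$ in the sense of Definition~\ref{def:tableau_outcome_set} --- a minimal set of tableau states hitting each move vector extending $\move{A}[\coalnext{A}\coal{A}\vp\until\psi]$ exactly once --- because condition~(3) of Definition~\ref{def:realisation_witness_tree_until} is phrased in terms of such outcome sets, and we must be able to color the children of $r$ bijectively with $X$. This uses the structural properties of the successor relation in $\tableau{T}_n^{\theta}$ recorded in connection with \Rule{Next} and prestate elimination (as in the proof of Lemma~\ref{lem:existence_of_lcst}): distinct surviving successors of a state are reached by suitably organized families of move vectors, and each move vector extending a fixed tableau $A$-move reaches at least one surviving successor among which we may pick the one with smaller realization rank. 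Everything else --- matching the recursion depth of the realization derivation to the height of $\mathcal{R}$, and the verbatim transcription of the argument for $\neg\coalbox{A}\vp$ --- is routine bookkeeping.
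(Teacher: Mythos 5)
Your proof is correct and follows essentially the same route as the paper's: the paper defines a numeric realization rank (shortest distance to a state witnessing realization), builds the tree top-down by picking, for each move vector extending the designated tableau $A$-move (resp.\ co-$A$-move), a successor of strictly smaller rank, and invokes K\"{o}nig's lemma for finiteness, which is exactly your induction on the height of the realization derivation with grafting of the subtrees supplied by the inductive hypothesis. The outcome-set subtlety you flag is real but is present in the paper's construction as well and is handled there by the same appeal to the structure of the successor relation (cf.\ the remark in the proof of Lemma~\ref{lem:existence_of_lcst}), so nothing is missing relative to the paper's own argument.
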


\begin{proof}
  To build $\tree{R}$, we use the concept of the realization rank of
  $\D$ in $\tableau{T}_n^{\theta}$ with respect to an eventuality
  $\xi$, which we define as the shortest path from $\D$ to a state
  witnessing the realization of $\xi$ at $\D$ (if $\xi = \coal{A} \vp
  \until \psi$, such a state contains $\psi$; if $\xi = \neg
  \coalbox{A} \vp$, then such a state contains $\neg \vp$) and denoted
  by $\mathbf{rank}(\D, \xi, \tableau{T}_n^{\theta})$.  If such a path
  does not exists, then $\mathbf{rank}(\D, \xi,
  \tableau{T}_n^{\theta}) = \infty$.
  Clearly, if $\xi$ is realized at $\D$ in $\tableau{T}_n^{\theta}$,
  then $\mathbf{rank}(\D, \xi, \tableau{T}_n^{\theta})$ is finite.

  Suppose, first, that $\xi$ is of the form $\coal{A} \vp \until
  \psi$.  We start building $\tree{R}$ by taking a root node and
  coloring it with $\D$.  Afterwards, for every $w' \in \tree{R}$
  colored with $\D'$, we do the following: for every $\move{}
  \sqsupseteq \move{A}[\coalnext{A} \coal{A} \vp \until \psi] \in
  D(\D')$, we pick the $\D'' \in \suctab{\D'}{\move{}}$ with the least
  $\mathbf{rank}(\D'', \coal{A} \vp \until \psi,
  \tableau{T}_n^{\theta})$ and add to \tree{R} a child $w''$ of $w'$
  colored with $\D''$.  As $\coal{A} \vp \until \psi$ is realized at
  $\D$, it follows that $\mathbf{rank}(\D, \coal{A} \vp \until \psi,
  \tableau{T}_n^{\theta})$ is finite. By construction of \tree{R} and
  definition of the rank, each child of every node of the so
  constructed tree has a smaller realization rank than the
  parent.  Therefore, along each branch of the tree we are bound to
  reach in a finite number of steps a node colored with a state whose
  realization rank with respect to $\coal{A} \vp \until \psi$ is
  $0$; such nodes are taken to be the leaves of $\tree{R}$.  As every
  node of $\tree{R}$ has finitely many children, due to K\"{o}nig's
  lemma, $\tree{R}$ is finite.  Therefore, so constructed $\tree{R}$
  is indeed a realization witness tree for $\coal{A} \vp \until \psi$
  at $\D$ in $\tableau{T}_n^{\theta}$.

  Suppose, next, that $\xi$ is of the form $\neg \coalbox{A} \vp$.
  Again, we begin by taking a root node and coloring it with $\D$.
  Afterwards, for every $w' \in \tree{R}$ colored with $\D'$, we do
  the following: for every $\move{} = \comove{A}[\neg \coalnext{A}
  \coalbox{A} \vp] (\move{A}) \in D(\D')$, we pick the $\D'' \in
  \suctab{\D'}{\move{}}$ with the least $\mathbf{rank}(\D'', \neg
  \coalbox{A} \vp, \tableau{T}_n^{\theta})$ and add to \tree{R} a
  child $w''$ of $w'$ colored with $\D''$. The rest of the argument
  is analogous to the one for the other eventuality.
\end{proof}

Now, we are going to use realization witness trees to build
$\tableau{T}^{\theta}$-trees doing the same job for eventualities as
realization witness trees do, i.e., ``realizing'' them in a certain
sense.  The problem with realization witness trees is that their nodes
might lack successors along some ``move vectors''; the next definition
and lemma show that this shortcoming can be easily remedied, by giving
each interior node $\D$ of a realization witness tree a successor
associated with every move vector $\sigma \in \D$.

\begin{definition}
  \label{def:realisation_by_T_trees}
  Let $\tree{W} = (W, \arc)$ be a locally consistent
  $\tableau{T}^{\theta}$-tree rooted at $\D$ and $\xi \in \D$ be an
  eventuality formula.  We say that $\tree{W}$ \de{realizes} $\xi$ if
  there exists a subtree\footnote{By a subtree, we mean a graph
    obtained from a tree by removing some of its nodes together with
    all the nodes reachable from them.}  $\tree{R}_{\xi}$ of \tree{W}
  rooted at $\D$ such that $\tree{R}_{\xi}$ is a realization witness
  tree for $\xi$ at $\D$ in $\tableau{T}^{\theta}$.
\end{definition}

\begin{lemma}
  \label{lem:final_components_for_eventualities}
  Let $\xi \in \D \in S^{\theta}$ be an eventuality formula.  Then,
  there exist a finite locally consistent $\tableau{T}^{\theta}$-tree
  $\tree{W}_{\xi}$ rooted at $\D$ realizing $\xi$.
\end{lemma}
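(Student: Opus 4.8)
The plan is to take the realization witness tree guaranteed by Lemma~\ref{lem:existence_of_rwt} and complete it into a locally consistent tree by adding, at each of its interior nodes, the successors that are missing but required by local consistency --- essentially the ``fill-in'' step used in the proof of Lemma~\ref{lem:existence_of_lcst}.

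First, since $\D \in S^{\theta}$, the state $\D$ has survived the entire state-elimination phase; in particular the procedure stopped without \Rule{E3} being applicable to $\D$, so the eventuality $\xi \in \D$ is realized at $\D$ in $\tableau{T}^{\theta}$. Writing $\tableau{T}^{\theta} = \tableau{T}_{n}^{\theta}$ for the final stage $n$, Lemma~\ref{lem:existence_of_rwt} gives a (finite) realization witness tree $\tree{R} = (R, \rightarrow)$ for $\xi$ at $\D$ in $\tableau{T}^{\theta}$, with colouring $c$.

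Next I would equip $\tree{R}$ with edge-labels and complete it. For an edge $w \rightarrow w'$ with $c(w) = \D'$ and $c(w') = \D''$, set $l(w \rightarrow w') := \crh{\move{}}{\D' \movearrow \D''}$; this is a non-empty subset of $\moves{}{\D'}$, so it is a legitimate label for a $\tableau{T}^{\theta}$-tree. The one point that needs checking is that distinct children of an interior node get distinct labels: if $w$ is interior, coloured $\D'$, let $\move{}^{*}$ be the associated move $\move{A}[\coalnext{A} \coal{A} \vp \until \psi]$ (if $\xi = \coal{A}\vp\until\psi$) or the associated co-move $\comove{A}[\neg \coalnext{A} \coalbox{A} \vp]$ (if $\xi = \neg \coalbox{A}\vp$); by Definition~\ref{def:realisation_witness_tree_until}, resp.~\ref{def:realisation_witness_tree_negbox}, the children of $w$ are coloured bijectively by the states of an outcome set $X$ of $\move{}^{*}$ at $\D'$. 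By minimality of $X$, every $\D'' \in X$ is the (unique) $X$-outcome of at least one move vector $\sigma \supmove \move{}^{*}$ (resp.\ of at least one value of the co-move); hence if $\D'', \D''' \in X$ were distinct with $\crh{\move{}}{\D' \movearrow \D''} = \crh{\move{}}{\D' \movearrow \D'''}$, that common $\sigma$ would reach both $\D''$ and $\D'''$, contradicting uniqueness. So the labelling is well defined and injective on the children of each interior node.

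Now complete $\tree{R}$ to the required tree $\tree{W}_{\xi}$: for every interior node $w$ of $\tree{R}$ coloured $\D'$ and every $\D'$-successor $\D''$ in $\tableau{T}^{\theta}$ such that $\crh{\move{}}{\D' \movearrow \D''}$ is not already the label of an edge out of $w$, attach to $w$ a fresh leaf $w''$ with $c(w'') = \D''$ and $l(w \rightarrow w'') = \crh{\move{}}{\D' \movearrow \D''}$ (leaves of $\tree{R}$ are left as leaves). By construction $\tree{W}_{\xi}$ is a $\tableau{T}^{\theta}$-tree; it is finite, since $\tree{R}$ is finite and we add to each of its finitely many interior nodes coloured $\D'$ at most $\card{\moves{}{\D'}}$ new leaves; and it is locally consistent, since by the previous paragraph and the choice of what to attach there is, at each interior node coloured $\D'$, exactly one child whose edge is labelled $\crh{\move{}}{\D' \movearrow \D''}$, for each $\D'$-successor $\D''$. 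Finally, deleting from $\tree{W}_{\xi}$ precisely the leaves added in this last step returns (a copy of) $\tree{R}$, a rooted-at-$\D$ subtree that is a realization witness tree for $\xi$ at $\D$; hence $\tree{W}_{\xi}$ realizes $\xi$ in the sense of Definition~\ref{def:realisation_by_T_trees}.

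The only real difficulty is the bookkeeping in the middle two paragraphs: an outcome set only accounts for the move vectors extending the eventuality-realizing (co-)move, whereas local consistency demands a successor for \emph{every} move-vector class out of the node, so one must check both that the children inherited from $\tree{R}$ already carry pairwise-distinct ``full'' labels $\crh{\move{}}{\D' \movearrow \D''}$ (so no clash is created) and that the leaves added afterwards fill in exactly the remaining classes without introducing a second child with the same label. Everything else --- finiteness, the defining conditions of a $\tableau{T}^{\theta}$-tree, and the embedding of $\tree{R}$ as a rooted subtree --- is routine.
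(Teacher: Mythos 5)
Your proof is correct and follows essentially the same route as the paper's: take the realization witness tree supplied by Lemma~\ref{lem:existence_of_rwt} and pad each interior node with the missing successors so that every move-vector class out of that node is covered, preserving the witness tree as a rooted subtree. The extra bookkeeping you supply (the explicit edge-labelling and the check that distinct children of a node carry distinct labels, via the ``exactly one'' clause in the definition of an outcome set) is detail the paper leaves implicit, but it is the same construction.
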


\begin{proof}
  Take the realization witness tree $\tree{R}_{\xi}$ for $\xi$ at $\D$
  in $\tableau{T}^{\theta}$, which exists by
  Lemma~\ref{lem:existence_of_rwt}.  The only reason why
  $\tree{R}_{\xi}$ may turn out not to be a locally consistent
  $\tableau{T}^{\theta}$-tree is that some of its interior nodes do
  not have a successor node along every move vector $\move{}$ (recall
  that, in realization witness trees, every interior node has just
  enough successors to witness realization of the corresponding
  eventuality, and no more).  Therefore, to build a locally consistent
  $\tableau{T}^{\theta}$-tree out of $\tree{R}_{\xi}$, we simply add
  to its interior nodes just enough ``colored'' successors so that (1)
  for every interior node $w'$ of $\tree{W}_{\xi}$ and every $\move{}
  \in D(w')$, the tree $\tree{W}_{\xi}$ contains a $w''$ such that
  $c(w'') = \D''$ for some $\D'' \in \suctab{\D'}{\move{}}$ (where
  $\D' = c(w')$) and (2) $\tree{W}_{\xi}$ satisfies the condition of
  Definition~\ref{def:locally_consistent}.  It is then obvious that
  $\tree{W}_{\xi}$ is a locally consistent
  $\tableau{T}^{\theta}$-tree, by definition realizing $\xi$.
  Moreover, as according to Lemma~\ref{lem:existence_of_rwt},
  $\tree{R}_{\xi}$ is finite, $\tree{W}_{\xi}$ is finite, too.
\end{proof}

We want to build Hintikka structures our of locally consistent
$\tableau{T}^{\theta}$-trees.  Hintikka structures are based on CGFs;
therefore, we need to be able to ``embed'' such trees into CGFs.  The
following definition formally defines such an embedding.

\begin{definition}
  Let $\tree{W} = (W, \arc)$ be a locally consistent
  $\tableau{T}^{\theta}$-tree and $\kframe{F} = (\agents_{\theta}, S,
  d, \delta)$ be a CGF.  We say that \tree{W} is \de{contained} in
  \kframe{F}, denoted $\tree{W} \ll \kframe{F}$, if the following
  conditions hold:
  \begin{itemize}
  \item $W \subseteq S$;
  \item if $\move{} \in l (w \arc w')$, then $w' = \delta(w,
    \move{})$.
  \end{itemize}
\end{definition}

Locally consistent $\tableau{T}^{\theta}$-trees realizing an
eventuality $\xi$ are meant to represent run trees in CGMs effected by
(co-)strategies.  We now show that if we embed the former variety of
tree into a CGM then, as expected, this gives rise to a positional
(co-)strategy witnessing the truth of $\xi$ under an ``appropriate
valuation''.  (Intuitively, this (co-)strategy is extracted out of a
locally consistent $\tableau{T}^{\theta}$-tree when it is embedded
into a CGF and can, thus, be viewed as a run tree).  The following two
lemmas prove this for two types of eventualities we have in the
language.

\begin{lemma}
  \label{lem:realisation_within_final_tree_components_until}
  Let, $\coal{A} \vp \until \psi \in \D \in S^{\theta}$ and let
  $\tree{W} = (W, \arc)$ be a locally consistent
  $\tableau{T}^{\theta}$-tree rooted at $\D$ and realizing $\coal{A}
  \vp \until \psi$.  Let, furthermore, $\kframe{F} =
  (\agents_{\theta}, S, d, \delta)$ be a CGF such that $\tree{W} \ll
  \kframe{F}$.  Then, there exists a positional $A$-strategy \str{A}
  in \kframe{F} such that, if $\lambda \in out (w, \str{A})$, where
  $c(w) = \D$, then there exists $i \geq 0$ such that $\psi \in
  \lambda[i] \in W$ and $\vp \in \lambda[j] \in W$ holds for all $0
  \leq j < i$.
\end{lemma}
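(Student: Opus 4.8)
The plan is to read the required positional strategy directly off the subtree witnessing realization of $\coal{A} \vp \until \psi$ inside $\tree{W}$. By Definition~\ref{def:realisation_by_T_trees}, fix a subtree $\tree{R} = (R, \rightarrow)$ of $\tree{W}$, rooted at $\D$, that is a realization witness tree for $\coal{A} \vp \until \psi$ at $\D$ in $\tableau{T}^{\theta}$; let $r$ be its root, so $c(r) = \D$ --- this is the node denoted $w$ in the statement. If $\psi \in \D$, then $\tree{R}$ is the single node $r$ and the claim holds with $i = 0$ for any positional $\str{A}$, so assume $\psi \notin \D$. Define $\str{A}$ on $S$ by setting, for every $w' \in W \subseteq S$ that is an interior node of $\tree{R}$, colored with $\D' = c(w')$,
$$\str{A}(w') := \move{A}[\coalnext{A} \coal{A} \vp \until \psi],$$
the unique tableau $A$-move of $\D'$ from Notational convention~\ref{not:tableaux}(2), regarded as an $A$-move at the state $w'$ of $\kframe{F}$; this is legitimate since, by condition (2) of Definition~\ref{def:realisation_witness_tree_until}, $\coalnext{A} \coal{A} \vp \until \psi \in \D'$, and $\D'$ is not patently inconsistent, so \Rule{Next} has been applied to it. For every other state of $\kframe{F}$, let $\str{A}$ return an arbitrary $A$-move. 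As distinct nodes of $\tree{W}$ are distinct states of $\kframe{F}$, each state receives at most one prescription, so $\str{A}$ is a well-defined positional $A$-strategy.

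The core of the proof is to show that every $\lambda \in out(r, \str{A})$ traces a branch of $\tree{R}$ from $r$ down to one of its leaves. I would argue by induction that, until a leaf is reached, $\lambda[0], \ldots, \lambda[k]$ is a path in $\tree{R}$ starting at $r$ all of whose nodes, except possibly the last, are interior. The base case $\lambda[0] = r$ is immediate. For the step, let $\lambda[k]$ be an interior node of $\tree{R}$ colored $\D'$; then $\lambda[k+1] = \delta(\lambda[k], \move{})$ for some move vector $\move{} \supmove \move{A}[\coalnext{A} \coal{A} \vp \until \psi]$ at $\lambda[k]$. Using the fact that the move vectors available at $\lambda[k]$ in $\kframe{F}$ are exactly those in $\moves{}{\D'}$ (a property of the way $\kframe{F}$ relates to the tableau in the completeness construction), local consistency of $\tree{W}$ --- together with the fact that distinct tableau successors of $\D'$ are reached via pairwise disjoint sets of move vectors --- yields a unique $\tree{W}$-child $w''$ of $\lambda[k]$ with $\move{} \in l(\lambda[k] \arc w'')$, and $\tree{W} \ll \kframe{F}$ then forces $\delta(\lambda[k], \move{}) = w''$. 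Since $\move{}$ extends $\move{A}[\coalnext{A} \coal{A} \vp \until \psi]$, the tableau state $c(w'')$ lies in the outcome set of $\move{A}[\coalnext{A} \coal{A} \vp \until \psi]$ which, by condition (3) of Definition~\ref{def:realisation_witness_tree_until}, colors the children of $\lambda[k]$ in $\tree{R}$; as $\tree{R}$ is a subtree of $\tree{W}$ and the label of an edge determines the color of its target, that child is $w''$ itself, so $\lambda[k+1] = w'' \in R$, completing the induction. Because $\tree{R}$ is finite, every branch from $r$ reaches a leaf, so there is a least $i \geq 0$ with $\lambda[i]$ coloring a leaf of $\tree{R}$.

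It remains to extract the conclusion. For $j < i$, $\lambda[j]$ colors an interior node of $\tree{R}$, so $\vp \in c(\lambda[j])$ by condition (2) of Definition~\ref{def:realisation_witness_tree_until}; $\psi \in c(\lambda[i])$ by condition (4); and every $\lambda[j]$ with $j \leq i$ is a node of $\tree{R} \subseteq \tree{W}$, hence an element of $W$. Reading ``$\chi \in \lambda[j]$'' as ``$\chi \in c(\lambda[j])$'', as in the statement, this gives exactly what is wanted. The one genuinely delicate point is the inductive step above: one must rule out that the counter-coalition $\agents_{\theta} \setminus A$ can divert $\lambda$ out of $\tree{R}$, which is precisely where local consistency of $\tree{W}$, the containment $\tree{W} \ll \kframe{F}$, and the agreement between the move-vector set of $\kframe{F}$ at each $w' \in W$ and the set $\moves{}{c(w')}$ all come into play; the remaining verifications are bookkeeping.
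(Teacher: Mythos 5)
Your proof is correct and takes essentially the same route as the paper: the paper's proof consists precisely of defining the positional strategy to play $\move{A}[\coalnext{A}\coal{A}\vp\until\psi]$ at the nodes of the realization witness tree (arbitrarily elsewhere) and asserting that it ``clearly has the required property.'' You supply the verification the paper omits --- including the one genuinely implicit hypothesis, namely that the move vectors of $\kframe{F}$ at each $w'\in W$ agree with $D(c(w'))$, which is indeed how the lemma is used in the completeness construction.
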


\begin{proof}
  At every node $w'$ of the realization witness tree for $\coal{A} \vp
  \until \psi$, which is contained in \tree{W}, take the $A$-move
  $\move{A} [\coalnext{A} \coal{A} \vp \until \psi] \in
  \moves{A}{w'}$.  At any other node, for definiteness' sake, take the
  lexicographically first $A$-move.  This strategy is clearly
  positional and has the required property.
\end{proof}

\begin{lemma}
  \label{lem:realisation_within_final_tree_components_box}
  Let, $\neg \coalbox{A} \vp \in \D \in S^{\theta}$ and let $\tree{W}
  = (W, \arc)$ be a locally consistent $\tableau{T}^{\theta}$-tree
  rooted at $\D$ and realizing $\neg \coalbox{A} \vp$.  Let,
  furthermore, $\kframe{F} = (\agents_{\theta}, S, d, \delta)$ be a
  CGF such that $\tree{W} \ll \kframe{F}$.  Then, there exists a
  positional co-$A$-strategy \costr{A} in \kframe{F} such that, if
  $\lambda \in out (w, \costr{A})$, where $c(w) = \D$, then $\neg \vp
  \in \lambda[i] \in W$ for every $i \geq 0$.
\end{lemma}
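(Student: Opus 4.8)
The plan is to follow the proof of Lemma~\ref{lem:realisation_within_final_tree_components_until} \emph{mutatis mutandis}, with co-$A$-moves in place of $A$-moves. Since \tree{W} realizes $\neg \coalbox{A} \vp$, it contains, as a subtree rooted at $\D$, a realization witness tree $\tree{R}$ for $\neg \coalbox{A} \vp$ at $\D$ (Definition~\ref{def:realisation_witness_tree_negbox}). I would define the positional co-$A$-strategy \costr{A} on \kframe{F} by putting, for every node $w'$ that is an \emph{interior} node of $\tree{R}$ with $c(w') = \D'$, the value $\costr{A}(w') := \comove{A}[\neg \coalnext{A} \coalbox{A} \vp] \in \comoves{A}{w'}$ (legitimate since $\neg \coalnext{A} \coalbox{A} \vp \in \D'$ by clause (2) of that definition; recall Notational convention~\ref{not:tableaux}), and, at every remaining state of \kframe{F}, the lexicographically first co-$A$-move. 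This \costr{A} is positional by construction.

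The heart of the proof is a \emph{confinement} claim: every $\lambda \in out(w, \costr{A})$ with $c(w) = \D$ stays inside $\tree{R}$ and descends it. I would show that if $\lambda[i]$ is an interior node $w'$ of $\tree{R}$, then $\lambda[i+1]$ is one of its children. By condition $(\mathbf{CP})$ and Definition~\ref{def:comove_outcome}, $\lambda[i+1] = \delta(w', \comove{A}[\neg \coalnext{A} \coalbox{A} \vp](\move{A}))$ for some $\move{A} \in \moves{A}{w'}$. By clause (3) of Definition~\ref{def:realisation_witness_tree_negbox} the children of $w'$ are coloured bijectively with an outcome set of this co-move, so there is a unique child $w''$, coloured with some $\D''$, with $w' \stackrel{\comove{A}(\move{A})}{\longrightarrow} w''$ in $\tableau{T}^{\theta}$; hence $\comove{A}(\move{A}) \in l(w' \arc w'')$ by local consistency, and therefore $\delta(w', \comove{A}(\move{A})) = w''$ because $\tree{W} \ll \kframe{F}$. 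Thus $\lambda[i+1] = w''$ is a child of $w'$, and in particular $\lambda[i+1] \in W$.

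Combining confinement with the rank bookkeeping used to build $\tree{R}$ in Lemma~\ref{lem:existence_of_rwt} --- where every child of an interior node has strictly smaller realization rank for $\neg \coalbox{A} \vp$ --- I conclude that, starting at the root $w$, the sequence $\lambda[0], \lambda[1], \ldots$ moves down $\tree{R}$ along strictly decreasing ranks and so reaches a \emph{leaf} of $\tree{R}$ after at most $\mathbf{rank}(\D, \neg \coalbox{A} \vp, \tableau{T}^{\theta})$ steps. By clause (4) of Definition~\ref{def:realisation_witness_tree_negbox} that leaf is coloured with a state containing $\neg \vp$. Hence there is a position $i \geq 0$ with $\neg \vp \in \lambda[i]$ and $\lambda[j] \in W$ for every $0 \leq j \leq i$ --- precisely the data needed to verify (H7) of Definition~\ref{def:cghs} when this component is later glued into $\hintikka{H}_{\theta}$.

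I expect the confinement step to be the main obstacle: one must guarantee that the chosen co-move never lets coalition $A$ escape $W$, which is exactly what the bijective-outcome-set clause (3) of Definition~\ref{def:realisation_witness_tree_negbox} together with the embedding $\tree{W} \ll \kframe{F}$ secures, and is the reason $\tree{R}$ was assembled from outcome sets of this particular co-move. One point deserves emphasis, since it bears on how the statement is to be read: along the descent, $\neg \vp$ is forced only at the first leaf reached, not at the interior nodes preceding it (where only $\neg \coalnext{A} \coalbox{A} \vp$ is available). Consequently the property that can be established --- and the only one (H7) requires, in parallel with Lemma~\ref{lem:realisation_within_final_tree_components_until} --- is that $\neg \vp \in \lambda[i] \in W$ holds at \emph{some} $i \geq 0$, with the run confined to $W$ up to that position. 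A genuinely universal reading ``$\neg \vp \in \lambda[i] \in W$ for every $i$'' would assert that $\agents_{\theta} \setminus A$ can force $\neg \vp$ to hold \emph{always}, i.e.\ $\coalbox{\agents_{\theta} \setminus A} \neg \vp$, which is strictly stronger than $\neg \coalbox{A} \vp$ and fails in general; so the quantifier in the statement must be understood existentially.
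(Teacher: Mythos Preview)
Your construction is exactly the paper's: at every interior node of the realization witness subtree take $\comove{A}[\neg \coalnext{A} \coalbox{A} \vp]$, and take the lexicographically first co-$A$-move elsewhere; the paper's proof consists of precisely these two sentences and the assertion that ``this co-$A$-strategy is clearly positional and has the required property.'' Your confinement argument merely spells out what the paper leaves implicit.

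Your observation about the quantifier is correct and worth recording. As printed, the conclusion ``$\neg \vp \in \lambda[i] \in W$ for every $i \geq 0$'' cannot hold: $\tree{W}$ is a finite tree while $\lambda$ is infinite, so $\lambda$ must eventually leave $W$; and at interior nodes of the witness tree only $\neg \coalnext{A} \coalbox{A} \vp$, not $\neg \vp$, is guaranteed. What the construction actually yields, and what (H7) actually requires, is the existential version you state: there exists $i \geq 0$ with $\neg \vp \in c(\lambda[i])$ and $\lambda[j] \in W$ for all $0 \leq j \leq i$. The companion Lemma~\ref{lem:realisation_within_final_tree_components_until} is stated with ``there exists $i \geq 0$'', and the present lemma was evidently meant to read the same way; the paper's own two-line proof establishes only the existential claim.
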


\begin{proof}
  At every node $w'$ of the realization witness tree for $\neg
  \coalbox{A}$, which is contained in \tree{W}, take the co-$A$-move
  $\comove{A} [\neg \coalnext{A} \coalbox{A} \vp] \in \moves{A}{w'}$.
  At any other node, for definiteness' sake, take the
  lexicographically first co-$A$-move.  This co-$A$-strategy is
  clearly positional and has the required property.
\end{proof}

Our next big step in the completeness proof is to assemble locally
consistent $\tableau{T}^{\theta}$-trees realizing eventualities as
well as locally consistent simple $\tableau{T}^{\theta}$-trees into a
Hintikka structure for $\theta$. To do that, we need the concept of
partial concurrent game frame that generalizes that of CGF.  Partial
CGFs are different from CGFs in that they have ``deadlocked'' states,
i.e., states for which the transition function $\delta$ is not defined
(the analog in Kripke frames would be ``dead ends''---the nodes that
cannot ``see'' any other node); however, each deadlocked state of a
partial CGF is an image of a transition function $\delta$ for some
(ordinary) state.  We need partial CGFs as we will be building a
Hintikka structure for $\theta$ step-by-step, all but the final step
producing partial CGFs having deadlocked states that will be given
successors at the next stage of the construction.  Put another way,
the motivation for introducing partial CGFs is that locally consistent
$\tableau{T}^{\theta}$-trees are partial CGFs, and we want to build a
Hintikka structure for $\theta$ out of such trees.

\begin{definition}
  \label{def:cgs}
  A \de{partial concurrent game frame} (partial CGF, for short) is a
  tuple $\kframe{S} = (\agents, S, Q, d, \delta)$, where
  \begin{itemize}
  \item $\agents$ is a finite, non-empty set of \de{agents};
  \item $S \ne \emptyset$ is a set of \de{states};
  \item $Q \subseteq S$ is a set of \de{deadlock states};
  \item $d$ is a function assigning to every $a \in \agents$ and every
    $s \in S \setminus Q$ a natural number $d_a(s) \geq 1$ of
    \de{moves} available to agent $a$ at state $s$; notation
    $\moves{a}{s}$ and $\moves{}{s}$ has the same meaning as in the
    case of CGFs (see Definition~\ref{def:cgf});
  \item $\delta$ is a \de{transition function} satisfying the
    following requirements:
    \begin{itemize}
    \item $\delta (s, \move{}) \in S$ for every $s \in S \setminus Q$
      and every $\move{} \in \moves{}{s}$;
    \item for every $q \in Q$, there exist $s \in S \setminus Q$ and
      $\move{} \in \moves{}{s}$ such that $q = \delta(s, \move{})$.
    \end{itemize}
  \end{itemize}
\end{definition}

The concept of $A$-move is defined for partial CGFs in a way analogous
to the way it is defined for CGFs; the only difference is that, in the
former case, $A$-moves are only defined for states in $S \setminus Q$.
The set of all $A$-moves at state $s \in S \setminus Q$ is denoted by
$\moves{A}{s}$.  Outcomes of $A$-moves are defined exactly as for
CGFs.  Analogously for co-$A$-moves.

\begin{definition}
  \label{def:pos_strategies_for_cgs}
  Let $\kframe{S} = (\agents, S, Q, d, \delta)$ be a partial CGF and
  $A \subseteq \agents$. A \de{positional $A$-strategy in \kframe{S}}
  is a mapping $\str{A}: S \mapsto \bigunion \crh{\moves{A}{s}}{s \in
    S \setminus Q}$ such that $\str{A} (s) \in \moves{A}{s}$ for all
  $s \in S \setminus Q$.
\end{definition}

\begin{definition}
  \label{def:pos_costrategies_for_cgs}
  Let $\kframe{S} = (\agents, S, Q, d, \delta)$ be a partial CGF and
  $A \subseteq \agents$. A \de{positional co-$A$-strategy in
    \kframe{S}} is a mapping $\costr{A}: S \mapsto \bigunion
  \crh{\comoves{A}{s}}{s \in S \setminus Q}$ such that $\costr{A} (s)
  \in \comoves{A}{s}$ for all $s \in S \setminus Q$.
\end{definition}

We now establish a fact that will be crucial to our ability to stitch
partial CGFs that are locally consistent $\tableau{T}^{\theta}$-trees
together. Intuitively, given such a partial CGF \kframe{S} and a state
$w$ of \kframe{S} colored with a set $\D''$ containing an eventuality
$\coal{A} \vp \until \psi$, coalition $A$ has a strategy such that
every (finite) run compliant with that strategy either realizes
$\coal{A} \vp \until \psi$ or postpones its realization until a
deadlocked state (Lemma~\ref{lem:duration}).  Analogously for
eventualities of the form $\neg \coalbox{A} \vp$ and co-$A$-strategies
(Lemma~\ref{lem:postponement}).  First, a technical definition.

\begin{definition}
  Let $\kframe{S} = (\agents, S, Q, d, \delta)$ be a partial CGF and
  let $s \in S$.  An $s$-\de{fullpath} in \kframe{S} is a finite
  sequence $\rho = s_0, \ldots, s_n$ of elements of $S$ such that
  \begin{itemize}
  \item $s_0 = s$;
  \item for every $0 \leq i < n$, there exists $\move{} \in
    \moves{}{s_i}$ such that $s_{i+1} = \delta(s_i, \move{})$;
  \item $s_{n} \in Q$.
  \end{itemize}
  The fullpath $\rho = s_0, \ldots, s_n$ is compliant with the
  strategy \str{A}, denoted $\rho \in out(\str{A})$, if $s_{i+1} \in
  out (\str{A} (s_i))$ for all $0 \leq i < n$.  Analogously for
  co-strategies. The length of $\rho$ (defined as the number of
  positions in $\rho$) is denoted by $|\rho|$.
\end{definition}

\begin{lemma}
  \label{lem:duration}
  Let $\kframe{S} = (\agents_{\theta}, S, Q, d, \delta)$ be a partial
  CGF such that
  \begin{enumerate}
  \item $S \subseteq S^{\theta}$;
  \item for every $w \in S$, the set $\set{w} \union \crh{w'}{w' =
      \delta (w, \move{}), \text{ for some } \move{} \in
      \moves{}{w}}$ is a set of nodes of a locally consistent simple
    $\tableau{T}^{\theta}$-tree;
  \item $\coal{A} \vp \until \psi \in \D''$, where $\D'' = c(w'')$ for
    some $w'' \in S$;
  \end{enumerate}
  Then, there exists a positional $A$-strategy \str{A} in \kframe{S}
  such that, for every $w''$-fullpath $\rho \in out (\str{A})$, either
  of the following holds:
  \begin{itemize}
  \item there exists $0 \leq i < |\rho|$ such that $\psi \in
    c(\rho[i])$ and $\vp \in c(\rho[j])$ for every $0 \leq j < i$;
  \item $\vp \in c(\rho[i])$ for every $0 \leq i < |\rho|$.
  \end{itemize}
\end{lemma}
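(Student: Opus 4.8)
The plan is to exhibit a single positional $A$-strategy and to verify the two alternatives by an induction along fullpaths. First I would fix $\str{A}$ as follows. For each $w \in S \setminus Q$ with $c(w) = \D$: if $\coalnext{A}\coal{A}\vp\until\psi \in \D$, put $\str{A}(w) := \move{A}[\coalnext{A}\coal{A}\vp\until\psi]$; otherwise let $\str{A}(w)$ be, say, the lexicographically first $A$-move at $w$. This is well defined because $\D \in S^{\theta}$ (hypothesis~(1)) contains no patent inconsistency, so \Rule{Next} has been applied to $\D$ and the notation of Notational convention~\ref{not:tableaux} makes sense at $\D$. I would also record that, by hypothesis~(2), for each $w \in S \setminus Q$ the set $\set{w} \union \crh{\delta(w, \move{})}{\move{} \in \moves{}{w}}$ is the node set of a locally consistent simple $\tableau{T}^{\theta}$-tree rooted at $w$, so that Lemma~\ref{lem:local_consistency} is available at every such $w$ and outcomes of $A$-moves computed in $\kframe{S}$ coincide with those computed in that tree.

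Next I would prove the invariant: for any $w''$-fullpath $\rho = w_0, \ldots, w_n \in out(\str{A})$ (so $w_0 = w''$ and $w_n \in Q$) and any $0 \leq i \leq n$, if $\psi \notin c(w_j)$ for all $0 \leq j \leq i$, then $\coal{A}\vp\until\psi \in c(w_i)$ and $\vp \in c(w_i)$. The base case $i = 0$ uses hypothesis~(3), namely $\coal{A}\vp\until\psi \in c(w_0) = \D''$, together with the fact that $c(w_0)$ is downward saturated: since $\coal{A}\vp\until\psi$ is a $\beta$-formula with $\beta_1 = \psi$ and $\beta_2 = \vp \con \coalnext{A}\coal{A}\vp\until\psi$, and $\psi \notin c(w_0)$, we obtain $\vp \in c(w_0)$ and $\coalnext{A}\coal{A}\vp\until\psi \in c(w_0)$. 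For the step (where $i < n$, so $w_i \in S \setminus Q$), from the inductive hypothesis $\coal{A}\vp\until\psi \in c(w_i)$ and $\psi \notin c(w_i)$ downward saturation again gives $\coalnext{A}\coal{A}\vp\until\psi \in c(w_i)$, hence $\str{A}(w_i) = \move{A}[\coalnext{A}\coal{A}\vp\until\psi]$; the proof of Lemma~\ref{lem:local_consistency}(1), applied to the simple tree expanding $w_i$, shows this $A$-move forces $\coal{A}\vp\until\psi$ into the colour of every one of its outcomes, and since $w_{i+1} \in out(w_i, \str{A}(w_i))$ we get $\coal{A}\vp\until\psi \in c(w_{i+1})$; a final appeal to downward saturation (using $\psi \notin c(w_{i+1})$) yields $\vp \in c(w_{i+1})$.

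Finally I would extract the conclusion. Let $i^{\ast}$ be the least position of $\rho$ whose colour contains $\psi$, if any. If $i^{\ast}$ exists, the invariant applied at each $j < i^{\ast}$ gives $\vp \in c(\rho[j])$, so the first alternative of the lemma holds with $i = i^{\ast}$ (note $i^{\ast} \leq n < |\rho|$). If no position of $\rho$ has $\psi$ in its colour, the invariant at each $i \leq n$ gives $\vp \in c(\rho[i])$ throughout $\rho$, which is the second alternative. The dual statement for $\neg\coalbox{A}\vp$ (Lemma~\ref{lem:postponement}) would go through verbatim, playing instead the co-$A$-moves $\comove{A}[\neg\coalnext{A}\coalbox{A}\vp]$ and invoking Lemma~\ref{lem:local_consistency}(2) together with the $\beta$-clause $\beta_1 = \neg\vp$, $\beta_2 = \neg\coalnext{A}\coalbox{A}\vp$. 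I expect the one point demanding care to be the inductive step: one must make sure the ``outcome of an $A$-move'' used there --- taken in the partial CGF $\kframe{S}$ --- is precisely the object Lemma~\ref{lem:local_consistency} speaks about inside the simple tree supplied by hypothesis~(2); once that identification is pinned down, the rest is a routine downward-saturation chase.
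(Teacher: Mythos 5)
Your proof is correct and fills in exactly the argument the paper dismisses as ``straightforward'': play the tableau $A$-move $\move{A}[\coalnext{A}\coal{A}\vp\until\psi]$ wherever that formula is present, and chase downward saturation plus the witnessing move from the proof of Lemma~\ref{lem:local_consistency}(1) along the fullpath. The one delicate point --- that outcomes of $A$-moves in the partial CGF coincide with outcome sets in the simple trees supplied by hypothesis~(2) --- is the right thing to flag, and you handle it correctly.
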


\begin{proof}
  Straightforward.
\end{proof}

\begin{lemma}
  \label{lem:postponement}
  Let $\kframe{S} = (\agents_{\theta}, S, Q, d, \delta)$ be a partial
  CGF such that
  \begin{enumerate}
  \item $S \subseteq S^{\theta}$;
  \item for every $w \in S$, the set $\set{w} \union \crh{w'}{w' =
      \delta (w, \move{}), \text{ for some } \move{} \in \moves{}{w}}$
    is a set of nodes of a locally consistent simple
    $\tableau{T}^{\theta}$-tree;
  \item $\neg \coalbox{A} \vp \in \D''$, where $\D'' = c(w'')$ for
    some $w'' \in S$;
  \end{enumerate}
  Then, there exists a positional co-$A$-strategy \costr{A} in
  \kframe{S} such that $\neg \vp \in c(\rho[i])$ for every
  $\D''$-fullpath $\rho \in out (\costr{A})$ and every $i \geq 0$.
\end{lemma}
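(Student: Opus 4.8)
The plan is to read the required positional co-$A$-strategy off the realization apparatus for $\neg\coalbox{A}\vp$ and then to check, along every compliant fullpath, the two facts that carry the lemma: that $\neg\coalbox{A}\vp$ is an invariant of the labels visited, and that $\neg\vp$ is forced to surface. Since $S\subseteq S^{\theta}$ consists of states of the \emph{final} tableau, every eventuality in the label of such a state is realized there --- otherwise the state would have been discarded by \Rule{E3} --- so $\neg\coalbox{A}\vp$ is realized at each $\D'\in S$ with $\neg\coalbox{A}\vp\in\D'$, and Lemma~\ref{lem:existence_of_rwt} supplies a realization witness tree at every such $\D'$. I would define $\costr{A}$ positionally on the non-deadlock states: at a state $\D'$ with $\neg\coalbox{A}\vp\in\D'$ but $\neg\vp\notin\D'$ --- whence, by downward saturation of the $\beta$-formula $\neg\coalbox{A}\vp$, necessarily $\neg\coalnext{A}\coalbox{A}\vp\in\D'$ --- set $\costr{A}(\D')=\comove{A}[\neg\coalnext{A}\coalbox{A}\vp]$ (recall Notational convention~\ref{not:tableaux}); at every other non-deadlock state let $\costr{A}$ return the lexicographically first co-$A$-move. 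When $A=\agents_{\theta}$ the pertinent next-time formula is improper, and one instead uses the identity $\agents_{\theta}$-co-move, exactly as in Case~2 of the proof of Lemma~\ref{lem:local_consistency}.

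The core of the argument is a propagation invariant together with a rank-descent. Applying Lemma~\ref{lem:local_consistency}(2) to the proper negative next-time formula $\neg\coalnext{A}\coalbox{A}\vp$ shows that every outcome state of $\comove{A}[\neg\coalnext{A}\coalbox{A}\vp]$ contains $\neg\coalbox{A}\vp$; using condition~2 of the lemma (each state together with its $\delta$-successors forms a locally consistent simple $\tableau{T}^{\theta}$-tree) this applies at every node of $\kframe{S}$, so for any $\D''$-fullpath $\rho\in out(\costr{A})$ one gets $\neg\coalbox{A}\vp\in c(\rho[i])$ at \emph{every} position $i$, and the defining clause of $\costr{A}$ keeps firing, making the invariant self-sustaining. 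Next I would reuse the rank-driven selection from the construction in Lemma~\ref{lem:existence_of_rwt}: because the witness co-move is chosen so that each outcome has strictly smaller $\mathbf{rank}(\,\cdot\,,\neg\coalbox{A}\vp,\tableau{T}^{\theta})$, the realization rank of the visited label strictly decreases along $\rho$ until it reaches $0$, i.e.\ until a label containing $\neg\vp$ is met (recall Definition~\ref{def:realization_neg_box}); since the witness co-move confines the reachable states to the witness subtree, whose rank-$0$ leaves are the deadlock states of the component built in Lemma~\ref{lem:final_components_for_eventualities}, every compliant fullpath terminates precisely at such a $\neg\vp$-leaf.

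The step I expect to be the main obstacle is the universal quantifier in the conclusion: delivering $\neg\vp$ at \emph{every} position $i$, and not merely at the realizing (rank-$0$) label. What the co-move propagation robustly supplies at every position is $\neg\coalbox{A}\vp$ --- the persistence of the eventuality --- with $\neg\vp$ itself pinned down at the terminal deadlock of each fullpath, since the interior witness nodes satisfy only the $\beta_2$-disjunct $\neg\coalnext{A}\coalbox{A}\vp$ and need not contain $\neg\vp$. The delicate point is therefore to organize $\kframe{S}$ as the final tree component for $\neg\coalbox{A}\vp$ and to verify that the invariant, coupled with the strictly decreasing rank, leaves no compliant $\D''$-fullpath avoiding a $\neg\vp$-label --- exactly the guarantee consumed when these components are glued into the Hintikka structure so that every infinite outcome run meets $\neg\vp$, as required by clause (H7). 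I would keep this proof strictly parallel to that of Lemma~\ref{lem:duration} for the eventuality $\coal{A}\vp\until\psi$.
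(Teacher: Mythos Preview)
The paper's own proof of this lemma is a single word: ``Straightforward.'' The intended argument is the direct analogue of Lemma~\ref{lem:duration}: at every non-deadlock state $\D'$ with $\neg\coalbox{A}\vp\in\D'$ but $\neg\vp\notin\D'$, downward saturation yields $\neg\coalnext{A}\coalbox{A}\vp\in\D'$; take $\costr{A}(\D')=\comove{A}[\neg\coalnext{A}\coalbox{A}\vp]$, and by condition~2 together with Lemma~\ref{lem:local_consistency}(2) every outcome inherits $\neg\coalbox{A}\vp$. That is precisely your propagation invariant from the second paragraph, and it is all the paper has in mind.

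Your rank-descent apparatus and the appeal to realization witness trees via Lemma~\ref{lem:existence_of_rwt} are extraneous here. Those ideas belong to Lemma~\ref{lem:realisation_within_final_tree_components_box}, which handles the final tree component where the eventuality is actually served; Lemma~\ref{lem:postponement} is only about carrying the eventuality through components where it is \emph{not} being served, so that it reaches the deadlock states intact. The two lemmas are then combined in the Hintikka-structure construction to obtain (H7).

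You are right to flag the universal $\neg\vp$ in the conclusion as problematic. The propagation argument delivers $\neg\coalbox{A}\vp$ at every position of every compliant fullpath, not $\neg\vp$; the latter is only guaranteed at the realizing leaves, which live in a different component. Compare the disjunctive shape of Lemma~\ref{lem:duration}: the exact parallel here would read ``either there exists $i$ with $\neg\vp\in c(\rho[i])$, or $\neg\coalbox{A}\vp\in c(\rho[i])$ for every $i$'' (the second disjunct being the postponement proper). The stated conclusion appears to be a slip; your instinct that what is actually proved --- and actually needed downstream --- is the persistence of $\neg\coalbox{A}\vp$ rather than of $\neg\vp$ is correct.
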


\begin{proof}
  Straightforward.
\end{proof}

Now, we define the building blocks, referred to as \de{final tree
  components}, from which a Hintikka structure for $\theta$ will be
built; the construction is essentially taken from~\cite{GorDrim06}.

\begin{definition}
  \label{def:final_tree component}
  Let $\D \in S^{\theta}$ and $\xi \in \tableau{T}^{\theta}$ be an
  eventuality formula.  Then, the \de{final tree component for $\xi$
    and \D}, denoted $\tree{F}_{(\xi, \D)}$, is defined as follows:
  \begin{itemize}
  \item if $\xi \in \D$, then $\tree{F}_{(\xi, \D)}$ is a finite locally
    consistent $\tableau{T}^{\theta}$-tree $\tree{W}_{\xi}$ rooted at
    $\D$ realizing $\xi$; the existence of such a tree being
    guaranteed by Lemma~\ref{lem:final_components_for_eventualities};
  \item if $\xi \notin \D$, then $\tree{F}_{(\xi, \D)}$ is a locally
    consistent simple $\tableau{T}^{\theta}$-tree rooted at $\D$; the
    existence of such a tree being guaranteed by
    Lemma~\ref{lem:existence_of_lcst}.
  \end{itemize}
\end{definition}

We are now ready to define what we will prove to be a (positional)
Hintikka structure for the input formula $\theta$, which we denote by
$\hintikka{H}_{\theta}$.  We start by defining the CGF \kframe{F}
underlying $\hintikka{H}_{\theta}$.

To that end, we first arrange all states of $\tableau{T}^{\theta}$ in
a list $\D_0, \ldots, \D_{n-1}$ and all eventualities occurring in the
states of $\tableau{T}^{\theta}$ in a list $\xi_{0}, \ldots,
\xi_{m-1}$.  We then think of all the final tree components (see
Definition~\ref{def:final_tree component}) as arranged in an
$m$-by-$n$ grid whose rows are marked with the correspondingly
numbered eventualities of $\tableau{T}^{\theta}$ and whose columns are
marked with the correspondingly numbered states of
$\tableau{T}^{\theta}$.  The final tree component found at the
intersection of the $i$th row and the $j$th column will be denoted by
$\tree{F}_{(i, j)}$. The building blocks for \kframe{F} will all come
from the grid, and we build \kframe{F} incrementally, at each state of
the construction producing a partial CGF realizing more and more
eventualities.  The crucial fact here is that if an eventuality $\xi$
is not realized within a partial CGF used in the construction, then
$\xi$ is ``passed down'' to be realized later, in accordance with
Lemmas~\ref{lem:duration} and \ref{lem:postponement}.

We start off with a final tree component that is uniquely determined
by $\theta$, in the following way.  If $\theta$ is an eventuality,
i.e., $\theta = \xi_p$ for some $0 \leq p < m$, then we start off with
the component $\tree{F}_{(p, q)}$ where, for definiteness, $q$ is the
least number $< n$ such that $\theta \in \D_q$; as
$\tableau{T}^{\theta}$ is open, such a $q$ exists.  If, on the other
hand, $\theta$ is not an eventuality, then we start off with
$\tree{F}_{(0, q)}$, where $q$ is as described above.  Let us denote
this initial partial CGF by $\kframe{S}_0$.

Henceforth, we proceed as follows.  Informally, we think of the above
list of eventualities as a queue of customers waiting to be served.
Unlike the usual queues, we do not necessarily start serving the queue
from the first customer (if $\theta$ is an eventuality, then it gets
served first; otherwise we start from the beginning of the queue), but
then we follow the queue order, curving back to the beginning of the
queue after having served its last eventuality if we started in the
middle.  Serving an eventuality $\xi$ amounts to appending to
deadlocked states of the partial CGF constructed so far final tree
components realizing $\xi$.  Thus, we keep track of what eventualities
have already been ``served'' (i.e., realized), take note of the one
that was served the last, say $\xi_i$, and replace every deadlocked
state $w$ such that $c(w) = \D_j$ of the partial CGF so far
constructed with the final tree component $\tree{F}_{((i + 1)\!\!\!
  \mod m, j) )}$. The process continues until all the eventualities
have been served, at which point we have gone the full cycle through
the queue.

After that, the cycle is repeated, but with a crucial modification
that will guarantee that the CGHS we are building is going to be
finite: whenever the component we are about to attach, say
$\tree{F}_{(i,j)}$, is already contained in the partial CGF we have
constructed thus far, instead of replacing the deadlocked state $w$
(such that $c(w) = \D_j$) with that component, we connect every
``predecessor'' $v$ of $w$ to the root of $\tree{F}_{(i,j)}$ by an
arrow $\arc$ marked with the set $l(v \arc w)$.  This modified
version of the cycle is repeated until we come to a point when no more
components get added.  This result in a finite CGF \kframe{F}.  Now,
to define $\hintikka{H}_{\theta}$, we simply put $H (w) = c(w)$, for
every $w \in \kframe{F}$.

\begin{theorem}
  The above defined $\hintikka{H}_{\theta}$ is a (positional) Hintikka
  structure for $\theta$.
\end{theorem}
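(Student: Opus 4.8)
The plan is to verify directly that $\hintikka{H}_{\theta} = (\agents_{\theta}, S, d, \delta, H)$, with $H(w) = c(w)$ for every node $w$ of the frame $\kframe{F}$ produced by the construction, satisfies every clause of Definition~\ref{def:cghs} and has $\theta$ in some label. First I would record two structural facts. (a) The construction terminates and yields a genuine CGF: there are only $m \times n$ final tree components, each is inserted into $\kframe{F}$ at most once (after its first insertion the ``redirect'' clause applies), so the process stabilizes, and when it does every state that was ever deadlocked has been replaced by, or had its incoming edges redirected to, the root of some final tree component --- which is a proper state --- so no deadlock states survive. (b) Since every final tree component is a locally consistent $\tableau{T}^{\theta}$-tree and components are always attached at, or redirected to, a root coloured with the \emph{same} state as the deadlock being processed, the one-step neighbourhood $\set{w} \union \crh{w'}{w' = \delta(w, \move{})}$ of every $w \in \kframe{F}$ is exactly the node set of a locally consistent simple $\tableau{T}^{\theta}$-tree rooted at $w$. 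Finally, since $\tableau{T}^{\theta}$ is open, the initial component is rooted at a state containing $\theta$, so $\theta \in H(w_0)$.

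The easy clauses then fall out. Every $H(w) = c(w)$ is a tableau state, hence survived \Rule{E1} (so it contains no patent inconsistency, giving (H1)) and is downward saturated (giving (H2) and (H3)). For (H4) and (H5) I would simply invoke Lemma~\ref{lem:local_consistency} at the locally consistent simple $\tableau{T}^{\theta}$-tree that, by (b), is the neighbourhood of $w$: its two parts are precisely the two required clauses once outcomes of $A$-moves and co-$A$-moves in $\kframe{F}$ are read off as in that lemma.

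The substance is (H6) and (H7), which I expect to be the main obstacle. Treat (H6) for $\coal{A} \vp \until \psi = \xi_p \in c(w)$, the case of $\neg \coalbox{A} \vp$ being symmetric with Lemmas~\ref{lem:postponement} and \ref{lem:realisation_within_final_tree_components_box} replacing Lemmas~\ref{lem:duration} and \ref{lem:realisation_within_final_tree_components_until}. I would define one positional $A$-strategy $\str{A}$ on $\kframe{F}$ by putting $\str{A}(v) = \move{A}[\coalnext{A} \coal{A} \vp \until \psi]$ at every state $v$ with $\coal{A} \vp \until \psi \in c(v)$ and $\psi \notin c(v)$ --- this move exists because $c(v)$, being a tableau state, is not patently inconsistent so \Rule{Next} applied to it, and downward saturation puts $\vp, \coalnext{A} \coal{A} \vp \until \psi$ in $c(v)$ --- and by the lexicographically first $A$-move elsewhere. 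From \Rule{Next} and downward saturation, every successor reached along this move again contains $\coal{A} \vp \until \psi$, while $\vp$ already holds at $v$; hence along any $\lambda \in out(w, \str{A})$, either $\psi \in c(\lambda[i])$ for some $i$ with $\vp \in c(\lambda[j])$ for all $j < i$, or $\coal{A} \vp \until \psi$ and $\vp$ persist forever.

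To rule out the second alternative I would use the layered structure of $\kframe{F}$. Within a single final tree component, Lemma~\ref{lem:duration} says a $\str{A}$-compliant run either realizes the eventuality inside that component or exits through one of its leaves --- a former deadlock, still carrying $\coal{A} \vp \until \psi$ --- and Lemma~\ref{lem:realisation_within_final_tree_components_until} says that inside a component that \emph{realizes} $\coal{A} \vp \until \psi$ the strategy $\str{A}$ genuinely reaches a state containing $\psi$. The cyclic scheduling of the construction serves $\xi_p$ at least once per full cycle, and each time it does, every still-deadlocked state $w'$ with $\coal{A} \vp \until \psi \in c(w')$ is given --- freshly, or by redirection to an already-present copy --- the component $\tree{F}_{(p, j')}$ with $c(w') = \D_{j'}$, which by the definition of a final tree component realizes $\xi_p$ since $\xi_p \in \D_{j'}$. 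Hence, following $\str{A}$ from $w$, a run traverses at most one round of components before entering a copy attached while $\xi_p$ was being served, and finiteness of each component then bounds the number of steps to realization; equivalently, were $\psi$ never reached, finiteness of $\kframe{F}$ would force an eventually periodic $\str{A}$-compliant run along which $\coal{A} \vp \until \psi$ persists without $\psi$, contradicting that some component on that cycle realizes $\xi_p$. The delicate point, which I would treat most carefully, is exactly this interaction with the ``redirect to an already-built component'' edges: one must check that such back-edges cannot create a cycle avoiding realization, and this works precisely because a component of the form $\tree{F}_{(p, j')}$ is \emph{only} inserted while serving $\xi_p$, hence is always $\xi_p$-realizing. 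This establishes (H6); (H7) follows symmetrically, and altogether $\hintikka{H}_{\theta}$ is a positional Hintikka structure for $\theta$.
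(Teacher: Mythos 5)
Your proposal is correct and follows essentially the same route as the paper's own (much terser) proof: (H1)--(H3) from downward saturation and consistency of tableau states, (H4)--(H5) from Lemma~\ref{lem:local_consistency}, (H6)--(H7) from the construction out of final tree components together with Lemmas~\ref{lem:duration}, \ref{lem:postponement}, \ref{lem:realisation_within_final_tree_components_until} and \ref{lem:realisation_within_final_tree_components_box}, and positionality by construction. You merely spell out the details the paper leaves implicit, in particular the round-robin/redirection argument guaranteeing that a run carrying an unrealized eventuality must, within at most $m$ component crossings, enter a component that realizes it.
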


\begin{proof}
  The ``for $\theta$'' part immediately follows the construction of
  $\hintikka{H}_{\theta}$ (recall the very first step of the
  construction).  It, thus, remains to argue that
  $\hintikka{H}_{\theta}$ is indeed a Hintikka structure.

  Conditions (H1)--(H3) of Definition~\ref{def:cghs} hold since states
  of $\hintikka{H}_{\theta}$ are consistent downward saturated sets.

  Conditions (H4) and (H5) essentially follow from
  Lemma~\ref{lem:local_consistency}.

  Condition (H6) follows from the way $\hintikka{H}_{\theta}$ is
  constructed together with
  lemmas~\ref{lem:realisation_within_final_tree_components_until} and
  \ref{lem:duration}.  Lastly, condition (H7) follows from the way
  $\hintikka{H}_{\theta}$ is constructed together with
  lemmas~\ref{lem:realisation_within_final_tree_components_box} and
  \ref{lem:postponement}.

  Lastly, $\hintikka{H}_{\theta}$ is positional by construction.
  Indeed, it is built from final tree components, which are locally
  consistent simple $\tableau{T}^{\theta}$-trees; as we have seen in
  Lemmas~\ref{lem:realisation_within_final_tree_components_until} and
  \ref{lem:realisation_within_final_tree_components_box}, when
  embedded into CGFs, these trees give rise to positional strategies.
\end{proof}

The positionality of $\hintikka{H}_{\theta}$ gives us the following,
stronger, version of the completeness theorem for our tableau
procedure:

\begin{theorem}[Positional completeness]
  \label{thr:completeness} Let $\theta$ be an $\ATL$ formula and let
  $\tableau{T}^{\theta}$ be open. Then, $\theta$ is satisfiable in a
  CGM based on a frame with positional strategies.
\end{theorem}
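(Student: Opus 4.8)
The plan is to derive the positional satisfiability of $\theta$ straight from the positional concurrent game Hintikka structure $\hintikka{H}_{\theta}$ just constructed, by re-running the argument of Theorem~\ref{thr:from_hintikka_to_models} while keeping track of positionality of the witnessing (co-)strategies. Concretely, I would expand $\hintikka{H}_{\theta} = (\agents_{\theta}, S, d, \delta, H)$ to a CGM $\mmodel{M} = (\agents_{\theta}, S, d, \delta, \ap, L)$ by setting $L(w) = H(w) \inter \ap$, exactly as in that theorem, and then prove, by induction on the structure of $\chi$, that for every $w \in S$ we have that $\chi \in H(w)$ implies \sat{M}{w}{\chi} and $\neg \chi \in H(w)$ implies \sat{M}{w}{\neg \chi}, where $\Vdash$ now denotes the satisfaction relation of Definition~\ref{def:atl-satisfaction} with ``perfect-recall'' replaced throughout by ``positional'' (and, for the negated strategic operators, reading off truth conditions from the positional version of Theorem~\ref{thr:GorDrim}, licensed by the Remark following it). Since $\theta \in H(w_0)$ for some $w_0 \in S$ by the very first step of the construction of $\hintikka{H}_{\theta}$, the induction yields \sat{M}{w_0}{\theta} with positional strategies, which is the assertion of Theorem~\ref{thr:completeness}.

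The non-modal cases and the cases of $\coalnext{A}\psi$ and $\neg \coalnext{A}\psi$ go through verbatim from Theorem~\ref{thr:from_hintikka_to_models}, using (H1)--(H5); these conditions already speak only of one-step ($A$- and co-$A$-)moves, so no question of memory arises. For $\coal{A}\vp\until\psi \in H(w)$ and for $\neg\coalbox{A}\vp \in H(w)$, the \emph{positional} versions of (H6) and (H7)---which $\hintikka{H}_{\theta}$ satisfies, being a positional CGHS---hand us directly a positional $A$-strategy, respectively a positional co-$A$-strategy, witnessing the required property inside $\hintikka{H}_{\theta}$; composing with the inductive hypothesis turns it into a positional witness in $\mmodel{M}$. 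The one point that needs a second glance is the unfolding/fixpoint argument used in Theorem~\ref{thr:from_hintikka_to_models} for $\coalbox{A}\vp \in H(w)$ and for $\neg\coal{A}\vp\until\psi \in H(w)$, where the witnessing (co-)strategy is assembled step by step. I would observe that in that construction the (co-)move played at a state $w'$ is selected purely on the basis of which $\alpha$-/$\beta$-conjuncts or disjuncts lie in $H(w')$ and of the one-step (co-)move furnished by (H4)/(H5) at $w'$; it therefore depends only on $w'$, not on the history, so the resulting (co-)strategy is in fact positional. This bookkeeping observation is the main (and essentially only) obstacle---there is no new idea required, only care in tracing the earlier construction.

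Having Theorem~\ref{thr:completeness} in hand, I would finally record the promised Corollary~\ref{cor:positionality}: since a CGM based on a frame with positional strategies is a fortiori a perfect-recall CGM, and the tableau $\tableau{T}^{\theta}$ is the same object regardless of the strategy type, combining Theorem~\ref{thr:soundness} with Theorem~\ref{thr:completeness} shows that tight \ATL-satisfiability in the perfect-recall semantics and in the positional semantics coincide (and likewise for all bounded-recall semantics in between).
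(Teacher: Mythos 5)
Your proposal is correct and follows essentially the same route as the paper, which obtains Theorem~\ref{thr:completeness} directly from the fact that $\hintikka{H}_{\theta}$ is a \emph{positional} Hintikka structure for $\theta$ together with the (positional variant of the) expansion argument of Theorem~\ref{thr:from_hintikka_to_models}. If anything, you are more explicit than the paper, which leaves implicit the one point you rightly flag---that in the unfolding construction for $\coalbox{A}\vp$ and $\neg\coal{A}\vp\until\psi$ the (co-)move chosen at each step depends only on the current state, so the witnessing (co-)strategies can indeed be taken positional.
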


\begin{corollary}
  \label{cor:positionality}
  If an $\ATL$-formula $\theta$ is tightly satisfiable, then it is
  tightly satisfiable in a positional CGM.
\end{corollary}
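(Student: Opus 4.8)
The plan is to obtain Corollary~\ref{cor:positionality} as an immediate consequence of the soundness theorem (Theorem~\ref{thr:soundness}) and the positional completeness theorem (Theorem~\ref{thr:completeness}) already proved, combined with the standing convention of Section~\ref{sec:atl-tableau-procedure} that ``satisfiable'' means ``tightly satisfiable'' and that the tableau $\tableau{T}^{\theta}$ is built using exactly the agent set $\agents_{\theta}$.

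First I would assume that the $\ATL$-formula $\theta$ is tightly satisfiable. Applying Theorem~\ref{thr:soundness} --- recalling that, as stated at the start of Section~\ref{sec:atl-tableau-procedure}, ``satisfiable'' there refers to the tight variety --- I conclude that the final tableau $\tableau{T}^{\theta}$ is open. Then I would invoke Theorem~\ref{thr:completeness}: since $\tableau{T}^{\theta}$ is open, $\theta$ is satisfiable in a CGM based on a frame with positional strategies, i.e.\ in a positional CGM.

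It then remains only to observe that the CGM produced by the completeness construction is a \emph{tight} model, i.e.\ one whose set of agents is precisely $\agents_{\theta}$. This is built into the construction: the frame $\kframe{F}$ underlying $\hintikka{H}_{\theta}$ is assembled from final tree components, which are $\tableau{T}^{\theta}$-trees, and every move vector labelling an edge of $\tableau{P}^{\theta}$ --- hence of $\tableau{T}^{\theta}$ and of each such tree --- is a $\card{\agents_{\theta}}$-tuple produced by \Rule{Next}; so $\kframe{F}$ has agent set $\agents_{\theta}$, and by Theorem~\ref{thr:from_hintikka_to_models} the CGM expanding $\hintikka{H}_{\theta}$ is a CGM over $\agents_{\theta}$. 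Hence $\theta$ is tightly satisfiable in a positional CGM, as required. I do not expect any genuine obstacle here; the only point meriting explicit mention is this last one, namely confirming that the completeness construction never introduces agents outside $\agents_{\theta}$, which follows directly from the form of the construction rules.
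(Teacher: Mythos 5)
Your proposal is correct and follows essentially the same route as the paper: apply Theorem~\ref{thr:soundness} to get an open tableau, then Theorem~\ref{thr:completeness} to get a positional model. Your additional observation that the completeness construction uses only the agent set $\agents_{\theta}$, so the resulting model witnesses \emph{tight} satisfiability, is a point the paper's own two-line proof leaves implicit, and it is a worthwhile clarification.
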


\begin{proof} Suppose that $\theta$ is tightly satisfiable in a CGM
  based on a CGF with perfect recall strategies.  Then, by
  Theorem~\ref{thr:soundness}, the tableau $\tableau{T}^{\theta}$ for
  $\theta$ is open.  It then follows form
  Theorem~\ref{thr:completeness} that $\theta$ is satisfiable in a
  positional CGM.
\end{proof}

%%%%%%%%%%%%%%%%%%%%%%%%%%%%%%%%%%
%%%%%%%%%%%%%%%%%%%%%%%%%%%%%%%%%%
\section{Some variations of the method}
\label{sec:flexibility}
%%%%%%%%%%%%%%%%%%%%%%%%%%%%%%%%%%
%%%%%%%%%%%%%%%%%%%%%%%%%%%%%%%%%%

In the present section, we sketch some immediate adaptations of the
tableau method described above for testing other strains of
satisfiability, such as loose \ATL-satisfiability and
\ATL-satisfiability over some special classes of frames. Other, less
straightforward, adaptations will be developed in follow-up work.

%%%%%%%%%%%%%%%%%%%%%%%%%%%%%%%%%%
\subsection{Loose satisfiability for \ATL}
\label{sec:loose_sat_atl}
%%%%%%%%%%%%%%%%%%%%%%%%%%%%%%%%%%

The procedure described above is easily adaptable to testing
$\ATL$-formulae for loose satisfiability, which the reader will
recall, is satisfiability over frames with exactly one agent not
featuring in the formula.  All that is necessary to adapt the
above-described procedure to testing for this strain of satisfiability
is the modification of the \Rule{Next} rule in such a way that it
accommodates $\card{\agents_{\theta}} + 1$ agent rather than
$\card{\agents_{\theta}}$.  As such a modification is entirely
straightforward, we omit the details.  The complexity of the procedure
is not affected.

%%%%%%%%%%%%%%%%%%%%%%%%%%%%%%%%%%
\subsection{\ATL\ over special classes of frames}
\label{sec:special_classes_of_frames}
%%%%%%%%%%%%%%%%%%%%%%%%%%%%%%%%%%

Some classes of concurrent game frames are of particular interest (for
motivation and examples, see~\cite{AHK02}).

%%%%%%%%%%%%%%%%%%%%%%%%%%%%%%%%%%
\subsubsection{Turn-based synchronous frames}
\label{sec:tbs}
%%%%%%%%%%%%%%%%%%%%%%%%%%%%%%%%%%

In turn-based synchronous frames, at every state, exactly one agent
has ``real choices''.  Thus, agents take it in turns to act.

\begin{definition}
  \label{def:tbs}
  A concurrent game frame $\kframe{F} = (\agents, S, d, \delta)$ is
  \de{turn-based synchronous} if, for every $s \in S$, there exists
  agent $a_s \in \agents$, referred to as the \de{owner of $s$}, such
  that $\nmoves{a}{s} = 1$ for all $a \in \agents \setminus
  \set{a_s}$.
\end{definition}

To tests formulae for satisfiability over turn-based synchronous
frames, we need to make the following adjustments to the above tableau
procedure (we are assuming that we are testing for tight
satisfiability; loose satisfiability is then straightforward).  All
the states of the tableau are now going to be ``owned'' by individual
agents.  Intuitively, if $\D$ is ``owned'' by $a \in
\agents_{\theta}$, it is agent's $a$ turn to act at $\D$; we indicate
ownership by affixing the name of the owner as a subscript of the
state.  The rule \Rule{SR} now looks as follows:

\bigskip

\Rule{SR} Given a prestate $\G$, do the following:
\begin{enumerate}
\item for every $a \in \agents_{\theta}$, add to the pretableau all
  the minimal downward saturated extensions of $\G$, marked with $a$
  (all thus created sets $\D_a$ are \de{$a$-states});

\item for each of the so obtained states $\D_a$, if $\D_a$ does not
  contain any formulae of the form $\coalnext{A} \vp$ or $\neg
  \coalnext{A} \vp$, add the formula $\coalnext{\agents_{\theta}}
  \truth$ to $\D_a$;

\item for each state $\D_a$ obtained at steps 1 and 2, put $\G
  \brancharrow \D_a$;

\item if, however, the pretableau already contains a state $\D'_a$
  that coincides with $\D_a$, do not create another copy of $\D'_a$,
  but only put $\G \brancharrow \D'_a$.
\end{enumerate}

Moreover, when creating prestates from $a$-states, all agents except
$a$ get exactly one vote, while $a$ can still vote for any next-time
formula in the current state.  The rule \Rule{Next}, therefore, now
looks as follows:

\bigskip

\Rule{Next} Given a state $\D_a$ such that for no $\chi$ we have
$\chi, \neg \chi \in \D_a$, do the following:
\begin{enumerate}
\item order linearly all positive and proper negative next-time formulae of
  $\D_a$ in such a way that all the positive next-time formulae precede all
  the negative ones; suppose the result is the list
   \[\mathbb{L} = \coalnext{A_0} \vp_0, \ldots, \coalnext{A_{m-1}}
   \vp_{m-1},\lnot \coalnext{A'_0} \psi'_0, \ldots, \lnot
   \coalnext{A'_{l-1}} \psi_{l-1}.\] (Due to step 2 of \Rule{SR},
   $\mathbb{L}$ is non-empty.) Let $r_{\D} = m + l$; denote by $D
   (\D_a)$ the set $\crh{\move{} \in \nat^{\card{\agents_{\theta}}}}{0
     \leq \move{a} < r_{\D} \text{ and } \move{b} = 0, \text{ for all
     } b \ne a}$;

 \item consider the elements of $D(\D_a)$ in the lexicographic order
   and for each $\sigma \in D(\D_a)$ do the following:
   \begin{enumerate}
 \item create a prestate
   \begin{eqnarray*}
     \G_{\move{}} & = & \crh{\vp_p}{\coalnext{A_p} \vp_p \in \D_a
       \text{ and } a \in A_p \text{ and }  \move{a} = p }\\
     & \union & \crh{\vp_p}{\coalnext{A_p} \vp_p \in \D_a
       \text{ and } a \notin A_p}  \\
     & \union & \crh{\neg \psi_q}{\neg \coalnext{A'_q}
       \psi_q \in \D_a \text{ and } a \in A'_q}  \\
     & \union & \crh{\neg \psi_q}{\neg \coalnext{A'_q} \psi_q \in \D_a
       \text{ and } a \notin A'_q \text{ and } \move{a} = q}
     \end{eqnarray*}
     put $\G_{\move{}} = \{\top\}$ if all four sets above above are
     empty.
   \item connect $\D_a$ to $\G_{\move{}}$ with $\movearrow$;
   \end{enumerate}
   If, however, $\G_{\move{}} = \G$ for some prestate $\G$ that has
   already been added to the pretableau, only connect $\D$ to $\G$
   with $\movearrow$.
 \end{enumerate}

 Otherwise, tableaux testing for satisfiability over turn-based
 synchronous frames are no different from those for satisfiability
 over all frames.

%%%%%%%%%%%%%%%%%%%%%%%%%%%%%%%%%%
\subsubsection{Moore synchronous frames}
\label{sec:Moore}
%%%%%%%%%%%%%%%%%%%%%%%%%%%%%%%%%%

In Moore synchronous frames over the set of agents $\agents$, the set
of states $S$ can be represented as a Cartesian product of sets of
local states $S_{a \in \agents}$, one for each agent.  The actions of
agents are determined by the current ``global'' state $s \in S$; each
action $\move{a}$ of agent $a$ at state $s \in S$, however, results in
a local state determined by a function $\delta_a$ mapping pairs
$\langle$ global state, $a$-move$\rangle$ into $S_a$.  Then, given a
move vector $\move{} \in \moves{}{s}$, representing simultaneous
actions of all agents at $s$, the $\move{}$-successor of $s$ is
determined by the local states of agents produced by their
actions---namely, it is a $k$-tuple (where $k = {\card{\agents}}$) of
respective local states $(\delta_1 (s, \move{1}), \ldots, \delta_{k}
(s, \move{k}))$, one for each agent. This intuition can be formalized
as follows (see~\cite{AHK02}):

\begin{definition}
  A CGF $\kframe{F} = (\agents, S, d, \delta)$ is \de{Moore
    synchronous} if the following two conditions are satisfied, where $k =
  \card{\agents}$:
  \begin{itemize}
  \item $S = S_1 \times \dots \times S_{k}$;
  \item for each state $s \in S$, move vector $\move{}$, and agent $a
    \in \agents$, there exists a local state $\delta_a(s, \move{a})$
    such that $\delta(s, \sigma) = (\delta_1 (s, \move{1}), \ldots,
    \delta_{k} (s, \move{k}))$.
  \end{itemize}
\end{definition}

\begin{definition}
  A CGF $\kframe{F} = (\agents, S, d, \delta)$ is \de{bijective}, if
  $\delta(s, \move{}) \ne \delta(s, \move{}')$ for every $s \in S$ and
  every $\move{}$ and $\move{}'$ such that $\move{} \ne \move{}'$.
\end{definition}

It is easy to see that every bijective frame is isomorphic to a Moore
synchronous one.  Therefore, if---for whatever reason---using our
tableau procedure, we want to produce a Moore synchronous model for
the input formula, we simply never identify the states created in the
course of applying the \Rule{Next} rule.  This clearly produces a
bijective, and hence Moore synchronous, model.  By inspecting the
tableau procedure, it can be noted that identification or otherwise of
the states never affects the output of the procedure.  Therefore, an
analysis of our tableau procedure leads to the following claim:

\begin{theorem}[\cite{Goranko01}]
  Let $\theta$ be an $\ATL$-formula.  Then, $\theta$ is satisfiable in
  the class of all CGFs iff it is satisfiable in the class of Moore
  synchronous CGFs.
\end{theorem}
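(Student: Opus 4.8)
The plan is to derive the statement from the soundness and completeness of the tableau procedure together with the observation, recorded just above, that every bijective CGF is isomorphic to a Moore synchronous one. The direction from right to left is trivial: Moore synchronous CGFs are CGFs, so satisfiability over the Moore synchronous class is a special case of satisfiability over the class of all CGFs. For the converse, suppose $\theta$ is satisfiable over some CGF; then by Theorem~\ref{thr:soundness} the final tableau $\tableau{T}^{\theta}$ is open, and the task reduces to extracting from $\tableau{T}^{\theta}$ a \emph{bijective} CGM satisfying $\theta$.

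To do this I would rerun the model construction of Section~\ref{sec:completeness} with one change: the locally consistent $\tableau{T}^{\theta}$-trees used as building blocks (Lemmas~\ref{lem:existence_of_lcst} and \ref{lem:final_components_for_eventualities}) are formed so that \emph{every} move vector $\move{} \in D(\D)$ at a node coloured with $\D$ gets its own child, coloured by some state $\D'$ with $\D \movearrow \D'$ (and, where the tree must realize an eventuality, by the child prescribed by the eventuality-advancing move, as before), rather than having all move vectors in $\{\move{} \mid \D \movearrow \D'\}$ collapse onto a single child. One checks that this refinement costs nothing for the completeness argument: an outcome set of a tableau $A$-move or co-$A$-move only becomes larger, but each of its members is still coloured by a tableau state containing the relevant formula, so the verifications of conditions (H1)--(H7) of Definition~\ref{def:cghs} for the resulting $\hintikka{H}_{\theta}$ go through exactly as in Lemmas~\ref{lem:local_consistency}, \ref{lem:realisation_within_final_tree_components_until}, \ref{lem:realisation_within_final_tree_components_box}, \ref{lem:duration} and \ref{lem:postponement}; and termination is untouched, since there are still only finitely many move vectors at each state and finitely many distinct final tree components, so the ``loop back to an already-present component'' step still halts the construction.

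With this refinement the transition function of the CGF $\kframe{F}$ underlying $\hintikka{H}_{\theta}$ is injective in its move argument at every node --- $\delta(w, \move{}) \neq \delta(w, \move{}')$ whenever $\move{} \neq \move{}'$ --- \emph{within} each final tree component, because distinct move vectors now sit on distinct children (whose colours may coincide even though the nodes do not). The one place that needs care is the finiteness-ensuring step that, instead of attaching a fresh copy of a component $\tree{F}_{(i,j)}$, reconnects the predecessors of a deadlocked leaf to the root of an existing copy: two slots $\move{}, \move{}'$ of the same parent could in principle be reconnected to the same root, and this happens precisely when the two corresponding children carry the same tableau-state colour. It is avoided by keeping, for such an offending slot, a fresh copy of the component instead of looping back --- which can be forced only finitely often and so does not endanger termination. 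Once $\kframe{F}$ is bijective, expanding $\hintikka{H}_{\theta}$ to a CGM via Theorem~\ref{thr:from_hintikka_to_models} leaves the frame (hence bijectivity) intact, and by the observation preceding the theorem the resulting model is isomorphic to a Moore synchronous CGM satisfying $\theta$.

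The main obstacle is exactly this interplay between refraining from state identification and the loop-back step that guarantees a finite Hintikka structure: one must organise the construction so that $\delta$ stays injective in its move argument at every node while still terminating. Everything else is bookkeeping on top of the results already proved. A purely semantic alternative would be to replace a given model of $\theta$ by its alternating-bisimilar ``action unfolding'', which is automatically bijective and hence (isomorphic to) Moore synchronous; but that route requires importing alternating-bisimulation invariance of \ATL, whereas the argument above stays within the machinery of this paper.
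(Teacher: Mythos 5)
Your proposal is correct and follows essentially the same route as the paper: the paper's own argument is exactly the observation that every bijective CGF is isomorphic to a Moore synchronous one, so it suffices to run the tableau-based model construction without identifying the successor states produced by distinct move vectors, which yields a bijective (hence Moore synchronous) model of $\theta$. You spell out in more detail than the paper the one delicate point --- keeping the transition function injective through the loop-back step of the Hintikka-structure construction while preserving termination --- but the underlying idea is the same.
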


%%%%%%%%%%%%%%%%%%%%%%%%%%%%%%%%%%
%%%%%%%%%%%%%%%%%%%%%%%%%%%%%%%%%%
\section{Concluding remarks}
\label{sec:concluding}
%%%%%%%%%%%%%%%%%%%%%%%%%%%%%%%%%%
%%%%%%%%%%%%%%%%%%%%%%%%%%%%%%%%%%

We have developed a complexity-efficient terminating
incremental-tableau-based decision procedure for $\ATL$ and some of
its variations. This style of tableaux for $\ATL$, while having the
same worst-case upper bound as the other known decision procedures,
including the top-down tableaux-like procedure presented in \cite{WLWW06}, is
expected to perform better in practice because, as it has been shown in
the examples, it creates much fewer tableau states.

We believe that the tableau method developed herein is not only of
more immediate practical use, but also is more flexible and adaptable
than any of the decision procedures developed earlier in
\cite{vanDrimmelen03}, \cite{GorDrim06}, and \cite{WLWW06}. In
particular, this method can be suitably adapted to variations of
$\ATL$ with committed strategies \cite{AgotnesTARK2007} and with
incomplete information, which is the subject of a follow-up work.

%%%%%%%%%%%%%%%%%%%%%%%%%%%%%%%%%%
%%%%%%%%%%%%%%%%%%%%%%%%%%%%%%%%%%
\section{Acknowledgments}
%%%%%%%%%%%%%%%%%%%%%%%%%%%%%%%%%%
%%%%%%%%%%%%%%%%%%%%%%%%%%%%%%%%%%

This research was supported by a research grant of the National
Research Foundation of South Africa and was done during the second
author's post-doctoral fellowship at the University of the
Witwatersrand, funded by the Claude Harris Leon Foundation. We
gratefully acknowledge the financial support from these institutions.
We also gratefully acknowledge the detailed and useful referees' comments, which have helped us improve significantly the presentation of the paper.

\bibliographystyle{plain}

\end{document}